\definecolor{myblue}{rgb}{0,0,0.5}
\newenvironment{myitemize}
{\vspace{-1mm}\begin{list}{$\bullet$}%
    {\setlength{\itemsep}{-3pt}}%
    {\setlength{\topsep}{0pt}}%
    {\setlength{\partopsep}{0pt}}%
    {\setlength{\parsep}{0pt}}%
}
{\end{list}%
\vspace{-3mm}
}
\newenvironment{myenumerate}
{ \vspace{-1mm}\begin{list}{\arabic{enumi}.}{\usecounter{enumi}}
    \setlength{\itemsep}{-3pt}%
    \setlength{\topsep}{0pt}%
    \setlength{\partopsep}{0pt}%
    \setlength{\parsep}{0pt}%
}
{\end{list}%
\vspace{-3mm}}
\newcommand{\ket} [1] {\vert #1 \rangle}
\newtheorem{thm}{Theorem}
\newtheorem{lemma}[thm]{Lemma}
\newtheorem{defn}{Definition}
\newtheorem{cor}[thm]{Corollary}
\newtheorem{claim}{Claim}
\newtheorem{proposition}[thm]{Proposition}
\newcommand{\abs}[1]{\mid\! #1\! \mid}
\renewcommand{\L}{{\cal L}}
\newcommand{\Q}{{\cal Q}}
\newcommand{\Real}{\mathbb R}
\newcommand{\A}{\mathcal{A}}
\newcommand{\B}{\mathcal{B}}
\newcommand{\X}{\mathcal{X}}
\newcommand{\Y}{\mathcal{Y}}
\newcommand{\Z}{\mathcal{Z}}
\newcommand{\Prob}{{\rm Prob}}
\newcommand{\opt}{\mathrm{opt}}
\newcommand{\botA}{{\bot A}}
\newcommand{\ind}{\mathrm{ind}}
\newcommand{\eff}{\mathrm{\bf eff}}
\newcommand{\nc}{\mathrm{nc}}
\newcommand{\effl}{\mathrm{eff}^\nc}
\newcommand{\prt}{\mathrm{prt}}
\newcommand{\HM}{\mathrm{HM}}
\newcommand{\KV}{\mathrm{KV}}
\newcommand{\bp}{\mathbf{p}}
\newcommand{\bl}{\mathbf{\ell}}
\newcommand{\bq}{\mathbf{q}}
\newcommand{\enlever}[1]{}
\title{Classical and quantum partition bound and detector inefficiency \footnote{Extended version of the previously published article: Sophie Laplante, Virginie Lerays and J\'er\'emie Roland, ``Classical and quantum partition bound and detector efficiency", Proceedings of the 39th International Colloquium on Automata, Languages and Programming, 2012}%
}
\author{Sophie Laplante\\ LIAFA, Universit\'e Paris Diderot Paris 7 \and Virginie Lerays\\ LIAFA, Universit\'e Paris Diderot Paris 7   \and J{\'e}r{\'e}mie Roland\\ ULB, QuIC, Ecole Polytechnique de Bruxelles }
\date{}
\begin{document}

\maketitle

\thispagestyle{empty}

\begin{abstract}
We study randomized and quantum efficiency lower bounds in communication complexity.
These arise from the study of zero-communication protocols in which players
are allowed to abort. 
Our scenario is inspired by the physics 
setup of Bell experiments, where two players share a predefined 
entangled state but are not allowed to communicate.  
Each is given a measurement as input, which they
perform on their share of the system. The outcomes of the measurements
should follow a distribution predicted by quantum mechanics;
however, in practice, the detectors may fail to produce an output
in some of the runs.
The efficiency of the experiment is the probability
that the experiment succeeds (neither of the detectors fails).  

When the players share a quantum state, this gives rise to
a new bound on quantum communication complexity ($\eff^*$)
that subsumes the factorization norm.
When players share randomness instead of a quantum state,
the efficiency bound ($\eff$), 
coincides with the partition bound of Jain and Klauck.
This is one of the strongest lower bounds known for randomized
communication complexity, which subsumes all the known combinatorial and 
algebraic methods including the rectangle (corruption) bound,
the factorization norm, and discrepancy.  

The lower bound is formulated as a convex optimization
problem.  In practice, the dual form is more feasible 
to use,  and we show that 
it amounts to constructing an explicit Bell inequality (for $\eff$) or Tsirelson
inequality
(for $\eff^*$).  We give an example of a quantum distribution where
the violation can be exponentially bigger than the previously
studied class of normalized Bell inequalities.

For one-way communication, we show that the quantum one-way partition
bound is tight for classical communication with shared entanglement
up to arbitrarily small error.

\enlever{
Finally, an important goal in physics is to devise robust Bell 
experiments that are impervious to noise and detector 
inefficiency.  We give a general tradeoff between
communication, Bell inequality violation, and detector efficiency.}
\end{abstract}

\enlever{
\begin{abstract}
In the standard setting of communication complexity, two players
each have an input and they wish to compute some function of the joint inputs.
This has been the object of much study and a wide variety of lower
bound methods have been introduced to address the problem of
showing lower bounds on communication.  Recently, Jain and Klauck
introduced the partition bound, which subsumes many of the known methods,
in particular factorization norm ($\gamma_2$), discrepancy, 
and the rectangle (corruption) bound.

Physicists have considered a related scenario where
two players share a predefined entangled state but are not allowed to communicate.  
Each is given a measurement as input, which they
perform on their share of the system. The outcomes of the measurements
follow a distribution which is predicted by quantum mechanics.
Physicists want to rule out the possibility that there is a classical explanation
for the distribution, through loopholes such as hidden
communication or detector inefficiency (where some runs are discarded by either of the players because her measurement failed).
In an experimental setting, Bell inequalities~\cite{Bell64} are used to distinguish
truly quantum from classical behavior.

Taking inspiration in this scenario, we present a strong new lower bound technique based on the notion
of detector inefficiency for the communication complexity of simulating distributions, 
and show that it coincides with the 
partition bound in the special case of computing functions. 
As usual, the dual form is more feasible 
to use,  and we show that 
it amounts to constructing an explicit Bell inequality.
We also give a lower bound on quantum communication complexity
which can be viewed as a quantum extension of the rectangle bound,
effectively overcoming the necessity of a quantum minmax theorem.

For one-way communication, we show that the quantum one-way partition
bound is tight for classical communication with shared entanglement
up to arbitrarily small error.

Finally, an important goal in physics is to devise robust Bell 
experiments that are impervious to noise and detector 
inefficiency. 
We make further progress towards this by giving a general tradeoff between
communication, Bell inequality violation, and detector efficiency.
\end{abstract}

\newpage
}
\setcounter{page}{1}

\section{Introduction}

How are Bell tests related to communication complexity? At a high level, both involve two distant players, Alice and Bob, who receive inputs $x,y$, respectively, and produce outputs $a,b$ according to some distribution $p(a,b|x,y)$. The goal of a Bell test is to show that a given distribution $p(a,b|x,y)$ (typically arising from performing measurements on a shared entangled state) cannot result from a so-called local hidden variable model, which we will call here \emph{local protocol} (or zero-communication protocol) for simplicity. A local protocol is a protocol where Alice and Bob may use shared randomness only but no communication (nor shared entanglement). In practice they may also abort. The interesting quantity for us is the {\em efficiency}, that is, the probability that the players do not abort. A lower efficiency makes it easier to reproduce the distribution using a local protocol, and a meaningful measure is therefore the maximum efficiency such that a local protocol for $p(a,b|x,y)$ exists. In the communication complexity model, Alice and Bob have inputs $x$ and~$y$ respectively and must minimize the communication between them in order to solve a distributed task (or equivalently output $a,b$ according to distribution $p(a,b|x,y)$). Both are measures of how far a given distribution is from the set of
local (zero-communication) distributions.

Massar and Buhrman \textit{et al.}~\cite{M01, BHMR03} 
described how a communication protocol gives rise to a local protocol
where the players can abort: 
if there is a $c$-bit communication protocol
where Alice and Bob output $a,b$ with distribution $p(a,b|x,y)$ when Alice's
input is $x$ and Bob's input is $y$,  then there is a local protocol
that outputs according to $\bp$ (conditioned
on the run not being aborted) whose probability of not aborting is $2^{-c}$. Both players use the shared randomness to guess a transcript,
and if they disagree with the transcript, they abort.  Otherwise
they output according to the transcript.  

In this paper we prove a much stronger relation between communication protocols and the notion of efficiency in Bell tests, and provide applications both in communication complexity and in Bell inequality violations for quantum distributions. 
More precisely, we show that {\bf the efficiency bound} (i.e., the maximum efficiency of any local protocol that simulates $\bp$)  is in fact {\bf equal to the partition bound} in classical communication complexity~\cite{JK10}. The partition bound is important because it is one of the `strongest' known classical communication lower bound techniques. Moreover
we obtain a {\bf strong new bound for quantum communication complexity}, which is at least as strong as all previously known 
lower bound techniques for quantum communication complexity.  
We show that the one-way version of the quantum efficiency bound is tight.

We show that the efficiency bound is equivalent to finding 
{\bf Bell inequalities that are resistant to the detection loophole}, exhibiting an unexpected connection between these notions. This enables us to exhibit  
a quantum distribution arising from measurements on an $n$-dimensional shared 
quantum state, but which provides exponential Bell violations.

\subsection{Communication complexity and the partition bound}
Recently, Jain and Klauck~\cite{JK10} proposed a new lower bound on 
randomized communication complexity which subsumes two families
of methods: the algebraic methods, including the nuclear norm and 
factorization norm, and combinatorial methods, including discrepancy 
and the rectangle or corruption bound. The algebraic methods and discrepancy
give lower bounds on quantum communication
complexity, whereas the rectangle bound can show polynomial lower bounds 
on randomized communication complexity for problems known to have 
logarithmic quantum protocols.

A longstanding open problem is whether there are total functions
for which there is an exponential gap between classical and quantum
communication complexities.  
Many partial results have been given~\cite{NS96,BCWW01,BJK,KKRW}, 
most recently~\cite{KR11}.
\enlever{
In the simultaneous messages model without shared entanglement, {\sc Equality} 
requires
$\Omega(\sqrt{n})$ bits  in the randomized model~\cite{NS96}, whereas there is a 
$O(\log(n))$ quantum protocol using fingerprinting~\cite{BCWW01}.
Bar-Yossef \textit{et al.} showed that for the Hidden Matching problem, a relational
problem, there is an $O\log(n))$ quantum protocol and an $\Omega(\sqrt{n})$ randomized 
lower bound~\cite{BJK}. Gavinsky \textit{et al.} gave a similar result for a
partial function based on the Hidden Matching problem~\cite{KKRW}.  
Very recently, Klartag and Regev gave a separation for a stronger
model, but still for a promise problem.
They showed that for the Vector Subspace problem, there is a one-way
logarithmic protocol but the randomized two-way lower bound is 
$\Omega(n^{1/3})$~\cite{KR11}.  }
These strong randomized lower bounds all use the distributional
model, in which the randomness of the protocol is replaced
by randomness in the choice of inputs, which are sampled according to
some hard distribution.  
The equivalence of the randomized and distributional models, 
due to Yao's minmax theorem~\cite{yao83}, comes from strong duality
of linear programming.  This technique appears
to be inherently non-applicable to quantum communication
complexity (see for instance \cite{GW} which considers a similar
question in the setting of query complexity),
and the rectangle bound, as a result, was understood to be a inapplicable
to quantum lower bounds.

Contrary to previous combinatorial type lower bounds, the partition
bound is proven directly for randomized protocols, without first
going to the distributional model.
Although the partition bound re-introduces linear programming duality,
the dual variables 
can no longer be interpreted as a (hard) distribution on the inputs.
By the same token, it is 
harder to get intuition on how to obtain concrete lower bounds
for explicit functions.

\subsection{Bell experiments} 

Quantum nonlocality gives us a different viewpoint  from
which to consider lower bounds for communication
complexity.  A fundamental question of quantum mechanics
is to establish experimentally whether nature is truly \emph{nonlocal}, as predicted by quantum mechanics, or whether
there is a purely classical (i.e., \emph{local}) explanation to the phenomena that
have been predicted by quantum theory and observed in the lab.  
In an experimental setting, two players share an entangled state
and each player is given a measurement to perform.  The outcomes
of the measurements are predicted by quantum mechanics and
follow some probability distribution $p(a,b|x,y)$, where $a$ is the
outcome of Alice's measurement $x$, and $b$ is the outcome of Bob's
measurement $y$. (We write $\bp$ for the distribution,
and $p(a,b|x,y)$ for the individual probabilities.) A Bell test~\cite{Bell64} consists of estimating all the
probabilities $p(a,b|x,y)$ and computing a Bell functional, 
or linear function, on these values.  The Bell functional $B(\bp)$
is chosen together with a threshold $\tau$  so that any local classical
distribution $\bp'$ verifies $B(\bp')\leq \tau$, but the chosen
distribution $\bp$ violates this inequality: $B(\bp) > \tau$.

Although there have been numerous
experiments that have validated the predictions of quantum mechanics, 
none so far has been totally ``loophole-free".  A loophole can be introduced, for
instance, when the state preparation and the measurements are
imperfect, or when the detectors are partially inefficient so that 
no measurement is registered
in some runs of the experiment, or if the entangled particles are
so close that communication may have taken place in the course
of a run of the experiment.  In such cases, there are classical explanations for 
the results of the experiment.  For instance, if the detectors
were somehow coordinating their behavior, they may choose to
discard a run, and though the conditional probability
(conditioned on the run not having been discarded) may look quantum,
the unconditional probability may very well be classical.
This is called the detection loophole.  When an experiment aborts with probability at most
$1-\eta$, we say that the efficiency is $\eta$.  (Here we assume
that individual runs are independent of one another.)
To close the detection loophole, the efficiency has to be high enough
so that the classical explanations are ruled out.
Gisin and Gisin show for example  that the EPR correlations can be reproduced 
classically with 75\% detector efficiency~\cite{GG99}.
However, in practice, whenever the
detectors can be placed far apart enough to prevent 
communication from taking place (typically in optics setups), 
the efficiency is extremely small (on the order of 10\%), which
is far too small to close the detection loophole.

What can Bell tests tell us about communication complexity?
Both are measures of how far a distribution is from the set of
local distributions (those requiring no communication), and one would
expect that if a Bell test shows a large violation for a
distribution, simulating this distribution should require a lot 
of communication,
and vice versa. Degorre \textit{et al.} showed that the factorization norm
amounted to finding large Bell inequality violations for
a particular class of Bell inequalities~\cite{DKLR11}.  Here, we introduce 
a new class
of Bell inequalities whose violation
corresponds to the partition bound.

\subsection{Summary of results} 

If we assume there is a $c$-bit classical communication protocol
where Alice and Bob output $a,b$ with distribution $p(a,b|x,y)$ when Alice's
input is $x$ and Bob's input is $y$,  then there is a protocol
without communication that outputs according to $\bp$ (conditioned
on the run not being discarded) that uses shared randomness and
whose efficiency is $2^{-c}$: both players guess a transcript,
and if they disagree with the transcript, they abort.  Otherwise
they follow the protocol using the transcript.  
As others have observed~\cite{M01, BHMR03},
one can immediately derive a lower bound: let $\eta$ be the
maximum efficiency of a protocol without communication 
that successfully simulates $\bp$ with shared randomness.  
We define  $\eff(\bp) = 1/\eta$,
and $\log(\eff(\bp))$ is a lower bound on the communication complexity of simulating $\bp$.  
This gives a surprisingly strong bound.  We show that 
it coincides with the partition bound (in the special case of computing functions).

When we turn to the dual formulation, we get 
a natural physical interpretation, that of Bell inequalities.
To prove a lower bound amounts to finding a good Bell inequality
and proving a large violation.
This is similar to finding a hard distribution and proving a lower
bound in the distributional model of communication; but it is
much stronger since the Bell functional is not required to 
have positive coefficients that sum to one.

Our approach leads naturally to a new ``quantum partition
bound" which is a strong lower bound on quantum communication complexity.
Let $\eff^*(\bp)= 1/\eta^*$, where  $\eta^*$ is the
maximum efficiency of a protocol without communication 
that successfully simulates~$\bp$ with shared entanglement. 
In the one-way setting, we show that the quantum partition bound is tight.

Allowing for 
runs to be discarded with some probability 
has been studied in different models of computation 
such as post-selection, and zero-error (Las Vegas) randomized computation.
(Jain and Klauck~\cite{JK10} in fact introduce a Las Vegas partition bound for 
zero-error protocols.)
This is a stronger requirement than allowing a probability of error since
the errors must be flagged. 
Lee and Shraibman give a proof of the factorization norm ($\gamma_2$) lower bound
on (quantum) communication complexity based on the best bias one can
achieve with no communication~\cite[Theorem~60]{LS09b} 
(attributed to Buhrman; see also Degorre \textit{et al.}~\cite{DKLR11}).
In light of our formulation of the (quantum) partition bound,
it is an easy consequence that the (quantum) partition bound is an upper bound on
$\gamma_2$ (see e.g.~\cite{LS09} for definitions of the factorization norm $\gamma_2$ and the related nuclear norm $\nu$, as well as~\cite{DKLR11} for their extensions to the communication complexity of distributions),
making it the strongest known bound on quantum communication complexity to date.

The following gives a summary of our results. 
Full definitions and statements are given in the main text.
Let $\prt(\bp)$ be the partition bound for a distribution~$\bp$ 
(defined in Section~\ref{sec:prt}).
$R_0(\bp)$ denotes
the communication complexity of simulating~$\bp$ exactly using
shared randomness and classical communication,
and $Q_0^*(\bp)$ denotes
the communication complexity of simulating $\bp$ exactly using
shared entanglement and quantum communication. 
One-way communication, where only Alice sends a message to Bob, is denoted by the superscript~$\rightarrow$. In the simultaneous messages model, each player sends a message to the referee, who does not know the inputs of either player, and has to produce the output.  This is denoted by the superscript~$\parallel$.
Shared entanglement is indicated by the superscript $*$.
For any distribution $\bp$, 
\begin{myitemize}
\item Theorem~\ref{thm:equiv-eff-prt}: $\prt(\bp) = \eff(\bp)$,
\item Theorem~\ref{thm:lower-eff-quantum}: $Q_0^*(\bp) \geq \frac{1}{2} \log(\eff^*(\bp))$,
\item Theorem~\ref{thm:gamma-upper}: $\gamma_2(\bp) \leq 2\ \eff^*(\bp)$ and $\nu(\bp) \leq 2\  \eff(\bp)$ (for nonsignaling $\bp$),
\item Theorem~\ref{thm:upper-two-way}: $R^{*,\parallel}_\epsilon(\bp) \leq O(\eff^*(\bp))$ and $R^{*}_\epsilon(\bp) \leq O(\sqrt{\eff^*(\bp)})$.
\end{myitemize}

In the case of one-way communication, the upper bounds are much tighter.
The one-sided efficiency measure, which we denote
$\eff^\rightarrow$ is given in Definition~\ref{defn:one-way-eff}.   
\begin{myitemize}
\item Theorem~\ref{thm:lower-one-way} 
: $Q_0^{*,\rightarrow} (\bp)\geq\frac{1}{2}\log(\eff^{*,\rightarrow}(\bp))$ and
\item Theorem~\ref{thm:upper-one-way}:  $Q_\epsilon^{*,\rightarrow}(\bp) \leq \frac{1}{2}\log(\eff^{*,\rightarrow}(\bp))+O(1)$.

\end{myitemize}

\enlever{
We can also study the amount of communication that is required
to simulate a distribution $\bp$, if some fixed inefficiency
$1-\eta$ is also allowed.  Our lower bound in this model
is denoted $R_0^\eta(\bp)$.

In the case where some inefficiency is allowed, we introduce a quantity $\eff^\eta(\bp)$ in order to obtain a lower bound on $R_0^\eta(\bp)$. Another interesting quantity is $\effl(\bp)=\eta^{-1}$, where $\eta$ is the maximal efficiency such that $R_0^\eta(\bp)=0$.
Here we can show:
\begin{myitemize}
\item  Lemmas~\ref{lem:upper-eta} and~\ref{lem:tradeoff}: $\log(\eff^\eta(\bp))\leq R^\eta_0(\bp)$ and $\eta\cdot\effl(\bp)\leq\eff^\eta(\bp)\leq\eta\cdot\eff(\bp)$.
\end{myitemize}
}

We can use smoothing to handle $\epsilon$ error, and we can formulate the bounds
to allow both $\epsilon$ error and $\eta$ efficiency.
(In the latter case, for boolean functions, this is equivalent to relaxing the exactness constraints in the linear programs.)
For simplicity we have omitted these details in this summary.

We prove strong Bell violations using these new techniques for two problems studied in~\cite{BRSW11}.
The first is based on the Hidden Matching problem, from which we derive a distribution that can
be simulated with zero communication and an  $n$-dimensional shared quantum state.  For 
this distribution, we prove an exponential Bell inequality violation for one-way efficiency resistant Bell inequalities, where one player is allowed to abort. 
In contrast, Junge et al~\cite{JPP+10,JP11} have shown
a linear upper bound on normalized Bell inequalities, as a function of the dimension of the shared state. 
Therefore, our lower bound exhibits an exponential gap between 
the usual normalized Bell inequalities and the 
new (one-way) efficiency resistant Bell inequalities.
\enlever{We also study a distribution related to the Khot-Vishnoi game and prove a  logarithmic communication lower bound for this distribution which can be simulated exactly using zero communication and shared entanglement.}

\subsection{Related work}
Massar exhibits a Bell inequality that is more robust against detector
inefficiency based on the
distributed Deutsch Josza game~\cite{M01}.
The Bell inequality is derived from the lower bound on communication 
complexity for this promise problem \cite{BCW98,BCT99}.
Massar shows an upper bound of $\eff(\bp)$ on expected
communication complexity of simulating $\bp$.
He also states, but does not claim to prove, that a
lower bound can be obtained as the logarithm of the efficiency.
Buhrman \textit{et al.}~\cite{BHMR03, BHMR06} show how to get
Bell inequalities with better resistance to detector inefficiency
by considering multipartite scenarios where players share
GHZ type entangled states.  Their technique is based on
the rectangle bound and they derive a general tradeoff
between monochromatic rectangle size, efficiency, and communication.
They show a general lower bound on multiparty communication complexity
which is exactly as we describe above.

Buhrman \textit{et al.}~\cite{BRSW11} show gaps between quantum
and classical winning probability for games where the players are 
each given inputs and attempt, without communication, to produce 
outputs that satisfy some predicate.  In the classical case they use 
shared randomness and in the quantum case, they use shared entanglement.
Winning probabilities are linear  so these translate to large 
Bell inequality violations.

V\'ertesi~\textit{et al.} show that there is a distribution with boolean outputs 
$\bp\in\Q$, based on partially entangled states, such that (in our language) 
$\eff^\rightarrow (\bp)=\Omega(  2^n)$~\cite{VPB10}.  Therefore, our results imply that $R_0^\rightarrow(\bp)=n$.  Since the
states are nearly separable, however $R_\epsilon^\rightarrow(\bp)=0$ for large enough $\epsilon$.

Lower bounds for communication complexity of simulating distributions
were first studied in a systematic way by Degorre \textit{et al.}~\cite{DKLR11}.  These bounds are
shown to be closely related to the nuclear norm and factorization norm~\cite{LS09},
and the dual expressions are interpreted as Bell 
inequality violations.

Following up on  the results in this paper, Kerenidis et al~\cite{KLL+12} used the notion of efficiency of 
zero-communication protocols to show that the information cost is bounded below
by a relaxation of the partition bound which is larger than the smooth rectangle
and $\gamma_2$ bounds.  Jain and Yao \cite{JY12} followed up with a strong direct product theorem for the communication complexity of all functions for which an optimal lower bound can be shown using the smooth rectangle bound.  Using a similar notion of zero-communication protocols,
Gavinski and Lovett \cite{GL13} showed that the log rank conjecture is equivalent to an upper bound which is polylogarithmic in the rank on the zero-communication cost. The notion of zero-communication cost that they use is the non-constant efficiency (Definition~\ref{def:nc-eff} and Lemma~\ref{lem:nc-eff}.)
\section{Preliminaries}

\subsection{Classical partition bound}
The partition bound of Jain and Klauck~\cite{JK10} is given as a linear program,
following the approach introduced by Lov\'asz~\cite{L90} and
studied in more depth by Karchmer \textit{et al.}~\cite{KKN95}.
It differs from the rectangle and other combinatorial bounds
in that it is formulated directly on the randomized protocol,
as opposed to first applying Yao's minmax theorem to reduce to
a deterministic protocol with distributional inputs.
From a $c$-bit, $\epsilon$-correct randomized protocol one can infer a 
distribution over rectangle partitions of size at most $2^c$, 
where each rectangle is assigned 
an output value $z$.  Set weights $w_{R,z}$
to be the probability that  rectangle $R$
occurs with label $z$ (the same rectangle may occur in 
different partitions, with
different labels and different probabilities).  This is
a feasible solution to the following linear program.

\begin{defn}[Partition bound~\cite{JK10}]
Let $f:\X\times \Y \rightarrow \Z$ be any partial function
whose domain we write $f^{-1}$.  Then $prt_\epsilon(f)$
is defined to be the optimal value of the linear program,
where $R$ ranges over the rectangles from $\X\times \Y$ and
$z$ ranges over the set $\Z$:
\begin{align*}
\prt_\epsilon(f) = &\min_{w_{R,z}\geq 0}  && \sum_{R,z} w_{R,z}\\
	 &\text{subject to} 
	&&  \sum_{R: (x,y)\in R} w_{R, f(x,y)}\geq 1-\epsilon  &\quad \forall x,y \in f^{-1}\\
	& && \sum_z \sum_{R: (x,y)\in R} w_{R,z} =1  &\quad \forall x,y \in \X\times \Y\\ 
\end{align*}
\end{defn}
The feasible solution sketched above verifies all the
constraints and the objective value is  at most $ 2^c$, 
so $R_\epsilon(f)\geq \log(\prt_\epsilon(f))$.  
The partition bound subsumes
almost all previously known techniques~\cite{JK10}, in particular
the factorization norm~\cite{LS09}, smooth rectangle~\cite{JK10} and rectangle  or corruption bound~\cite{yao83}, and discrepancy~\cite{CG85,BNS89}.  
Bounds not known to be subsumed by the partition bound include 
the approximate rank~\cite{K96, BW01} and the information complexity~\cite{CSWY01}.


\subsection{Local and quantum distributions}
\label{sec:distributions}
Given a distribution
$\bp$, how much communication is required if Alice is given
$x\in \X$, Bob is given $y\in \Y$, and their goal is to output $a,b\in \A\times\B$
with probability $p(a,b|x,y)$?

Some classes of distributions are of interest and have been
widely studied in quantum information theory since the seminal paper of Bell~\cite{Bell64}.  
The local
deterministic distributions, denoted $\bl\in \L_{\det}$, are the ones where
Alice outputs according to a deterministic strategy, i.e., a (deterministic)
function of $x$, and Bob independently outputs as a function of $y$,
without exchanging any communication.
The local distributions $\L$ are any distribution over the local
deterministic strategies. Mathematically this corresponds to taking convex
combinations of the local deterministic distributions, and
operationally to zero-communication protocols with shared randomness.

We  focus our attention in this paper on local strategies that are allowed
to abort the protocol with some probability.  When they abort, they output the
symbol $\bot$.
We will use the notation $\L_{\det}^\bot$  and $\L^\bot$ to
denote these strategies, where $\bot$ is added to the possible outputs 
for both players, and $\bot\not\in \A\cup \B$.
Therefore, when~$\ell \in \L_{\det}^\bot$ or $\L^\bot$, $l(a,b|x,y)$ is {\em not} conditioned on 
$a,b\neq \bot$ since $\bot$ is a legal output for such distributions. 

The quantum distributions, denoted $\bq\in \Q$, are the ones that result from
applying measurements to each part of a shared entangled
bipartite state.  Each player outputs the  measurement
outcome.  In communication complexity terms, these are
zero-communication protocols with shared entanglement.
If the players are allowed to abort, then the corresponding
set of distributions is denoted $\Q^\bot$.

Boolean (and other) functions can be cast as a sampling problem as follows.
Consider a boolean function~$f:\X\times \Y \rightarrow \{0,1\}$ 
whose communication complexity we wish to study (non-boolean functions and relations can
be handled similarly).
First, we split the output so that if $f(x,y)=0$, Alice
and Bob are required to output the same bit, and
if $f(x,y)=1$, they output different bits.
Let us further require Alice's marginal distribution to be
uniform, likewise for Bob, so that the distribution is well defined.  
Call the resulting distribution
$\bp_f$.  If $\bp_f$ were local, $f$
could be computed with one bit of communication 
using shared randomness: Alice sends her output to Bob, 
and Bob XORs it with his output.
If $\bp_f$ were quantum, there would be a 1-bit
protocol with shared entanglement for $f$.  We
are usually interested in distributions 
requiring nontrivial communication complexity, and
lie well beyond these sets.

\enlever{Finally, a distribution is nonsignaling if for each player, its marginal output 
distributions given by $p(a|x,y)=\sum_b p(a,b|x,y)$, for Alice, an
$p(b|x,y)=\sum_a p(a,b|x,y)$  do not depend on the other player's input.
When this is the case we write the marginals as $p(a|x)$ and $p(b|y)$.
For any Boolean function $f$, the distribution $\bp_f$ described is 
nonsignaling since the marginals are uniform.
}

\subsection{Communication complexity measures}
We use the following notation for communication complexity of distributions.
$R_\epsilon(\bp)$ is the
minimum amount of communication necessary to reproduce
the distribution $\bp$ in the worst case,
up to $\epsilon $ in total variation distance for all $x,y$. 
We write $|\bp-\bp'|_1\leq \epsilon$ to mean that
for any $x,y$, $\sum_{a,b} | p(a,b|x,y) - p'(a,b|x,y)|\leq \epsilon$.

$R_0^\eta(\bp)$ is the amount of communication 
needed to reproduce $\bp$ exactly with a protocol
which may abort with probability at most $1-\eta$
for any input $x,y$ (the probability that it aborts may depend on $x,y$). 
The probability produced
by the protocol is conditioned
on the event that neither player aborts.
When the player aborts it outputs $\bot$.  

For quantum communication, we use $Q$ to denote quantum
communication, and we use the superscript~$*$ 
to denote the presence of shared entanglement. We use
superscripts $\rightarrow$ for one-way communication (i.e, when only Alice can send a message to Bob),
and $\parallel$ for simultaneous messages (i.e., when Alice and Bob cannot communicate with each other, but are only allowed to send a message to a third party who should produce the final output of the protocol).  
The usual relation $Q_\epsilon^{\eta,*}(\bp) \leq R_\epsilon^\eta(\bp)$ holds
for any $\epsilon,\eta, \bp$.
Moreover, since one can always output at random instead of aborting,
which introduces at most $1-\eta$ error for each $x,y$, we have the following relation between $R_\epsilon(\bp)$ and $R_\epsilon^\eta(\bp)$.
\begin{lemma}\label{lem:eta-epsilon}
For any $\epsilon,\eta$ and any distribution $\bp$, we have
 $R_{\epsilon+(1-\eta)}(\bp)\leq R_{\epsilon}^{\eta}(\bp)$.
\end{lemma}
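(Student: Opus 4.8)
The plan is to convert an optimal protocol for $R_\epsilon^\eta(\bp)$ into one that never aborts, at the cost of enlarging the error parameter by $1-\eta$ and with no change in communication. Concretely, let $c=R_\epsilon^\eta(\bp)$ and let $\Pi$ be a $c$-bit protocol with shared randomness witnessing this value: on input $(x,y)$ it outputs $\bot$ with probability $\beta_{xy}\le 1-\eta$, and conditioned on not aborting it outputs a pair whose distribution $\bp'_{xy}$ is within $\epsilon$ of $\bp_{xy}$ in the total-variation distance used to define $R_\epsilon$. I would define $\Pi'$ to run $\Pi$ verbatim, except that whenever $\Pi$ produces $\bot$, each player instead outputs a uniformly random symbol from her own output alphabet. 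Since the abort is already a coordinated event of $\Pi$ (and in the simultaneous-messages model it is a decision of the referee), this substitution is a legitimate protocol, it never outputs $\bot$, and it uses exactly $c$ bits.

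The remaining step is to bound the worst-case error of $\Pi'$. Fix $(x,y)$ and let $\mathbf{r}$ be the product-uniform distribution output on aborts. The output distribution of $\Pi'$ on $(x,y)$ is the mixture $(1-\beta_{xy})\,\bp'_{xy}+\beta_{xy}\,\mathbf{r}$, so writing $\bp_{xy}$ as the same convex combination of itself and applying the triangle inequality for the total-variation distance $\|\cdot\|$,
\begin{align*}
\| (1-\beta_{xy})\bp'_{xy} + \beta_{xy}\mathbf{r} - \bp_{xy} \|
&\le (1-\beta_{xy}) \|\bp'_{xy}-\bp_{xy}\| + \beta_{xy}\|\mathbf{r}-\bp_{xy}\| \\
&\le (1-\beta_{xy})\,\epsilon + \beta_{xy} \\
&\le \epsilon + (1-\eta),
\end{align*}
using that any two distributions are at total-variation distance at most $1$ and that $\beta_{xy}\le 1-\eta$. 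Hence $\Pi'$ simulates $\bp$ within error $\epsilon+(1-\eta)$ with $c$ bits, which is exactly the statement $R_{\epsilon+(1-\eta)}(\bp)\le R_\epsilon^\eta(\bp)$.

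I do not expect any genuine obstacle here. The only two points that need a moment of care are: (i) that trading the abort symbol for a random output is valid in each of the communication models considered, which is immediate because an abort is something the parties (or the referee) already agree on; and (ii) the normalization, namely that measuring error in total-variation distance (as in the definition of $R_\epsilon$) is what makes the abort mass contribute only $\beta_{xy}\le 1-\eta$, matching the slack in the statement.
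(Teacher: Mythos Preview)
Your proposal is correct and is exactly the paper's approach: the paper's entire proof is the one-line remark that ``one can always output at random instead of aborting, which introduces at most $1-\eta$ error for each $x,y$,'' and you have simply written out that argument in full. One small remark: since each player may abort individually, it is cleanest to have each player locally replace her own $\bot$ by a random symbol; the conditional distribution on the abort event is then not literally the product-uniform $\mathbf{r}$, but your bound only uses that it is \emph{some} distribution, so the estimate is unaffected.
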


For all the models of randomized communication,
we assume shared randomness between the
players.  Except in the case of simultaneous messages,
this is the same as private randomness up to a logarithmic
additive term~\cite{newman91}. (Ref.~\cite{DKLR11} sketches 
a proof of how to adapt Newman's theorem to
the case of simulating distributions.)

\section{Partition bound and detector inefficiency}

\subsection{The partition bound for distributions}
\label{sec:prt}

We extend  the partition bound to the
more general setting of simulating a distribution $p(a,b|x,y)$
instead of computing a function.
Protocols with shared randomness and communication
also lead to a distribution over rectangle partitions;
however, since each player ouputs a value,
the label associated with each rectangle is
a local deterministic distribution, denoted by
$\bl$.   The following definition applies to
protocols that use communication
and allow the players to abort a run with some
fixed probability $1-\eta$.  The partition bound corresponds to
the case $\eta=1$ and the Las Vegas partition bound~\cite{JK10} is closely
related to the case $\eta=1/2$.

\begin{defn}\label{defn:prt}  For any distribution $\bp=p(a,b|x,y)$, over inputs
$x\in \X, y\in \Y$ and outputs $a\in \A,b\in \B$, 
define $\prt^{\eta}(\bp)$ to be the optimal value of the following linear
program.  The variables of the program are $\eta_{x,y}$ and $ w_{R,\bl} $,
where $R$ ranges over the rectangles from $\X\times \Y$ and
$\bl$ ranges over the local deterministic distributions
with inputs in $R$ and with outputs in $\A\times \B$.
\begin{align*}
 \prt^\eta(\bp) = 
         &\min_{w_{R,\bl\geq 0 },\eta_{x,y}}    
         &&\sum_{R,\bl\in\L_{\det}} w_{R,\bl} 
          \\
         &\text{subject to } 
         &&\sum_{R,\bl\in\L_{\det}: x,y\in R} w_{R,\bl} \cdot l(a,b|x,y) 
				= p(a,b|x,y) \cdot \eta_{x,y} 
        & \quad\forall x,y,a,b\in \X{\times}\Y{\times}\A {\times}\B\\
        & && \eta \leq \eta_{x,y} \leq 1 
        &  \quad  \forall x,y \in \X\times \Y.
\end{align*}

When $\eta=1$ we write $\prt(\bp)=\prt^1(\bp)$, and the linear program
simplifies:
\begin{align*}
 \prt(\bp)  = 
        &\min_{w_{R,\bl}\geq 0 }  
        &&   \sum_{R,\bl} w_{R,\bl}   \\
        &\text{subject to } 
        && \sum_{R,\bl: x,y\in R} w_{R,\bl} \cdot l(a,b|x,y) = p(a,b|x,y)  
        & \quad \forall x,y,a,b\in \X{\times}\Y{\times}\A {\times}\B\\
\end{align*}

For randomized communication with error,
$\prt^\eta_\epsilon(\bp)=\min_{|p'-p|_1\leq \epsilon} \prt^\eta(\bp')$.

\end{defn}

\begin{thm}\label{thm:partition}  For any distribution $\bp$,
$R_\epsilon^\eta(\bp) \geq \log(\prt_\epsilon^\eta(\bp))$.  
\end{thm}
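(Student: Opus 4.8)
The plan is to show that any $c$-bit, $\eta$-efficient protocol simulating $\bp$ exactly yields a feasible solution to the linear program defining $\prt_\epsilon^\eta(\bp)$ of objective value at most $2^c$; taking $c = R_\epsilon^\eta(\bp)$ then gives the bound. I would first treat the exact case ($\epsilon=0$) and then reduce the $\epsilon$-error case to it by the definition $\prt_\epsilon^\eta(\bp)=\min_{|\bp'-\bp|_1\le\epsilon}\prt^\eta(\bp')$, since a protocol that is $\epsilon$-correct for $\bp$ is exactly correct for some $\bp'$ with $|\bp'-\bp|_1\le\epsilon$.

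The core step is the protocol-to-partition correspondence, exactly as in Jain--Klauck but adapted to distributions and to aborting. Fix an optimal protocol using shared randomness $r$. For each fixed value of $r$, the protocol is deterministic: the set of transcripts partitions $\X\times\Y$ into at most $2^c$ rectangles $R$ (the standard fact that the inputs consistent with a fixed transcript form a combinatorial rectangle), and on each such rectangle the protocol's output behavior is a local deterministic distribution $\bl$ — Alice's output depends only on $x$ (and the transcript/$r$), Bob's only on $y$. Here I must be a little careful because of aborting: I treat $\bot$ as a legitimate output symbol, so $\bl$ lives in $\L_{\det}^\bot$ over the outputs $(\A\cup\{\bot\})\times(\B\cup\{\bot\})$; the efficiency $\eta_{x,y}$ is the probability (over $r$) that neither player outputs $\bot$ on input $(x,y)$, and by hypothesis $\eta\le\eta_{x,y}\le 1$ for all $x,y$. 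Set $w_{R,\bl}$ to be the probability over $r$ that the run produces rectangle $R$ labelled by the local deterministic distribution $\bl$ (aggregating over all $r$ giving the same $(R,\bl)$). Then $\sum_{R,\bl}w_{R,\bl}=1\le 2^c$, since for each $r$ the weights of the at-most-$2^c$ transcripts sum to $1$ — actually this already shows $\prt^\eta(\bp)\le 1$, so to get the $2^c$ bound one instead does not sum the probabilities but rather notes... let me restate: the right accounting is that $w_{R,\bl}$ is the probability that transcript/label pair $(R,\bl)$ occurs, and the objective $\sum_{R,\bl}w_{R,\bl}$ counts each of the $\le 2^c$ transcripts once per unit probability mass, giving $\sum_{R,\bl}w_{R,\bl}\le 2^c$ by the same argument as in the function case (the number of distinct $(R,\bl)$ pairs arising from a fixed $r$ is at most $2^c$, and summing their probabilities over $r$ and renormalizing gives the stated bound). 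The feasibility constraint $\sum_{R,\bl:\,x,y\in R}w_{R,\bl}\,l(a,b|x,y)=p(a,b|x,y)\cdot\eta_{x,y}$ for $a,b\in\A\times\B$ holds because, conditioned on not aborting, the protocol outputs exactly $\bp$; the $\eta_{x,y}$ factor precisely absorbs the conditioning.

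The main obstacle is the bookkeeping around the objective value: naively each fixed-$r$ deterministic protocol contributes total weight $1$, not $2^c$, so one must argue as Jain--Klauck do that the linear program's objective is bounded by the number of rectangles in the partition (at most $2^c$) rather than by the total probability. Concretely, one uses that for a fixed transcript the weight can be taken to be the (unnormalized) indicator-like contribution so that the sum over the at-most-$2^c$ transcripts for each $r$ is at most $2^c$, then averages over $r$; alternatively, one observes the partition LP relaxes the "partition into $\le 2^c$ rectangles" constraint and the deterministic solution has objective exactly equal to the number of parts. Getting this normalization exactly right, while simultaneously respecting that $\bl$ must be a genuine distribution (summing to $1$ over $(\A\cup\{\bot\})\times(\B\cup\{\bot\})$) and that the constraints are only imposed for $a,b\in\A\times\B$, is the delicate part; the rest is a direct translation of the classical argument. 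Finally, for the $\epsilon$ case I invoke Lemma-style smoothing: the protocol exactly simulates some $\bp'$ within $\ell_1$-distance $\epsilon$ of $\bp$, so $R_\epsilon^\eta(\bp)\ge R_0^\eta(\bp')\ge\log\prt^\eta(\bp')\ge\log\prt_\epsilon^\eta(\bp)$.
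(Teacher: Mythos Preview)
Your approach matches the paper's: extract from the protocol a feasible solution to the LP with objective at most $2^c$, and handle $\epsilon$ by smoothing. The only place you stumble is the definition of $w_{R,\bl}$ and the objective computation, where you oscillate without settling on a correct statement. The weights are \emph{not} a probability distribution over pairs $(R,\bl)$, and the phrase ``the probability over $r$ that the run produces rectangle $R$'' is ill-posed: for fixed randomness $r$, the deterministic protocol $P_r$ does not ``produce'' a single rectangle---it partitions $\X\times\Y$ into (exactly, after padding the transcript to $c$ bits) $2^c$ rectangles simultaneously, and which one is relevant depends on the input, whereas the weights must be input-independent. With your first definition the constraints would in fact fail, not just give objective $1$.

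The clean definition (which is what the paper does) is: for each randomness value $r$ with probability $q(r)$, and for each of the $2^c$ rectangle--label pairs $(R,\bl)$ in the partition induced by $P_r$, add $q(r)$ to $w_{R,\bl}$. Then $\sum_{R,\bl}w_{R,\bl}=\sum_r q(r)\cdot 2^c=2^c$ is immediate. For any fixed $(x,y)$, exactly one part per $r$ contains $(x,y)$, so $\sum_{R\ni(x,y),\,\bl}w_{R,\bl}\,l(a,b|x,y)=\sum_r q(r)\,l_{r}(a,b|x,y)$ is the unconditional probability that the protocol outputs $(a,b)$ on $(x,y)$, namely $\eta_{x,y}\,p'(a,b|x,y)$, verifying the first constraint; the bound $\eta\le\eta_{x,y}\le 1$ is the efficiency hypothesis. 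Your alternative phrasing---for each deterministic $P_r$ put $w_{R,\bl}=1$ on each of its at most $2^c$ parts (objective equal to the number of parts) and then average over $r$---is exactly this argument and is correct; commit to it and drop the ``probability that $(R,\bl)$ occurs'' language, which is what led you astray.
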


We have included a direct proof of the theorem in Appendix~\ref{apx:pf},
which for $\eta=1$ is essentially the same as the original partition bound,
sketched above, where output values $z$ are replaced with local
deterministic strategies $\bl$.  We will now turn to an alternative, 
arguably simpler, proof by
introducing the efficiency bound.

\subsection{The efficiency bound}

For any distribution $\bp$, $\eff(\bp)$ is the inverse of the maximum efficiency
sufficient to simulate it classically with shared randomness,  without communication.
\begin{defn}
For any distribution $\bp$ with inputs $\X\times \Y$ and outputs in $\A\times \B$,
$\eff(\bp)= 1/\zeta_\opt$, where $\zeta_\opt$ is the optimal value of the following
linear program.  The variables are $\zeta$ and $q_\bl$, where 
$\bl$ ranges over local deterministic protocols with inputs taken from $\X\times\Y$
and outputs in $\A\cup\{\bot\}\times \B\cup\{\bot\}$.
\begin{align*}
   \zeta_\opt = 
    & \max_{\zeta,q_\bl\geq 0} 
    &&  \zeta \\ 
    &\text{subject to } 
    &&\sum_{\bl\in\L_{\det}^\bot} q_\bl l(a,b|x,y)=\zeta p(a,b|x,y)
    & \quad \forall x,y,a,b\in \X{\times}\Y{\times}\A {\times}\B\\
    & &&\sum_{\bl\in\L_{\det}^\bot} q_\bl=1
\end{align*}
For randomized communication with error,
define $\eff_\epsilon(\bp)=\min_{|p'-p|_1\leq \epsilon} \eff(\bp')$.
\end{defn}
The first constraint says that the 
local distribution, conditioned on both outputs differing from $\bot$,
equals~$\bp$, and the second is a normalization constraint.
Note that the efficiency $\zeta$ is the same for every input
$x,y$.  
This is surprisingly important and 
the relaxation $\zeta_{x,y}\geq \zeta$ 
does not appear to coincide with the partition bound. 
Other more realistic variants (for the Bell setting), 
such as players aborting independently
of one another, could be considered as well. (We note that
this would not result in a linear program.)

\begin{thm}~\cite{M01, BHMR03} \label{thm:lower-eff}
 $R_\epsilon(\bp)\geq \log\eff_\epsilon(\bp).$
\end{thm}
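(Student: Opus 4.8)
The plan is to reduce an arbitrary $\epsilon$-error randomized protocol for $\bp$ to a zero-communication protocol with shared randomness that aborts, in such a way that its efficiency is exactly $2^{-c}$ where $c$ is the communication cost, and the conditional output distribution matches some $\bp'$ with $|\bp'-\bp|_1\le\epsilon$. This is the Massar--Buhrman \textit{et al.} guess-the-transcript idea, and the only real content is to check that it produces a \emph{feasible} point of the linear program defining $\eff(\bp')$ with objective value $1/\zeta = 2^c$; the inequality $R_\epsilon(\bp)\ge\log\eff_\epsilon(\bp)$ then follows since $\eff_\epsilon(\bp) = \min_{|\bp'-\bp|_1\le\epsilon}\eff(\bp')$.

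First I would fix an $\epsilon$-error protocol using $c$ bits of communication and shared randomness, and without loss of generality assume every run uses exactly $c$ bits, so transcripts are strings in $\{0,1\}^c$. Let $\bp'$ be the distribution the protocol actually produces (so $|\bp'-\bp|_1\le\epsilon$). Now build the zero-communication protocol: Alice and Bob use a first block of shared randomness to sample a uniformly random string $t\in\{0,1\}^c$, interpreted as a guessed transcript; they use the remaining shared randomness as the protocol's shared randomness. Each player simulates her side of the protocol against the fixed guessed transcript $t$: Alice checks that every bit of $t$ she would have sent is consistent with what she actually computes, and similarly Bob; if either detects an inconsistency she outputs $\bot$, otherwise she outputs what the protocol instructs. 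Conditioned on the guessed transcript being the transcript that the real protocol would generate on $(x,y)$ with that shared randomness, both players accept and output exactly as the real protocol would. The key observation is that this consistency event has probability exactly $2^{-c}$ \emph{for every input $(x,y)$}: for each fixed value of the shared randomness there is exactly one correct transcript among the $2^c$ equally likely guesses, and by the standard rectangle property a player's decision to abort depends only on her own input and her own view, so the joint accept event is precisely ``$t$ equals the true transcript.'' Hence the probability neither player aborts is $\zeta = 2^{-c}$ independent of $x,y$, and conditioned on not aborting the output is distributed exactly as $\bp'$.

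Translating this into the linear program: the local deterministic distributions $\bl\in\L_{\det}^\bot$ are the leaves obtained by fixing all the shared randomness (both the guessed $t$ and the protocol randomness), each giving a deterministic abort-or-output strategy for Alice as a function of $x$ and for Bob as a function of $y$; set $q_\bl$ to be the probability of that randomness string. Then $\sum_\bl q_\bl = 1$, and $\sum_\bl q_\bl\, l(a,b|x,y) = \Pr[\text{no abort}]\cdot p'(a,b|x,y) = \zeta\, p'(a,b|x,y)$ with $\zeta = 2^{-c}$, which is exactly feasibility for $\eff(\bp')$ with objective $\zeta$. Therefore $\zeta_\opt \ge 2^{-c}$, i.e. $\eff(\bp')\le 2^c$, so $\eff_\epsilon(\bp)\le\eff(\bp')\le 2^c$ and $c\ge\log\eff_\epsilon(\bp)$. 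Minimizing over protocols gives $R_\epsilon(\bp)\ge\log\eff_\epsilon(\bp)$.

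The step I expect to be the main obstacle — though it is more a point requiring care than a genuine difficulty — is justifying that the abort probability is \emph{exactly} $2^{-c}$ for every $(x,y)$, with the \emph{same} $\zeta$ for all inputs. This needs the rectangle/consistency structure: each player, given her input and her own portion of the shared randomness, has a well-defined ``intended'' sequence of bits to send at each point in the protocol, and she aborts iff the guessed transcript ever disagrees with it; since exactly one of the $2^c$ uniformly guessed transcripts is globally consistent, and consistency factorizes across the two players in the usual way, the no-abort event has probability $2^{-c}$ regardless of $(x,y)$. Handling protocols whose length varies with the transcript (pad to length $c$), and making sure the conditional distribution is genuinely $\bp'$ and not some renormalized variant, are the remaining routine checks.
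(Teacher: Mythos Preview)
Your proposal is correct and follows essentially the same approach as the paper's proof: guess a uniformly random transcript, abort on inconsistency, and use the rectangle property to argue that the no-abort event is exactly ``$t$ is the true transcript,'' giving efficiency $2^{-c}$ independent of $(x,y)$. The paper makes the rectangle step slightly more explicit (defining $A_x=\{T:\exists y',\ T=T_{x,y'}\}$, $B_y=\{T:\exists x',\ T=T_{x',y}\}$ and arguing $A_x\cap B_y=\{T_{x,y}\}$), whereas you are more explicit about casting the resulting strategy as a feasible point of the LP; but the arguments are the same.
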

\begin{proof}[Proof]
 Let $P$ be a randomized communication protocol for a distribution $\bp'$ with $|\bp-\bp'|_1
\leq \epsilon$, using $t$ bits of communication. $P$ is a convex combination of deterministic protocols: we denote by $\Lambda$ the source of public randomness and by $P_\lambda$ the deterministic protocol corresponding to $\lambda \in \Lambda$. We assume that the total number of bits exchanged is independent of the execution of the protocol, introducing dummy bits at the end of the protocol
if necessary.
Let~$q_l$ be the following distribution over local deterministic protocols $\bl$:
using shared randomness Alice and Bob first pick some random value $\lambda$ according to $\Lambda$
and then they pick a transcript $T\in\{0,1\}^t$.
If $T$ is consistent with $P_\lambda$, Alice outputs according to $P_\lambda$, otherwise she outputs~$\bot$; similarly for Bob. 
We claim that for each $\lambda \in \Lambda$ only one transcript is valid for Alice and Bob simultaneously, so the probability that neither player outputs~$\bot$ is exactly $2^{-t}$. Indeed, define $A_x = \{T: \exists y', T=T_{x,y'}\}$ and $B_y = \{T: \exists x', T=T_{x',y}\}$.  Then $A_x \cap B_y = \{T_{x,y}\}$ because $T$ in $A_x \cap B_y$ means that there exist $x'$ and $y'$ such that $T=T_{x',y}=T_{x,y'}$. By the rectangle property of the transcripts of $P_\lambda$ it must be the case that $T=T_{x,y}$. Furthermore, if we condition on not aborting, this protocol does exactly the same thing as $P$. This distribution therefore satisfies the constraints of $\eff(\bp')$ with $\zeta=2^{-t}$.
\end{proof}


\begin{thm}\label{thm:equiv-eff-prt}
For any distribution $\bp$,
$\eff(\bp)=\prt(\bp)$.
\end{thm}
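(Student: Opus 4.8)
The plan is to give a direct ``primal-to-primal'' dictionary between the two linear programs, establishing both $\prt(\bp)\le\eff(\bp)$ and $\eff(\bp)\le\prt(\bp)$ at once. The bridge is that a local deterministic protocol with abort $\bl\in\L_{\det}^\bot$ is just a pair of functions $f_A\colon\X\to\A\cup\{\bot\}$, $f_B\colon\Y\to\B\cup\{\bot\}$, and the set of inputs on which neither player aborts is exactly the rectangle $R_\bl=\{x:f_A(x)\ne\bot\}\times\{y:f_B(y)\ne\bot\}$; restricting $\bl$ to $R_\bl$ gives a genuine local deterministic distribution on $\A\times\B$. Conversely, any pair $(R,\bl')$ of a nonempty rectangle and a local deterministic distribution on it extends canonically to an element of $\L_{\det}^\bot$ aborting precisely off $R$, and distinct pairs give distinct extensions. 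The single identity used throughout is that for $a,b\in\A\times\B$, $l(a,b|x,y)$ equals the restricted strategy's value when $(x,y)\in R_\bl$ and equals $0$ otherwise (some player then outputs $\bot$).

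First I would prove $\eff(\bp)\le\prt(\bp)$. Take a feasible $\{w_{R,\bl'}\}$ for $\prt(\bp)$ with objective $W=\sum_{R,\bl'}w_{R,\bl'}$; we may assume every $R$ is nonempty, and summing the equality constraint over $a,b$ shows $\sum_{R,\bl':\,x,y\in R}w_{R,\bl'}=1$ for all $x,y$, hence $W\ge1$. Set $\zeta=1/W$ and $q_\bl=\frac1W\sum_{(R,\bl')\mapsto\bl}w_{R,\bl'}$, the sum over the (unique) pair with canonical extension $\bl$. Then $q_\bl\ge0$, $\sum_\bl q_\bl=1$, and by the identity together with the $\prt$-constraint, $\sum_\bl q_\bl\,l(a,b|x,y)=\frac1W\sum_{R,\bl':\,x,y\in R}w_{R,\bl'}\,l'(a,b|x,y)=\frac1W\,p(a,b|x,y)=\zeta\,p(a,b|x,y)$, so this is feasible for $\eff(\bp)$ with value $\zeta$. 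Hence $\zeta_\opt\ge1/W$, and choosing $W=\prt(\bp)$ gives $\eff(\bp)=1/\zeta_\opt\le\prt(\bp)$.

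For $\prt(\bp)\le\eff(\bp)$ I would reverse the construction: from feasible $\{q_\bl\},\zeta$ for $\eff(\bp)$ set $w_{R,\bl'}=\frac1\zeta\sum_{\bl\,:\,R_\bl=R,\ \bl|_R=\bl'}q_\bl$, summing over $\bl$ with nonempty non-abort rectangle (all-abort protocols contribute nothing to the first $\eff$-constraint). Regrouping that constraint by the pair $(R_\bl,\bl|_{R_\bl})$ and using the identity yields $\sum_{R,\bl':\,x,y\in R}w_{R,\bl'}\,l'(a,b|x,y)=\frac1\zeta\sum_\bl q_\bl\,l(a,b|x,y)=p(a,b|x,y)$, so the $\prt$-constraints hold, while the objective is $\sum_{R,\bl'}w_{R,\bl'}=\frac1\zeta\sum_{\bl\text{ nontrivial}}q_\bl\le\frac1\zeta$. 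Taking $\zeta=\zeta_\opt$ gives $\prt(\bp)\le1/\zeta_\opt=\eff(\bp)$, which with the previous paragraph proves the theorem. The same dictionary applied to $\prt^\eta$ and an $\eta$-relaxed efficiency bound would also give the alternative proof of Theorem~\ref{thm:partition} announced in the text.

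I do not expect a real obstacle beyond getting the bookkeeping exactly right: one must check that the non-abort region of any $\bl\in\L_{\det}^\bot$ is a rectangle, that its restriction is a local deterministic distribution, that the canonical extension is well defined and injective in $(R,\bl')$ so the weights transfer without collision, and above all that the reciprocal $W=1/\zeta$ lands on the correct side. That last point rests on the fact (seen by summing the first $\eff$-constraint over $a,b$) that $\zeta$ is forced to be the \emph{same} efficiency for every input, which is exactly what matches the single scalar objective of the partition bound; the relaxation $\zeta_{x,y}\ge\zeta$ would destroy the correspondence.
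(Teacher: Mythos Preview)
Your proposal is correct and is essentially the same argument as the paper's: both set up the dictionary between pairs $(R,\bl')$ and abort-strategies $\bl\in\L_{\det}^\bot$ and use the change of variables $\zeta=1/\sum_{R,\bl'}w_{R,\bl'}$, $q_{\bl}=\zeta\,w_{R,\bl'}$. The paper simply states the change of variables in one line, whereas you spell out both directions and the bookkeeping (non-abort region is a rectangle, injectivity for nonempty $R$, discarding all-abort strategies), but there is no substantive difference.
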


\begin{proof}
 In the partition bound, a pair $(\bl,R)$, where $\bl$ is a local distribution with outputs in $\A\times\B$ and $R$ is a rectangle, determines a local distribution $\bl_R$ with outputs in $(\A\cup\{\bot\})\times(\B\cup\{\bot\})$, where Alice outputs as in~$\bl$ if $x\in R$, and outputs $\bot$ otherwise (similarly for Bob). 
 Let $(a_0,b_0) \in \A\times\B$ be an arbitrary pair of outputs. In the efficiency bound, a distribution $\bl \in \L_{\det}^{\bot}$ defines both a rectangle being the set of inputs where neither Alice nor Bob abort, and a local distribution $\bl' \in \L_{\det}$ where Alice outputs as $\bl$ if the output is different from $\bot$ and $a_0$ otherwise (similarly for Bob with $b_0$).
We can transform the linear program for $\prt(\bp)$ into the linear program for $\eff(\bp)$ by making the change of variables:
$ \zeta=\left(\sum_{R,\bl}w_{R,\bl}\right)^{-1}$ and 
 $q_{\bl_R}=\zeta\ w_{R,\bl}.$
\end{proof}

We define  $\eff^\eta(\bp)$ which is equal to $\prt^\eta(\bp)$.
The details are given in Appendix~\ref{app:eff-eta}.

\subsection{Lower bound for quantum communication complexity}

By replacing local distributions by quantum distributions
we get 
a strong new lower bound on quantum communication
that subsumes the factorization norm.
Insomuch as the partition bound is an extension of the rectangle bound, this quantum analogue of the partition bound can be thought
of as 
a quantum extension of the rectangle bound, 
circumventing Yao's minmax theorem.

\begin{defn}
For any distribution $\bp$ with inputs $\X\times \Y$ and outputs $\A\times \B$,
$\eff^*(\bp)= 1/\eta^*$, with $\eta^*$ the optimal value of the following
(non-linear) program.
\begin{align*}
   \max_{\zeta,\bq\in \Q^\bot} &&& \zeta \\
   \text{subject to }          &&& q(a,b|x,y)=\zeta p(a,b|x,y)
		 &\forall x,y,a,b\in \X{\times}\Y{\times}\A {\times}\B\\
\end{align*}
As before, we let $\eff_\epsilon^*(\bp)= \min_{|p'-p|_1\leq \epsilon} \eff^*(\bp')$.
\end{defn}

\begin{thm}\label{thm:lower-eff-quantum}
 $Q^*_\epsilon(\bp)\geq \frac{1}{2}\log\eff_\epsilon^*(\bp).$
\end{thm}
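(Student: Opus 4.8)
The plan is to mimic the classical argument of Theorem~\ref{thm:lower-eff} (due to Massar and Buhrman \textit{et al.}), replacing the classical communication protocol by a quantum one and the shared randomness by shared entanglement. Suppose $Q$ is a quantum protocol with shared entanglement that simulates some $\bp'$ with $|\bp-\bp'|_1\le\epsilon$ using $t$ qubits of communication. I want to turn $Q$ into a quantum zero-communication protocol $\bq\in\Q^\bot$ such that, conditioned on neither player outputting $\bot$, it reproduces $\bp'$, and such that the probability of not aborting is at least $2^{-2t}$; this gives $\eff^*_\epsilon(\bp)\le 2^{2t}$, i.e.\ $t\ge\frac12\log\eff^*_\epsilon(\bp)$, which is the claim.

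The main step is the teleportation-based simulation of quantum communication. First I would fix the (WLOG one-way, by a standard reduction, or handled round by round) structure of $Q$ and, instead of actually sending the $t$ qubits, have Alice teleport them to Bob using $t$ EPR pairs from the shared entanglement. Teleportation of $t$ qubits requires $2t$ classical bits (the Bell-measurement outcomes) to be transmitted so that Bob can apply the correcting Pauli. Since we are not allowed \emph{any} communication, Alice simply does not send these $2t$ bits; instead, Bob \emph{guesses} them uniformly at random (using more shared randomness, which is available inside the entangled state or as an auxiliary classical shared string). With probability exactly $2^{-2t}$ Bob's guess matches Alice's actual Bell outcomes. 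To detect the mismatch, Alice also outputs $\bot$ unless her $2t$ teleportation bits equal a fixed reference string (equivalently, Alice guesses the same $2t$-bit string and aborts on disagreement with her real outcomes), so that the two players abort unless both the "real" and the "guessed" classical messages coincide. Conditioned on this non-abort event — which has probability $2^{-2t}$ — the joint state held by Bob after his Pauli correction is exactly the state that would have arisen in the genuine protocol $Q$, so the subsequent (local) measurements reproduce the output distribution $\bp'$ exactly. This yields a point in $\Q^\bot$ feasible for the program defining $\eff^*(\bp')$ with $\zeta=2^{-2t}$, whence $\eff^*_\epsilon(\bp)=\min_{|p-p'|_1\le\epsilon}\eff^*(\bp')\le 2^{2t}$.

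I expect the main obstacle to be bookkeeping of the abort event so that the non-abort probability is \emph{exactly} (or at least at least) $2^{-2t}$ for \emph{every} input pair $x,y$, not merely on average — the definition of $\eff^*$ requires the efficiency $\zeta$ to be input-independent. This is where the teleportation picture is clean: the distribution of Alice's Bell-measurement outcomes is uniform on $\{0,1\}^{2t}$ irrespective of the state being teleported, hence irrespective of $x,y$, so the probability that a fixed independent guess matches is exactly $2^{-2t}$ for all $x,y$. A secondary point to handle carefully is that $Q$ may be interactive and may branch on classical measurement results; one either invokes a round-by-round teleportation of each message (at the cost of $2\times(\text{total qubits})$ guessed bits) or appeals to the fact that for exact simulation a one-way protocol and an interactive one can be related. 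Finally, I would note that the $\bot$ symbol must be appended to both players' output alphabets as in the definition of $\Q^\bot$, and that the reconstructed protocol, conditioned on not aborting, is genuinely a quantum zero-communication protocol (all operations are local plus the pre-shared entanglement), which is what the program for $\eff^*$ asks for.
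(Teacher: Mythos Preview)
Your proposal is correct and follows essentially the same route as the paper: replace the $t$ qubits of communication by teleportation over pre-shared EPR pairs, turning the protocol into an entanglement-assisted protocol with $2t$ classical bits, then guess those $2t$ bits with shared randomness and abort on mismatch. The paper phrases the final step slightly differently --- it first passes to a classical-communication protocol and then reuses the rectangle/unique-transcript argument from Theorem~\ref{thm:lower-eff} --- whereas you argue directly from the fact that Bell-measurement outcomes are uniformly distributed independently of the teleported state (hence of $x,y$); both yield exactly the input-independent efficiency $2^{-2t}$. Your remark about handling interactive protocols round by round (each party checking its own teleportation outcomes against the shared guess) is the right way to complete the argument, and the independence of successive Bell outcomes (fresh EPR pairs, outcomes independent of the input state) makes the product $2^{-2t}$ exact. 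The aside about reducing ``WLOG'' to a one-way protocol is unnecessary and not obviously valid in general; just keep the round-by-round version.
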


\begin{proof}
Let $Q$ be a $t$ qubit communication protocol for $\bp'$ with $|\bp'-\bp|_1\leq \epsilon$.
We use teleportation and classical communication to send every qubit, hence obtaining an entanglement-assisted protocol using at most~$2t$ bits of classical communication. We introduce dummy bits so that the number of bits exchanged is exactly~$2t$, independently of the execution of the protocol.
Then proceed as before: guess the a uniform classical transcript using the 
shared randomness.
Then Alice and Bob will simulate the previous entangled-assisted communication protocol (performing the measurements) checking that the communication in the transcript is consistent with their execution of the protocol. If it is the case, they output according to the protocol, and they abort otherwise.
If we fix the outputs of the measurements then the protocol is deterministic so, as before there is exactly one transcript which is valid simultaneously for $x$ and $y$, and the efficiency is $2^{-2t}$.
Conditioning on not aborting, this protocol outputs exactly with the same distribution
as before. The result is a protocol using zero communication and
entanglement with efficiency $2^{-2t}$, satisfying
the constraints.
\end{proof}

Since the local distributions
form a subset of the quantum distributions,
$\eff^*(\bp) \leq \eff(\bp)$ for any~$\bp$.

\subsection{One-way efficiency bound}

Because of its rectangle-based formulation, the partition and efficiency bounds can easily be
tailored to the case of one-way communication protocols.
In the case of the partition bound, we consider only rectangles 
of the form $X\times Y$ with $Y=\Y$.  In the case
of the efficiency bound, this amounts to only letting
Alice abort the protocol. 
 The set of local (resp.~quantum) distributions
where only Alice can abort is denoted $\L_{\det}^{\botA}$
(resp.~$\Q^{\botA}$).
\begin{defn}\label{defn:one-way-eff}
Define $\eff^{\rightarrow}$ and $\eff^{*,\rightarrow}$ as 
\begin{align*}
\enlever{
(\eff^{\ind,\rightarrow}(\bp))^{-1}&=\max_{\zeta}\zeta &\textrm{subject to }& \sum_\bq q_\ql q(a,b|x,y)=\zeta p(a,b|x,y)& \\
&&&\forall a\in \A,b\in \B,x,y\in \X\times\Y\\
&&&q_\bl q(b|y)=p(b|y)&\forall b\in B,y\in \Y\\
}
(\eff^{\rightarrow}(\bp))^{-1}=
     &\max_{\zeta,q_\bl\geq 0}
     &&\zeta \\
     &\textrm{subject to }
     && \sum_{\bl\in\L_{\det}^{\botA}} q_\bl l(a,b|x,y)=\zeta p(a,b|x,y)
     &\quad \forall a\in \A, b\in B,x,y\in \X\times \Y\\
     & &&\sum_{\bl\in\L_{\det}^{\botA}}q_\bl=1\\
(\eff^{*,\rightarrow}(\bp))^{-1}=
     &\max_{\zeta,\bq\in \Q^{\botA}}
     &&\zeta \\
     &\textrm{subject to }
     && q(a,b|x,y)=\zeta p(a,b|x,y)
     & \quad \forall a\in \A, b\in B,x,y\in \X\times \Y.\\
\end{align*}
\end{defn}

\enlever{
We also have
\begin{lemma}
 $\eff^{\ind,\rightarrow}(\bp)\geq\eff^{\rightarrow}(\bp)$.
\end{lemma}
}

\begin{thm}\label{thm:lower-one-way}
  $R_0^{\rightarrow}(\bp)\geq \log\eff^{\rightarrow}(\bp)$ and
  $Q_0^{*,\rightarrow}(\bp)\geq \frac{1}{2}\log\eff^{*,\rightarrow}(\bp)$.
\end{thm}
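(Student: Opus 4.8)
The plan is to adapt the proof of Theorem~\ref{thm:lower-eff-quantum} (and its classical counterpart, Theorem~\ref{thm:lower-eff}) to the one-way setting, the only change being a careful bookkeeping of \emph{which} player needs to abort. Suppose first that $P$ is a one-way classical protocol for $\bp'$ with $|\bp-\bp'|_1\le\epsilon$, in which only Alice sends a message, of length $t=R_0^{\rightarrow}(\bp)$ bits (padded with dummy bits so that every transcript has length exactly $t$). Using shared randomness, Alice and Bob first sample the source of public randomness $\lambda$, and then guess a uniformly random string $T\in\{0,1\}^t$. Alice, knowing $x$ and $\lambda$, computes the message $m_{x}$ she would actually send; if $m_{x}=T$ she runs $P_\lambda$ and outputs its value for her, otherwise she outputs $\bot$. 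Bob, knowing $y$ and $\lambda$, simply \emph{assumes} the message was $T$ and outputs $P_\lambda$'s value for him on input $(T,y)$ — he never aborts. The key observation is that for each fixed $\lambda$ there is exactly one string, namely $m_x$, that is consistent with Alice's input, so the event ``Alice does not abort'' has probability exactly $2^{-t}$, independently of $x$ and $y$; and conditioned on that event, the transcript guessed equals the real transcript of $P_\lambda$ on $(x,y)$, so the pair of outputs is distributed exactly as in $P$. This yields a distribution over deterministic local protocols in $\L_{\det}^{\botA}$ that is feasible for the program defining $(\eff^{\rightarrow}(\bp'))^{-1}$ with $\zeta=2^{-t}$, so $\eff^{\rightarrow}(\bp')\le 2^{t}=2^{R_0^{\rightarrow}(\bp)}$, and taking the minimum over $\bp'$ gives $R_0^{\rightarrow}(\bp)\ge\log\eff^{\rightarrow}(\bp)$.

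For the quantum bound the argument is the same with one extra ingredient: if $Q$ is a one-way quantum protocol sending $t$ qubits, teleport each qubit from Alice to Bob, turning it into an entanglement-assisted protocol in which Alice sends $2t$ classical bits (again padded to exactly $2t$). Now repeat the construction above: Alice and Bob share the entangled state (teleportation EPR pairs together with $Q$'s shared entanglement), Alice guesses the $2t$-bit transcript, checks it against the message she would actually send after performing her part of the (now classical-communication) protocol, aborts on a mismatch, and otherwise outputs her measurement outcome; Bob never aborts and just runs his side of the protocol assuming the guessed transcript. Fixing all measurement outcomes makes the protocol deterministic, so exactly one transcript is consistent with Alice's side, the efficiency is $2^{-2t}$ independent of $(x,y)$, and conditioning on not aborting reproduces $\bp'$ exactly. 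This is a feasible solution for $(\eff^{*,\rightarrow}(\bp'))^{-1}$ with $\zeta=2^{-2t}$, giving $Q_0^{*,\rightarrow}(\bp)\ge\frac12\log\eff^{*,\rightarrow}(\bp)$.

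The main point requiring care — and the only place the one-way restriction really enters — is the claim that exactly one message/transcript is consistent on Alice's side for each fixed value of the shared randomness (and, in the quantum case, of the measurement outcomes). In the two-way proof of Theorem~\ref{thm:lower-eff} this needed the rectangle property of transcripts, $A_x\cap B_y=\{T_{x,y}\}$; here it is simpler, because the message depends on $x$ alone, so the set of consistent messages for Alice is literally the singleton $\{m_x\}$ and Bob's consistency check is vacuous. One should also double-check that the resulting local protocol genuinely lies in $\L_{\det}^{\botA}$ (only Alice ever outputs $\bot$) and that, via the identification in the proof of Theorem~\ref{thm:equiv-eff-prt} between $\L_{\det}^{\botA}$-distributions and one-way rectangles $X\times\Y$, this matches the one-way specialization of the partition/efficiency bound described just before Definition~\ref{defn:one-way-eff}. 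Beyond that the proof is routine.
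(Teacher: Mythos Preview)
Your proposal is correct and is exactly the adaptation the paper has in mind: the paper gives no separate proof, stating only that ``the proof is similar to the two-way case,'' and your argument carries out precisely that adaptation, with the sole change that only Alice checks her (one-way) message against the guessed transcript while Bob never aborts. One small presentational nit: since the statement concerns $R_0^{\rightarrow}$ and $Q_0^{*,\rightarrow}$, the detour through a $\bp'$ with $|\bp-\bp'|_1\le\epsilon$ is unnecessary (just take $\bp'=\bp$), and your phrase ``taking the minimum over $\bp'$'' would otherwise yield $\eff_\epsilon^{\rightarrow}$ rather than $\eff^{\rightarrow}$.
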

The proof is similar to the two-way case.

\subsection{Efficiency is larger than $\gamma_2$}
Jain and Klauck show that the partition bound is an upper 
bound on $\gamma_2$ for boolean functions (in fact they show that
the weaker smooth rectangle bound is an upper bound on $\gamma_2$)~\cite{JK10}.
The lower bounds $\nu$ and~$\gamma_2$
were extended to nonsignaling distributions by Degorre \textit{et al.}~\cite{DKLR11}.  
Nonsignaling distributions, including quantum distributions, 
have marginal distributions independent of
the other player's input. 

\begin{defn}[Non-signaling distributions]
\label{defn:causal}
A distribution $\mathbf p$ is nonsignaling if
$ \forall a,x,y,y', \sum_b p(a,b|x,y) = \sum_b p(a,b|x,y')$,
and $ 
	\forall b,x, x',y, \sum_a p(a,b|x,y) = \sum_a p(a,b|x',y).$
\end{defn}

\begin{defn}[\cite{DKLR11}]\label{defn:gamma2}
For any nonsignaling distribution $\bp$,
\begin{myenumerate}
\item $\nu(\mathbf{p}) =
\min \{ \sum_i  \abs{q_i} :
	\exists \mathbf{p}_i\in\L,q_i\in \Real, \mathbf{p} = \sum_i q_i \mathbf{p}_i \} $,
\item $\gamma_2(\mathbf{p}) =
\min \{ \sum_i\abs{q_i} :
	\exists \mathbf{p}_i\in \Q, q_i\in \Real, \mathbf{p} = \sum_i q_i \mathbf{p}_i \} $,
\end{myenumerate}
\end{defn}
Recall from Section~\ref{sec:distributions} the definition of the distribution $\bp_f$ for any
boolean function $f$.
It was shown that for any Boolean function $f$, the factorization norm 
$\gamma_2(f) = \Theta(\gamma_2(\mathbf{p}_f))$, and similarly for the
nuclear norm, $\nu(f)= \Theta(\gamma_2(\mathbf{p}_f))$~\cite{DKLR11}.

\begin{thm}
\label{thm:gamma-upper}
For any nonsignaling $\bp$,
$\nu(\bp)\leq 2\ \eff(\bp)$ and $\gamma_2(\bp)\leq 2\ \eff^*(\bp)$.
\end{thm}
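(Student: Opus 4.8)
I will work with the dual (Bell-inequality) formulations of both quantities. Applying linear-programming duality to the $\eff$ program, one gets
$$\eff(\bp)=\max\Bigl\{\langle B,\bp\rangle \ :\ B=(B_{xyab})_{a\in\A,b\in\B},\ \langle B,\bl\rangle\le 1\ \ \forall\,\bl\in\L_{\det}^{\bot}\Bigr\},$$
where $\langle B,\bp\rangle:=\sum_{x,y,a,b}B_{xyab}\,p(a,b|x,y)$ and the constraint ranges over \emph{all} of $\L_{\det}^{\bot}$, including the partial strategies in which a player aborts on some of her inputs. Likewise, by LP duality for the $\nu$ program, $\nu(\bp)=\max\{\langle B,\bp\rangle : |\langle B,\bl\rangle|\le 1\ \forall\,\bl\in\L_{\det}\}$ (no abort). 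So it suffices to take an optimal $B$ for $\nu$, normalized so that $\max_{\bl\in\L_{\det}}|\langle B,\bl\rangle|=1$, and to build from it a functional $B'$ feasible for the $\eff$ program with $\langle B',\bp\rangle\ge\tfrac12\langle B,\bp\rangle$; then $\eff(\bp)\ge\langle B',\bp\rangle\ge\tfrac12\nu(\bp)$. The quantum statement $\gamma_2(\bp)\le 2\,\eff^*(\bp)$ follows by the identical argument with $\L_{\det},\L_{\det}^{\bot}$ replaced by $\Q,\Q^{\bot}$, since the argument only uses nonsignaling of $\bp$ and convexity of the relevant set of strategies.

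The construction of $B'$ is the ``abort $\Rightarrow$ random guess'' reduction of Massar and Buhrman \textit{et al.}, read on the dual side. Fix reference outputs $a_0\in\A$, $b_0\in\B$ and pass to the recentered functional $B^{\circ}_{xyab}=B_{xyab}-B_{xyab_0}-B_{xya_0b}+B_{xya_0b_0}$ (and take $B'$ to be a rescaling of $B^\circ$). Two properties do the work. First, $B^{\circ}$ has vanishing $a_0$-rows and $b_0$-columns, so a player who aborts on an input may as well output the reference symbol, which contributes $0$; hence the partial strategies impose no new constraint and $\max_{\bl\in\L_{\det}^{\bot}}\langle B^{\circ},\bl\rangle=\max_{\bl\in\L_{\det}}\langle B^{\circ},\bl\rangle$. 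Second, because $\bp$ is nonsignaling, $\sum_b B_{xyab_0}\,p(a,b|x,y)=B_{xyab_0}\bigl(\sum_b p(a,b|x,y)\bigr)$ with the marginal independent of $y$, and similarly for the other recentering terms; therefore $\langle B^{\circ},\bp\rangle=\langle B,\bp\rangle-\langle B,\bq_1\rangle-\langle B,\bq_2\rangle+\langle B,\bq_3\rangle$, where $\bq_1,\bq_2,\bq_3\in\L$ are the explicit local distributions ``Alice plays her marginal, Bob outputs $b_0$'', ``Alice outputs $a_0$, Bob plays his marginal'', and the point mass $(a_0,b_0)$. Each correction term thus has absolute value at most $1$, so $\langle B^{\circ},\bp\rangle$ is controlled from below by $\langle B,\bp\rangle$ minus a constant; rescaling $B^\circ$ to become $\eff$-feasible then yields a dual solution of value at least $\tfrac12\langle B,\bp\rangle$.

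The main obstacle is getting the constant down to exactly $2$ rather than the $4$ that this crude recentering produces (the four terms of $B^{\circ}$ are each only bounded by $\max_{\bl\in\L_{\det}}|\langle B,\bl\rangle|=1$). The fix uses the nonsignaling hypothesis more sharply. Writing the ``abort $\Rightarrow$ uniform guess'' protocol $\bp'=\zeta\bp+\mathbf c$ explicitly ($\zeta=1/\eff(\bp)$, $\bp'\in\L$), one checks that the probability that Alice aborts on $x$ while Bob outputs $b$ on $y$ is in fact \emph{independent of $x$}, symmetrically for Bob, and that the probability $\pi_{11}(x,y)$ that both players abort is a sum of a function of $x$ and a function of $y$. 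This lets the correction $\langle B,\mathbf c\rangle$ split into ``only Alice aborts'', ``only Bob aborts'' and ``both abort'' parts, each of which can be bounded separately against local (resp.\ quantum) distributions — the separability of $\pi_{11}$ being exactly what keeps the ``both abort'' contribution finite — and the bookkeeping then closes at the factor $2$. This estimate is the only genuinely delicate step; the rest is duality plus the elementary abort-to-guess reduction, and it transfers verbatim to the quantum setting.
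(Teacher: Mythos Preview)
Your approach is genuinely different from the paper's, and as written it has a real gap at the crucial step.

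\textbf{What the paper does.} The paper argues on the primal side. Starting from an optimal $(\zeta,\bq)$ for $\eff^*(\bp)$, it replaces each $\bot$ by a uniform output to obtain $\tilde\bq\in\Q$, and writes
\[
\bp=\tfrac{1}{\zeta}\,\tilde\bq-\tfrac{1-\zeta}{\zeta}\,\mathbf r,
\]
where $\mathbf r$ is $\tilde\bq$ conditioned on the event ``at least one player aborted in $\bq$'' (an event of probability exactly $1-\zeta$ for every $x,y$). The whole proof then rests on the single assertion that $\mathbf r\in\Q$: once that holds, $\gamma_2(\bp)\le\frac{1}{\zeta}+\frac{1-\zeta}{\zeta}=2\,\eff^*(\bp)-1$ immediately. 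The point is that the entire abort correction is bundled into \emph{one} quantum distribution rather than three separate pieces.

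\textbf{Where your argument stalls.} Your dual recentering $B^{\circ}_{xyab}=B_{xyab}-B_{xyab_0}-B_{xya_0b}+B_{xya_0b_0}$ is correct as far as it goes: it does make aborts harmless, and for $\bl\in\L_{\det}$ one has $\langle B^{\circ},\bl\rangle=\langle B,\bl\rangle-\langle B,\bl_1\rangle-\langle B,\bl_2\rangle+\langle B,\bl_3\rangle$ with $\bl_1,\bl_2,\bl_3\in\L_{\det}$, giving $\max_{\L_{\det}^\bot}\langle B^\circ,\cdot\rangle\le 4$. You acknowledge this yields only $4$. Your proposed fix, however, is not a fix: you observe (correctly, and this does use nonsignaling) that $q(\bot,b|x,y)$ is independent of $x$, $q(a,\bot|x,y)$ is independent of $y$, and $q(\bot,\bot|x,y)$ splits as a function of $x$ plus a function of $y$. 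But when you then ``split $\langle B,\mathbf c\rangle$ into `only Alice aborts', `only Bob aborts' and `both abort' parts, each bounded separately,'' you are bounding $(1-\zeta)\mathbf r$ as a sum of three sub-probability measures, each dominated by a local distribution. That again gives $|\langle B,\mathbf r\rangle|\le 3$ (or, after the cancellations you allude to, something like $\frac{3+\zeta}{1-\zeta}$), hence $\nu(\bp)\le \frac{4}{\zeta}+O(1)$, not $\frac{2}{\zeta}$. The separability of $\pi_{11}$ keeps the ``both abort'' term \emph{finite}, as you say, but it does not make the three-term splitting collapse to one.

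\textbf{What is missing.} The factor-$2$ bound requires showing that $\mathbf r$ itself lies in $\L$ (resp.\ $\Q$), i.e.\ that the three pieces can be reassembled into a single local/quantum distribution rather than merely bounded term by term. That is the content of the paper's one-line ``Notice that $\mathbf r\in\Q$,'' and it is exactly the step your write-up does not supply. If you want to stay on the dual side, the honest version of your argument proves $\nu(\bp)\le 4\,\eff(\bp)+O(1)$; to reach $2$ you would need to exhibit, for an arbitrary $\nu$-optimal $B$, a feasible $B'$ for $\eff$ with $\langle B',\bp\rangle\ge\tfrac12\langle B,\bp\rangle$, and your recentering does not achieve this.
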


\begin{proof}
We sketch the proof for $\gamma_2$ vs $\eff^*$. 
The proof for $\nu$ vs. $\eff$ is similar.

Let $\zeta,\bq$ be an optimal solution for $\eff^*(\bp)$.
Then $ q(a,b|x,y)=\zeta p(a,b|x,y),$  $\forall x,y,a,b\in \X{\times}\Y{\times}\A {\times}\B$,
where $\bq$ outputs $\bot$ with probability $1-\zeta$ for every $x,y$.
Define $\tilde{\bq}\in \Q$ as 
the distribution where the players output according to $\bq$ unless their outcome is $\bot$, in
which case they output independently from the other player, uniformly at random from $\A$ 
or $\B$.
Let
$\mathbf{r}$ be the distribution $\tilde{\bq}$ conditioned on one of the players having output $\bot$ when they ran $\bq$.  
We can write $\tilde{q}(a,b|x,y)= q(a,b|x,y) + (1-\zeta)r(a,b|x,y)$. 
Notice that $\mathbf{r}\in\Q$.
Therefore, on $\A\times\B$,
$\bp = \frac{1}{\zeta}\bq = \frac{1}{\zeta} \tilde{\bq} - \frac{1-\zeta}{\zeta} \mathbf{r}$ .
This is an affine combination of quantum distributions (that do not output $\bot$)
so $\gamma_2(\bp) \leq |\frac{1}{\zeta}| + |-\frac{1-\zeta}{\zeta}|= 2\eff^*(\bp)-1 $.
\end{proof}

For Boolean functions, the gap between $\nu$ and $\gamma_2$
is known to be at most a multiplicative constant (by 
Grothendieck's inequality).
However, there is no immediate way to conclude
similarly for $\eff$ vs.~$\eff^*$.
Since these are stronger bounds, determining the 
largest possible gap between these measures could lead 
to further evidence towards the existence, or not,
of exponential gaps between quantum and classical
communication complexity for total boolean functions.

\section{Detector resistant Bell inequalities}

\subsection{Dual formulation of the efficiency bound}

Proving lower bounds in the standard formulation of $\eff$ is 
difficult since it is formulated as the multiplicative inverse of a maximisation
problem, which translates to a universal quantifier on all the variables of
the optimization problem.
To prove concrete lower bounds, we will 
use the dual formulation, expressed as a maximisation
problem, where it suffices to
check a feasible solution against all of the constraints.

\begin{lemma}[Dual formulation of the efficiency bounds]
\label{lemma:eff-dual}
For any distribution $ \bp$, 
 \begin{align*}
 \eff(\bp)=&\max_{B} && B(\bp)\\
&\textrm{subject to } &&
 B(\bl)\leq 1 & \quad \forall \bl\in\L_{\det}^\bot \\
 \eff^*(\bp)=&\max_{B}&&  B(\bp)\\
&\textrm{subject to } &&
 B(\bq)\leq 1 & \quad \forall \bq\in\Q^\bot
\end{align*}
where $B$ ranges over all real linear functionals, with coefficients $B_{abxy}$
and  for any distribution $\bp$,
$B(\bp) =  \sum_{(a,b,x,y)\in \A\times \B\times \X\times \Y} B_{abxy} p(a,b|x,y)$.
(There are no coefficients for the abort outcomes.)
\end{lemma}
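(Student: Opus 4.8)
The plan is to derive the two dual statements by applying linear programming (LP) duality to the primal programs defining $\eff$ and $\eff^*$ (the latter being treated, once we fix the ``shape'' of the quantum set, in the same formal way, exploiting convexity of $\Q^\bot$; I return to this point below). First I would rewrite the primal for $\eff(\bp)$ in standard form. Recall $\eff(\bp) = 1/\zeta_\opt$ where $\zeta_\opt = \max \zeta$ subject to $\sum_{\bl\in\L_{\det}^\bot} q_\bl\, l(a,b|x,y) = \zeta\, p(a,b|x,y)$ for all $(a,b,x,y)$, $\sum_\bl q_\bl = 1$, and $q_\bl \ge 0$. Treating $\zeta$ as a free variable and the $q_\bl$ as the nonnegative variables, the equality constraints are indexed by $(a,b,x,y)$ and carry dual variables $B_{abxy}\in\Real$ (unconstrained in sign), and the normalization $\sum_\bl q_\bl = 1$ carries a dual variable $\mu\in\Real$. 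I would then write down the dual LP: minimize $\mu$ subject to, for every $\bl\in\L_{\det}^\bot$, $\sum_{a,b,x,y} B_{abxy}\, l(a,b|x,y) \le \mu$ (the constraint coming from the primal variable $q_\bl$), and the constraint coming from the free primal variable $\zeta$, namely $\sum_{a,b,x,y} B_{abxy}\, p(a,b|x,y) = 0$ — wait, that is not quite what we want, so the cleaner route is the following reparametrization.

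Instead of carrying $\zeta$ and $\mu$ separately, I would absorb the normalization: since $\zeta_\opt>0$ (there is always a feasible local protocol with some positive efficiency, e.g.\ a trivial protocol achieving $\zeta = $ constant), set $\bp$ itself as the reference and observe $\eff(\bp) = \min\{ \sum_\bl q'_\bl : \sum_\bl q'_\bl\, l(a,b|x,y) = p(a,b|x,y)\ \forall a,b,x,y,\ q'_\bl\ge 0\}$ after the substitution $q'_\bl = q_\bl/\zeta$ (this mirrors exactly the change of variables used in the proof of Theorem~\ref{thm:equiv-eff-prt}, where $q_{\bl_R} = \zeta\, w_{R,\bl}$). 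Now the primal is a clean minimization over nonnegative $q'_\bl$ with equality constraints indexed by $(a,b,x,y)$ and dual variables $B_{abxy}$. LP duality gives $\eff(\bp) = \max\{ \sum_{a,b,x,y} B_{abxy}\, p(a,b|x,y) : \sum_{a,b,x,y} B_{abxy}\, l(a,b|x,y) \le 1\ \forall \bl\in\L_{\det}^\bot\}$, which is exactly $\max_B B(\bp)$ subject to $B(\bl)\le 1$ for all $\bl\in\L_{\det}^\bot$, using the notation $B(\bp) = \sum B_{abxy} p(a,b|x,y)$. Since there are no coefficients on the $\bot$ outcomes, $B(\bl)$ for $\bl\in\L_{\det}^\bot$ only sees the non-abort part of $\bl$, as required. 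I would then note that strong duality holds because the primal is feasible (every distribution $\bp$ supported on $\A\times\B$ can be written as a nonnegative combination of $\L_{\det}^\bot$-distributions, e.g.\ via a high-communication protocol, cf.\ Theorem~\ref{thm:lower-eff}) and bounded below by $0$, so there is no duality gap and both optima are attained; the feasible region of the dual is nonempty (take $B\equiv 0$).

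For $\eff^*$ the same argument goes through once we replace $\L_{\det}^\bot$ by $\Q^\bot$: the set $\Q^\bot$ is convex (mixtures of quantum strategies are quantum strategies, via a shared classical register appended to the entangled state), so $\eff^*(\bp) = \min\{ t \ge 0 : \tfrac1t \bq = \bp \text{ on } \A\times\B \text{ for some } \bq\in\Q^\bot \}$ is a conic optimization over the convex cone generated by $\Q^\bot$, and convex duality (Lagrangian/separating-hyperplane duality, i.e.\ the conic analogue of LP duality) yields $\eff^*(\bp) = \max\{ B(\bp) : B(\bq)\le 1\ \forall \bq\in\Q^\bot\}$; one must check that the supporting functional $B$ can be taken to be an arbitrary real linear functional on the $(a,b,x,y)$-coordinates, which follows since those are the only coordinates appearing in the constraint $\tfrac1t\bq = \bp$. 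The main obstacle — and the only place where care beyond bookkeeping is needed — is precisely this passage from LP duality to conic/convex duality for $\eff^*$: one has to verify a constraint-qualification (Slater-type) condition so that strong duality holds with attainment, which here reduces to observing that $\Q^\bot$ contains strategies of every efficiency $\zeta\in(0,\zeta^*_{\max}]$ (the quantum cone has nonempty interior in the relevant affine space, because we may always abort with higher probability), ruling out a duality gap. Everything else is the routine transcription of the LP primal into its dual and matching it with the stated $B(\bp)$, $B(\bl)$, $B(\bq)$ notation.
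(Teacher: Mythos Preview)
Your proposal is correct and follows essentially the same approach that the paper sketches in one line (``The first part uses linear programming duality and the second can be shown using Lagrange multipliers''): the substitution $q'_\bl = q_\bl/\zeta$ is exactly the change of variables from Theorem~\ref{thm:equiv-eff-prt}, after which straightforward LP duality gives the $\eff$ statement, and your convex/Lagrangian duality argument for $\eff^*$ with a Slater-type check is precisely what ``Lagrange multipliers'' is gesturing at. You have simply fleshed out the details the paper omits.
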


The first part uses linear programming duality and the second can
be shown using Lagrange multipliers.

These expressions are clearly reminiscent of Bell inequalities,
except for the introduction into the constraints of local strategies that may abort.
Let us compare the classes of Bell inequalities above, with those stemming from the lower bound
$\nu$, introduced in~\cite{DKLR11} and studied in Junge \textit{et al.}~and Buhrman \textit{et al.}~\cite{JPP+10,BRSW11}.
The dual of $\nu$ can be formulated as follows:

\begin{proposition}[\cite{DKLR11}]
\begin{align*}
 \nu(\bp)=&\max_{B}&& B(\bp)\\
&\textrm{subject to } &&
 \abs{B(\bl)}\leq 1 & \quad \forall \bl\in\L_{\det}
\end{align*}
\end{proposition}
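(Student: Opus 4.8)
The plan is to recognize $\nu(\bp)$ as the value of a finite linear program and dualize it, so that the $\ell_1$ objective turns into the family of constraints $|B(\bl)|\le 1$. First I would reduce the minimization to one over \emph{deterministic} local strategies only: since $\L=\conv(\L_{\det})$, every $\bp_i\in\L$ is itself a convex combination of elements of $\L_{\det}$, and substituting and collecting terms shows that in the definition of $\nu(\bp)$ one may assume each $\bp_i$ is local deterministic, of which there are finitely many. Moreover, summing the constraint $\bp=\sum_i q_i\bl_i$ over all output pairs $(a,b)$ at a fixed $(x,y)$ and using normalization of every (quasi-)distribution involved forces $\sum_i q_i=1$ automatically, so no separate normalization constraint is needed. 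Thus
\[
\nu(\bp)=\min\Bigl\{\textstyle\sum_{\bl}|q_\bl| : q_\bl\in\Real,\ \sum_{\bl\in\L_{\det}}q_\bl\, l(a,b|x,y)=p(a,b|x,y)\ \ \forall a,b,x,y\Bigr\}.
\]

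Next I would write this as an honest linear program by splitting $q_\bl=q_\bl^{+}-q_\bl^{-}$ with $q_\bl^{\pm}\ge 0$ and objective $\sum_\bl(q_\bl^{+}+q_\bl^{-})$, introduce a free multiplier $B_{abxy}$ for each equality constraint, and form the Lagrangian
\[
\textstyle\sum_\bl(q_\bl^{+}+q_\bl^{-})+\sum_{abxy}B_{abxy}\bigl(p(a,b|x,y)-\sum_\bl(q_\bl^{+}-q_\bl^{-})\,l(a,b|x,y)\bigr)=B(\bp)+\sum_\bl q_\bl^{+}\bigl(1-B(\bl)\bigr)+\sum_\bl q_\bl^{-}\bigl(1+B(\bl)\bigr),
\]
where $B(\cdot)$ is the linear functional with coefficients $B_{abxy}$. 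Minimizing over $q_\bl^{\pm}\ge0$ is finite exactly when $1-B(\bl)\ge0$ and $1+B(\bl)\ge0$ for every $\bl\in\L_{\det}$, i.e. $|B(\bl)|\le1$, and in that case the minimum equals $B(\bp)$. This produces precisely the claimed dual program $\max_B\{B(\bp):|B(\bl)|\le1\ \forall\bl\in\L_{\det}\}$. Geometrically, $\nu$ is the gauge of the symmetric convex body $\conv(\L\cup-\L)$ and the dual program is its support function over the polar body, so this computation is just the explicit bipolar theorem in this setting.

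Finally, I would invoke strong LP duality to get equality rather than merely the easy inequality $\max_B B(\bp)\le\nu(\bp)$ (which is immediate weak duality: for any feasible pair, $B(\bp)=\sum_\bl q_\bl B(\bl)\le\sum_\bl|q_\bl|$). This is a finite-dimensional LP, so strong duality holds as soon as both programs are feasible: the dual is feasible since $B=0$ works, and the primal is feasible precisely because $\bp$ is nonsignaling. Indeed, every $\bl\in\L_{\det}$ is nonsignaling, so $\aff(\L_{\det})$ is contained in the space of normalized nonsignaling quasi-distributions; the point needing a short argument — and the only real obstacle — is the \emph{converse} inclusion, namely that this containment is an equality, which can be checked by a dimension count or by exhibiting an explicit affine representation of an arbitrary nonsignaling $\bp$ over $\L_{\det}$. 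With that in hand, a nonsignaling $\bp$ lies in $\aff(\L_{\det})$, the primal is feasible, and strong duality yields $\nu(\bp)=\max_B\{B(\bp):|B(\bl)|\le1\ \forall\bl\in\L_{\det}\}$.
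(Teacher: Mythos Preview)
The paper does not prove this proposition; it is stated with a citation to~\cite{DKLR11} and used as a black box for the subsequent comparison with $\eff$. So there is no ``paper's own proof'' to compare against.

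Your argument is the standard one and is correct: rewrite $\nu$ as a finite LP over $\L_{\det}$ via the sign-splitting $q_\bl=q_\bl^{+}-q_\bl^{-}$, dualize to get the constraints $|B(\bl)|\le 1$, and appeal to strong LP duality. The weak-duality direction you wrote is clean, and your Lagrangian computation is right.

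The only point that is not fully nailed down---and you flag it yourself---is that primal feasibility requires $\bp\in\aff(\L_{\det})$, which you reduce to showing that $\aff(\L_{\det})$ coincides with the affine space of normalized nonsignaling arrays. This is true and is exactly where the nonsignaling hypothesis in Definition~\ref{defn:gamma2} is used, but it is not entirely trivial; a clean way to finish is to exhibit, for an arbitrary nonsignaling $\bp$, an explicit signed decomposition over product (hence local deterministic) distributions, e.g.\ via the identity $p(a,b|x,y)=p(a|x)p(b|y)+\bigl(p(a,b|x,y)-p(a|x)p(b|y)\bigr)$ and then expanding the correction term as a difference of two product distributions using the nonsignaling marginals. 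Once that is done (or the dimension count you allude to is carried out), strong duality applies because the primal is feasible with objective bounded below by $0$, and your derivation is complete.
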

We will call the family of Bell inequalities satisfying the constraints of 
the dual formulation  of $\nu$ {\em normalized Bell inequalities},
and those satisfying the constraints of the dual expression for $\eff$, {\em inefficiency resistant Bell
inequalities}.
Theorems~\ref{thm:gamma-upper} and~\ref{thm:lower-eff} tell us that $\frac{1}{2}\nu(\bp)\leq \eff(\bp) \leq 2^{R(\bp)}.$  This is not immediately apparent by comparing the two dual expressions, since 
there are two differences 
between the two families of Bell inequalities. One is a relaxation of the normalized Bell
inequalities by removing  the absolute value in the constraints.  
This increases the value of $\eff$, and one might worry that this will lead to unbounded violations. 
(Of course this is ruled out by the upper bound on $\eff$.)
The second difference goes in the other direction:
the constraint on inefficiency resistant inequalities is required to hold for all local 
strategies including those that may abort. 
Notice that to get a value of $\eff$ that is larger than $\nu$, 
there have to be local strategies that abort
whose Bell value is (far) less than -1, otherwise one is just exhibiting
a normalized Bell inequality with added constraints.

Concretely, how does one go about finding a feasible solution to the dual? 
Consider a distribution $\bp$ for which we would like to find a lower bound.
We construct a Bell inequality $B(\bp)=\sum_{a,b,x,y} B_{abxy} p(a,b|x,y)$ so that $B(\bp)$ is large, and
$B(\bl)$ is small for every $\bl\in\L^\bot$.
The goal is to balance the
coefficients $B_{abxy}$ so that they correlate well with
the distribution $\bp$ and badly with local strategies.
For $B(\bl)$ to be
small for local strategies, we apply a small weight or
even a penalty (negative weight) when the local strategy is incorrect.
For $B(\bp)$ to be large, we assign a positive coefficient
when the outcome is correct, or if it should occur with high probability.  
Weights can be zero when the input is not
contributing to the hardness of the problem.

The dual of the one-way efficiency bound can also be interpreted as Bell inequality violations.
\enlever{
}
\begin{lemma}[Dual formulation for one-way efficiency]
 For any $\bp$, 
\begin{align*}
 \eff^{\rightarrow}(\bp)=&\max_{B_{abxy}} && B(\bp)\\
&\textrm{subject to } && 
 B(\ell) \leq 1  &\quad \forall\bl\in\L_{\det}^{\botA}\\
 \eff^{*,\rightarrow}(\bp)=&\max_{B}&&  B(\bp)\\
&\textrm{subject to } &&
 B(\bq)\leq 1 & \quad \forall \bq\in\Q^\botA
\end{align*}
\end{lemma}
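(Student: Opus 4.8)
The statement to prove is the dual formulation for the one-way efficiency bounds. The plan is to mirror the proof of Lemma~\ref{lemma:eff-dual} (the two-way case), since the one-way bounds $\eff^\rightarrow$ and $\eff^{*,\rightarrow}$ are defined by exactly the same optimization problems but with the sets $\L_{\det}^\bot,\Q^\bot$ replaced by their restrictions $\L_{\det}^{\botA},\Q^{\botA}$ in which only Alice may abort. Nothing about the structure of the argument changes.

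For the classical bound $\eff^\rightarrow(\bp)$: the primal is a linear program with variables $\zeta$ and $q_\bl\geq 0$ indexed by $\bl\in\L_{\det}^{\botA}$, maximizing $\zeta$ subject to $\sum_\bl q_\bl\, l(a,b|x,y)=\zeta\, p(a,b|x,y)$ for all $a,b,x,y$ and $\sum_\bl q_\bl=1$; its optimum is $1/\eff^\rightarrow(\bp)$. First I would take the Lagrangian dual of this LP, introducing multipliers $B_{abxy}$ for the equality constraints and $\mu$ for the normalization constraint. Collecting the coefficient of $\zeta$ gives $1-\sum_{abxy}B_{abxy}p(a,b|x,y)$, which must vanish (so the dual objective is $\mu$ and the constraint is $B(\bp)$ is what we recover after rescaling), and collecting the coefficient of each $q_\bl$ gives the constraint $\sum_{abxy}B_{abxy}\,l(a,b|x,y)\le \mu$, i.e. $B(\bl)\le\mu$ for every $\bl\in\L_{\det}^{\botA}$. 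Rescaling $B$ by $1/\mu$ turns this into $B(\bl)\le 1$ while the objective becomes $B(\bp)$, and strong LP duality (the primal is clearly feasible, e.g. with $\zeta$ small and $q_\bl$ the appropriate convex weights, and bounded) gives $\eff^\rightarrow(\bp)=\max\{B(\bp): B(\bl)\le 1\ \forall\bl\in\L_{\det}^{\botA}\}$. One subtlety worth a sentence: since $\L_{\det}^{\botA}$ contains the strategy where Alice aborts on every input (giving $B(\bl)=0\le 1$ automatically) the feasible region for $B$ is nonempty and the dual optimum is attained.

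For the quantum bound $\eff^{*,\rightarrow}(\bp)$, the program is no longer an LP because $\Q^{\botA}$ is not a polytope, so I would argue exactly as in the second half of Lemma~\ref{lemma:eff-dual}: $\Q^{\botA}$ is a convex (and compact, after fixing a suitable dimension bound or by a standard limiting argument) set, the objective $\zeta$ and the constraints are affine in $\bq$, and one applies convex duality / a separating-hyperplane argument. Concretely, $\eff^{*,\rightarrow}(\bp)^{-1}$ is the largest $\zeta$ such that $\zeta\bp$ lies in the ``slice'' of $\Q^{\botA}$ obtained by restricting to outputs in $\A\times\B$; by convex duality this equals the minimum over linear functionals $B$ normalized so that $\max_{\bq\in\Q^{\botA}}B(\bq)\le 1$ of ... — equivalently, $\eff^{*,\rightarrow}(\bp)=\max\{B(\bp):B(\bq)\le1\ \forall\bq\in\Q^{\botA}\}$, where again $B$ has coefficients only on the non-abort outcomes. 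The only place one must be slightly careful — and the main obstacle such as it is — is justifying strong duality / attainment for the quantum program: one needs $\Q^{\botA}$ to be closed and the relevant slice to have the point $\zeta\bp$ in its relative interior for small enough $\zeta>0$ (or to invoke a Slater-type condition), exactly as in the two-way case. Since the two-way statement is already granted in the excerpt, I would simply note that the one-way case is obtained by the identical argument with $\L_{\det}^\bot\rightsquigarrow\L_{\det}^{\botA}$ and $\Q^\bot\rightsquigarrow\Q^{\botA}$, and that restricting the abort ability to Alice only restricts the constraint sets, leaving every step of the duality computation formally unchanged.
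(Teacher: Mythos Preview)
Your proposal is correct and follows exactly the approach the paper takes: the paper gives no proof for this lemma and, for the two-way analogue (Lemma~\ref{lemma:eff-dual}), only remarks that ``the first part uses linear programming duality and the second can be shown using Lagrange multipliers.'' Your derivation via the LP dual (with the rescaling by $1/\mu$, noting that the always-abort strategy forces $\mu>0$) and the convex-duality argument for $\Q^{\botA}$ is precisely what the paper leaves implicit, specialized from $\L_{\det}^\bot,\Q^\bot$ to $\L_{\det}^{\botA},\Q^{\botA}$.
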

Notice that the constraint need only be verified for those local strategies where Alice is allowed to abort.

\subsection{Bell violation for the Hidden Matching distribution}

We show how to apply the efficiency bound to derive an exponential 
efficiency-resistant Bell violation for 
the Hidden Matching problem~\cite{BJK,KKRW,BRSW11}.
The Hidden Matching problem can be formulated as a game that can be won with probability 1 by a quantum
protocol with zero communication and an $O(n)$-dimensional 
shared quantum state.  
Buhrman \textit{et al.} show a normalized Bell violation of $\Omega(\frac{\sqrt{n}}{\log(n)})$ for this game. 
Our exponential violation 
should also be compared to  the results of Junge  \textit{et al.}~\cite{JPP+10} who  show 
that if a dostribution $\bp$ can be simulated with a quantum protocol with an
$n$ dimensional shared quantum state, then $\nu(p) \leq n$.
As we have discussed above, we are dealing here with a different family of Bell inequalities.  
More precisely,  we establish that there can be an exponential gap between
$\nu$ and $\eff^{\rightarrow}$.


We apply the partition bound method on the Hidden Matching probability distribution that 
we define here.  The Hidden Matching distribution is based on the Hidden Matching
problem of~\cite{BJK} adapted to the setting of games by Buhrman \textit{et al.}~\cite{BRSW11}.
We use many of the ideas and techniques from the latter to give the efficiency bound,
but some added tricks are needed to derive
the exponentially larger Bell violations.

\begin{defn}[Hidden Matching distribution]
Alice receives $x \in \{0,1\}^n$ and Bob receives a matching $M$ over vertices $\{1, \ldots, n \}$.
Alice has to output $a \in \{0,1\} ^ {\log(n)}$ and Bob has to output $d \in \{0,1\}$ and  $(i,j) \in M$ according to the following distribution, which we call the \emph{Hidden Matching distribution}:
$$\HM(a,d,i,j| x,M) =  \left\{
    \begin{array}{ll}
       \frac{2}{n^2} & \mbox{if } \langle a,i \oplus j\rangle \oplus d = x_i \oplus x_j \\
        0   & \mbox{otherwise.}
    \end{array}
\right.$$
\end{defn}

\begin{thm}[\cite{BRSW11}] $\HM\in \Q$, that is, $Q_0^*(\HM)=0$, and $\eff^*(\HM)=1$. The zero-communication quantum protocol for $\HM$ uses an $n$-dimensional shared quantum state. 
\end{thm}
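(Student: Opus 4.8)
The plan is to exhibit an explicit zero-communication quantum protocol that uses a maximally entangled state of local dimension $n$ and produces exactly the distribution $\HM$; the equality $\eff^*(\HM)=1$ will then follow essentially for free. Throughout I identify $\{1,\dots,n\}$ with $\Ftwo^{\log n}$, so $n$ is a power of two (in particular even, as the matching requires), and $\langle a, i\oplus j\rangle$ is the inner product over $\Ftwo^{\log n}$.

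\emph{The protocol.} Alice and Bob share $\ket{\Phi}=\tfrac{1}{\sqrt n}\sum_{k}\ket{k}_A\ket{k}_B$. On input $x\in\{0,1\}^n$, Alice measures her half in the basis $\{\ket{\psi^x_a}:a\in\Ftwo^{\log n}\}$ with $\ket{\psi^x_a}=\tfrac{1}{\sqrt n}\sum_k(-1)^{x_k+\langle a,k\rangle}\ket{k}$ --- equivalently, she applies the diagonal phase $\ket{k}\mapsto(-1)^{x_k}\ket{k}$, then the Hadamard transform of $\Ftwo^{\log n}$, then measures in the computational basis --- and outputs the outcome $a$. On input a matching $M$, Bob measures his half in the orthonormal basis $\{\tfrac{1}{\sqrt2}(\ket{i}+(-1)^d\ket{j}):(i,j)\in M,\ d\in\{0,1\}\}$ (fixing $i<j$ in each edge), and outputs the edge $(i,j)$ together with the bit $d$.

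\emph{Correctness, and $\eff^*(\HM)=1$.} First I would check that each outcome $a$ occurs with probability $1/n$ and collapses Bob's register to $\ket{\phi^{x,a}}=\tfrac{1}{\sqrt n}\sum_k(-1)^{x_k+\langle a,k\rangle}\ket{k}$, a one-line computation from $\braket{k}{\psi^x_a}=\tfrac{1}{\sqrt n}(-1)^{x_k+\langle a,k\rangle}$. Then, for each edge $(i,j)\in M$, the amplitude of $\ket{\phi^{x,a}}$ on $\tfrac{1}{\sqrt2}(\ket{i}+(-1)^d\ket{j})$ is $\tfrac{1}{\sqrt{2n}}\big((-1)^{x_i+\langle a,i\rangle}+(-1)^d(-1)^{x_j+\langle a,j\rangle}\big)$, which vanishes unless $(-1)^d=(-1)^{x_i+x_j+\langle a,i\oplus j\rangle}$, i.e. unless $\langle a,i\oplus j\rangle\oplus d=x_i\oplus x_j$; when it does not vanish the corresponding probability is $\tfrac{1}{2n}\cdot 4=\tfrac{2}{n}$. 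Multiplying by $\Pr[a]=1/n$ gives probability $\tfrac{2}{n^2}$ for every triple $(a,(i,j),d)$ satisfying the predicate and $0$ otherwise, which is exactly $\HM(a,d,i,j\mid x,M)$; summing over the $n/2$ edges and the one admissible $d$ per edge confirms normalization. Hence $\HM\in\Q$ with an $n$-dimensional shared state and $Q_0^*(\HM)=0$. Finally, viewing $\HM$ as an element of $\Q^\bot$ that never aborts, the pair $(\zeta,\bq)=(1,\HM)$ is feasible for the program defining $\eff^*(\HM)$, so $\eta^*\ge 1$; conversely, summing the constraint $q(a,b|x,y)=\zeta\,p(a,b|x,y)$ over $(a,b)\in\A\times\B$ for fixed $(x,y)$ forces $\zeta\le 1$. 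Thus $\eta^*=1$ and $\eff^*(\HM)=1$.

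\emph{Main obstacle.} There is no serious obstacle once the two bases are chosen correctly: the entire content is the character-sum identity showing that the interference pattern in Bob's measurement is governed precisely by $\langle a,i\oplus j\rangle\oplus d=x_i\oplus x_j$, and the bookkeeping (uniformity of $a$, normalization over edges and signs) is routine. The one genuinely conceptual step is to recognize that the known one-way Hidden Matching protocol of~\cite{BJK} --- Alice sends $\tfrac{1}{\sqrt n}\sum_k(-1)^{x_k}\ket{k}$, Bob measures in the matching basis --- can be recast as a no-communication bipartite measurement scenario by replacing the message with a maximally entangled state and letting Alice's choice of Hadamard-basis measurement supply the extra output $a$.
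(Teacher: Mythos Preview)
Your argument is correct. The paper itself does not prove this statement---it simply attributes the result to~\cite{BRSW11}---so there is no in-paper proof to compare against; your construction is precisely the standard one from that reference (the one-way protocol of~\cite{BJK} recast, as you note, as a bipartite measurement on a maximally entangled state), and your derivation of $\eff^*(\HM)=1$ from $\HM\in\Q$ is the immediate observation the paper is implicitly relying on.
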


\begin{thm} \label{thm:HM}
There exists a constant $\mathcal{C}>0$ such that for any $0\leq\epsilon<\frac{1}{2}$, 
$ \eff_{\epsilon}^{\rightarrow}(\HM) \geq  \frac{2^{\frac{\sqrt{n-1}}{2 \mathcal{C}}}} {n}(\frac{1}{2} -  \epsilon)$.
\end{thm}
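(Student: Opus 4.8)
The plan is to exhibit an explicit feasible solution $B$ to the dual program for $\eff^{\rightarrow}_\epsilon(\HM)$, i.e.\ a Bell functional $B_{a,d,i,j,x,M}$ with $B(\bl)\le 1$ for every one-sided-abort local deterministic strategy $\bl\in\L_{\det}^{\botA}$, whose value $B(\HM)$ is exponentially large. By Lemma~\ref{lemma:eff-dual} (one-way version), any such $B$ certifies $\eff^{\rightarrow}(\HM)\ge B(\HM)$, and then the $\epsilon$-error version follows by noting that smoothing changes $B(\bp)$ by at most $\epsilon\cdot\max_{a,d,i,j,x,M}|B_{a,d,i,j,x,M}|$, which is why the bound carries the $(\tfrac12-\epsilon)$ factor. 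So the first step is to write down a candidate $B$. Taking inspiration from the normalized-Bell-inequality construction of Buhrman \emph{et al.}~\cite{BRSW11} for Hidden Matching, I would let $B$ reward the event $\langle a, i\oplus j\rangle\oplus d = x_i\oplus x_j$ and penalize its complement, with weights supported on a combinatorial structure (a subset of matchings, or pairs $(M,x)$) chosen so that the ``correct'' answers for genuinely different strings $x,x'$ are hard to match simultaneously. Concretely I expect $B$ to look like a scaled $\pm 1$ correlation functional restricted to a cleverly chosen family.

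The second and main step is to bound $B(\bl)$ over $\L_{\det}^{\botA}$. Because only Alice may abort, her strategy is: a subset $X_A\subseteq\{0,1\}^n$ of inputs on which she does not abort, and on those a fixed output $a(x)\in\{0,1\}^{\log n}$; Bob always outputs some $d(M),(i,M),(j,M)$. The value $B(\bl)$ then reduces to a sum over $M$ (and over $x\in X_A$) of $\pm$ contributions. The key combinatorial fact to extract is a ``list-decoding''-type or dimension bound: the number of $x$'s for which a fixed Alice-output is consistent with the matching-parity constraint on a random matching is small on average — this is exactly where the $2^{\sqrt{n-1}}$ appears. I would phrase this as: there is a constant $\mathcal{C}$ such that for any fixed function $a(\cdot)$, the fraction of $(x,M)$ pairs where Bob can force the win is bounded by $2^{-\sqrt{n-1}/\mathcal{C}}$ (a hypercontractivity / KKL-style or Fourier-analytic estimate on the Hidden Matching predicate, essentially the content already used in~\cite{BJK,BRSW11}). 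Normalizing $B$ by this quantity makes $B(\bl)\le 1$ for every $\bl$, including aborting ones — here one checks that aborting only removes nonnegative-weight correct terms and so cannot increase $B(\bl)$, or more carefully, that the penalty terms dominate; this is the delicate point where the one-sided abort structure must be used, since a two-sided abort argument would fail.

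The third step is the easy direction: evaluate $B(\HM)$. Since $\HM$ is supported exactly on the ``correct'' set with uniform weight $2/n^2$, and $B$ assigns positive coefficient there, $B(\HM)$ is essentially $\frac{1}{n}$ times the normalization constant $2^{\sqrt{n-1}/(2\mathcal{C})}$ coming from step two — the exponent being halved relative to the abort bound because the normalization enters $B(\HM)$ through a square-root-type loss (this is the standard $\sqrt{n}$-vs-$n$ discrepancy-style gap, and accounts for $\sqrt{n-1}/2$ rather than $\sqrt{n-1}$ in the final statement). Combining, $\eff^{\rightarrow}_\epsilon(\HM)\ge B(\HM)-\epsilon\|B\|_\infty \ge \frac{2^{\sqrt{n-1}/(2\mathcal{C})}}{n}(\tfrac12-\epsilon)$ after rescaling $B$ so that $\|B\|_\infty$ and $B(\HM)$ are comparable.

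I expect the main obstacle to be step two: proving the exponential ($2^{-\Omega(\sqrt n)}$) upper bound on $B(\bl)$ uniformly over \emph{all} aborting one-way local strategies, rather than just over a fixed hard input distribution as in the normalized case. The subtlety is that allowing Alice to abort on a chosen subset $X_A$ gives her a new degree of freedom — she can ``select'' the inputs most favorable to her — so the argument must show that restricting to any subset cannot beat the averaged bound, which forces the Bell coefficients on the ``incorrect'' events to be genuinely negative and carefully calibrated so that $B(\bl)$ stays $\le 1$ no matter which $X_A$ is chosen. Making this robust (and tracking the constant $\mathcal{C}$ through the underlying Fourier estimate on the matching-parity function) is the technical heart of the proof.
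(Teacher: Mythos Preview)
Your high-level plan (use the one-way dual, build a $\pm 1$ correlation functional scoring the predicate $\langle a,i\oplus j\rangle\oplus d=x_i\oplus x_j$, bound $B(\bl)$ via a KKL/Fourier estimate) matches the paper's architecture, and you correctly flag the dangerous degree of freedom: Alice can \emph{choose} her non-abort set $X_A$. But your proposal does not contain the idea that actually tames this, and without it the argument breaks.

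The concrete gap is this. If $B$ is just the (suitably scaled) $\pm1$ correlation functional, then for a deterministic $\bl$ with Alice's non-abort block $X_a$ the contribution is, via Cauchy--Schwarz and KKL, at most a constant times $|X_a|\cdot\log(2^n/|X_a|)/\sqrt{n-1}$. Maximizing over $|X_a|$, this peaks at $|X_a|\approx 2^{n-O(1)}$, where it is of order $2^n/\sqrt{n}$ --- so after normalizing to make $B(\bl)\le 1$ you are left with $B(\HM)$ only polynomially large. That is precisely the normalized Bell inequality $\nu$, not the exponential $\eff^{\rightarrow}$ bound. Your sentence ``aborting only removes nonnegative-weight correct terms and so cannot increase $B(\bl)$'' is backwards: Alice can abort on exactly those $x$ where her sign contribution is negative, and this \emph{increases} $B(\bl)$; calibrating only the ``incorrect'' coefficients does not stop her.

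The paper's fix exploits the one-sidedness of the $\eff$ constraint ($B(\bl)\le 1$, no absolute value): it writes $B=\Phi'+\mu$, where $\Phi'$ is the $\pm1$ functional and $\mu_{x,M}$ is a \emph{uniform negative constant} on every $(x,M)$, correct and incorrect alike. This charges Alice a flat penalty proportional to $|X_a|$ for every input on which she plays. Now split on a threshold $|X_a|\lessgtr 2^{n-\sqrt{n-1}/(2\mathcal C)}$: small blocks are bounded trivially by $1/n$ (there are at most $n$ values of $a$), while for large blocks KKL gives $\Phi'\le \tfrac12\cdot(\text{flat penalty})$, so $\Phi'+\mu\le 0$. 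That threshold is where the exponent $\sqrt{n-1}/(2\mathcal C)$ comes from --- not from any ``square-root loss'' in evaluating $B(\HM)$ --- and the $\tfrac12$ in $(\tfrac12-\epsilon)$ is exactly the cost of the $\mu$ subtraction on the objective ($\Phi'(\HM)$ gives $2^{\sqrt{n-1}/(2\mathcal C)}/n$, and $\mu$ takes back half of it).
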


From the one-way version of Lemma~\ref{lem:eta-epsilon},
 we obtain as a corollary:

\begin{cor}
There exists a constant $\mathcal{C}>0$ such that for any $0<\eta\leq 1$ and $0\leq\epsilon<\eta-\frac{1}{2}$, 
 $ R_{\epsilon}^{\eta,\rightarrow}(\HM) \geq \frac{\sqrt{n-1}}{2 \mathcal{C}}  - \log(n)+ \log(\eta-\epsilon-\frac{1}{2})$.
\end{cor}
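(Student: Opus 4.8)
The plan is to prove Theorem~\ref{thm:HM} by exhibiting an explicit feasible solution to the dual formulation of the one-way efficiency bound, that is, a Bell functional $B$ with coefficients $B_{adij,xM}$ such that $B(\HM)$ is exponentially large in $\sqrt{n}$ while $B(\bl)\le 1$ for every one-sided-abort local deterministic strategy $\bl\in\L_{\det}^{\botA}$; the $\epsilon$-error version then follows immediately from the definition $\eff_\epsilon^\rightarrow(\bp)=\min_{|p'-p|_1\le\epsilon}\eff^\rightarrow(\bp')$ by noting that perturbing $\bp$ by $\epsilon$ in $\ell_1$ changes $B(\bp)$ by at most $\epsilon\max|B_{adij,xM}|$, which (after normalising the coefficients appropriately) costs only the factor $(\tfrac12-\epsilon)$. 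First I would recall the structure from Buhrman \emph{et al.}~\cite{BRSW11}: the key combinatorial fact about Hidden Matching is that a one-way classical protocol with limited communication, equivalently a small rectangle on Alice's side, cannot do much better than guessing; quantitatively, for a random matching $M$ and a fixed small set $X\subseteq\{0,1\}^n$ of Alice's inputs, the fraction of pairs $(i,j)\in M$ on which the parity $x_i\oplus x_j$ is ``predictable'' from membership in $X$ is tiny unless $|X|$ is exponentially large — this is where the $2^{\sqrt{n}}$ comes from, via a Fourier/KKL-type or hypercontractivity argument already present in the Hidden Matching lower bound literature.

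The main work is choosing the coefficients. I would set $B_{adij,xM}$ to be a positive reward, roughly $\frac{n^2}{2}$ times an indicator that the answer $(a,d,i,j)$ is correct for $(x,M)$ (so that $B(\HM)$ picks up the full probability mass on the support and evaluates to something like a normalisation constant times $1$), minus a penalty term that is large and negative precisely on the input pairs where a one-sided strategy could cheat by not aborting. The crucial point emphasised in the paper's discussion of the dual is that $\eff^\rightarrow$ can exceed $\nu$ only because local strategies that \emph{abort} can be pushed to Bell value far below $-1$; so the penalty must be arranged so that any $\bl\in\L_{\det}^{\botA}$ which refuses to abort on many inputs incurs a penalty that swamps its reward, forcing $B(\bl)\le1$. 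Concretely, Alice's non-abort set is some $X\subseteq\{0,1\}^n$; on the sub-rectangle $X\times\{\text{all }M\}$, Bob's best deterministic response achieves correctness bounded by the combinatorial lemma above, and I would tune the reward/penalty ratio so that $B(\bl)\le 1$ reduces exactly to that correctness bound, which holds unless $|X|\ge 2^{\Omega(\sqrt n)}$; and if $|X|$ is that large, then the reward itself is automatically small because Alice, having to output $a\in\{0,1\}^{\log n}$ deterministically on a huge set, is almost always wrong. Balancing these two regimes against a common threshold is the heart of the construction and I expect it to require the ``added tricks'' the authors allude to — in particular getting the scaling so that the \emph{same} functional handles both the small-$X$ and large-$X$ cases simultaneously, and handling Bob's freedom to choose which $(i,j)\in M$ to output.

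After the functional is fixed and the two regimes are reconciled, the remaining steps are routine: (i) evaluate $B(\HM)$ exactly using the explicit form of $\HM$, obtaining the claimed $\tfrac{1}{n}2^{\sqrt{n-1}/(2\mathcal C)}$ up to constants absorbed into $\mathcal C$; (ii) verify $B(\bl)\le1$ for abort strategies $\bl$ as well, which is easy since aborting only removes (nonnegative after a shift, or otherwise controlled) contributions — one must just check the penalty is set up so aborting never helps push the value above $1$; (iii) apply the smoothing/perturbation argument for the $\epsilon$-error statement; and (iv) deduce the corollary by combining with the one-way analogue of Lemma~\ref{lem:eta-epsilon}, $R_\epsilon^{\eta,\rightarrow}(\bp)\ge\log\eff^\rightarrow_{\epsilon+(1-\eta)}(\bp)\ge\log\eff^\rightarrow(\bp')$ and Theorem~\ref{thm:lower-one-way}, taking logarithms. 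The single hardest step, to be clear, is the combinatorial lower bound controlling how well a one-sided local strategy can predict the matching parities on a set of Alice-inputs of a given size — this is exactly the Hidden Matching hardness, and everything else is bookkeeping around it; I would import it (in the quantitative form with the $\sqrt n$ in the exponent) from~\cite{BJK,KKRW,BRSW11} rather than reprove it.
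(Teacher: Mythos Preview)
Your overall strategy matches the paper's exactly: exhibit a feasible dual functional $B=\Phi'+\mu$ for $\eff^{\rightarrow}$, split the analysis of local strategies into two regimes by the size of Alice's non-abort set, evaluate $B(\HM)$, smooth for $\epsilon$, and then invoke the one-way version of Lemma~\ref{lem:eta-epsilon} together with Theorem~\ref{thm:lower-one-way} to get the corollary. The derivation of the corollary in your step~(iv) is correct.

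However, your two-regime analysis is inverted, and the partition is at the wrong level. First, the split is not on Alice's total non-abort set $X$ but on each $X_a=\{x:\text{Alice outputs }a\}$ separately, since $X=\bigcup_a X_a$ and the Bell value decomposes as a sum over $a$. Second, you write that ``a small rectangle on Alice's side cannot do much better than guessing'' and that the combinatorial lemma handles the small-$X$ case --- this is backwards. The KKL/Fourier bound $\sum_{i,j}\beta_{i,j}^2\le \big(\mathcal C\log(2^n/|X_a|)\big)^2$ controls the bias of the parities $x_i\oplus x_j$ precisely when $|X_a|$ is \emph{large} (the larger $X_a$, the less predictable the parities). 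In the paper's argument, the small-$|X_a|$ case is handled trivially: one just bounds $\sum\Phi'\le \mathrm{const}\cdot|X_a|$ termwise, and since $|X_a|\le 2^{n-\sqrt{n-1}/(2\mathcal C)}$ this gives $\le 1/n$. The large-$|X_a|$ case is where KKL plus Cauchy--Schwarz bounds $\sum\Phi'$ by something proportional to $|X_a|$, and the uniform penalty $\mu_{x,M}$ (a single negative constant per $(x,M)$, not a selective penalty on ``cheating'' inputs) is exactly calibrated to cancel it, giving $\le 0$. Your alternative intuition that for large $X$ ``Alice is almost always wrong'' does not do the work: Alice's output $a$ is fixed on $X_a$, but Bob still gets to choose $(i,j,d)$, and without KKL you have no control on his bias.
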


We give a bound on $  \eff_{\epsilon}^{\rightarrow}(\HM) =\min_{\{p' : |p' - \HM|_1 \leq \epsilon\}}(\eff^{\rightarrow}(\bp'))$.  
To do that, we need to find a Bell functional which is bounded from above for any  local deterministic strategy where Alice may abort and which is large for any distribution close to the Hidden Matching distribution.
The main idea is to define a Bell functional which translates the bias of the winning probability of the strategy putting negative weights when the strategy fails and positive ones when it succeeds. Then we show that for any local deterministic strategy where Alice may abort, this bias is small, whereas it is big for any distribution close to the Hidden Matching distribution.
\enlever{To give an upper bound on the Bell value of any local deterministic strategy that may output $\bot$,
we will use the fact that such a strategy leads to a partition of Alice's inputs where she doesn't abort into rectangles. We will show an upper bound on the bias of each rectangle using the analysis from \cite{BRSW11}. However in this way the bound obtained depends on the size of this rectangle, and we will need to consider two different cases. If the rectangle is small enough, then we obtain a sufficiently good upper bound. If the rectangle is too big, we will need to subtract from the $B$s some constant that we will call $\mu$. Notice that in $\eff$, the constraint is that the Bell value of any local deterministic strategy is less than 1, and it is not  necessary the case for the absolute value as in $\nu$, that's why we can subtract without violating the constraint. The overall weight of those $\mu$ will not affect too much the Bell value of a distribution close to the Hidden Matching distribution which will remain large enough.}
Details of the proof are given in Appendix~\ref{app:HM}. 

\enlever{ the fact that the larger the rectangle where neither player aborts, the
smaller the success probability of a local deterministic strategy. 

To express the success probability of $\bl$, we will use $B$-coefficients which will be positive when the $\bl$-answer is good and negative if it is not. Normalizing these coefficients, we will obtain an expression of the form : the size of the rectangle times the difference between the probability that the answer is good and the probability that it is not.
We can bound this value using the fact that each local and deterministic strategy can only win with a small probability.

However, this idea is not sufficient because even if the success probability decreases when the rectangle size increases, the rectangle size appears as  a multiplicative factor in our expression, and if $R$ is very large, our expression can be too large. To fix this, we subtract from the $B$s some constant that we will call $\mu$. It will be useful just for the large rectangles.}

\enlever{
\subsection{Bell violation for the Khot Vishnoi distribution}

???????? IL Y A UN PROBLEME AVEC CETTE SECTION, ON VA PEUT-ETRE L'ENLEVER

We study the Khot Vishnoi game for which 
a large normalized Bell inequality violation is known~\cite{KV,BRSW11}.  We reformulate it as a
distribution which can be simulated with shared entanglement and
no communication, and give a randomized communication lower bound of $\log(n)$ for
this distribution.  The proofs use many of the techniques Burhman \textit{et al.}~used 
to establish large Bell inequality violations~\cite{BRSW11}.

Buhrman \textit{et al.} introduce the Khot Vishnoi game to exhibit a Bell inequality violation~\cite{KV, BRSW11}. They present a quantum protocol without any communication that wins the game with a good probability whereas for any classical protocol the winning probability is small. Here, we define the probability distribution that is exactly simulated by their quantum protocol and we show that it requires at least $\frac{\delta}{1 - \delta} \log(n) + \log((1 - 2 \delta)^2 - \epsilon)$ bits of communication to be simulated classically with $\epsilon$ error, for any $0\leq\delta \leq \frac{1}{2}$.
\begin{defn} 
Let $\{0,1\}^n$ be the group of all n-bit strings with $\oplus$, let $H$ be the subgroup containing the $n$ Hadamard codewords. For $a \in \{0,1\}^n$ we define $v^a = (\frac{(-1)^{a_i}}{\sqrt{n}})_{i \in [n]}$ which corresponds to the following quantum state: $\frac{1}{\sqrt{n}} (-1)^{a_i} \ket{i}$. Alice and Bob, each receive a coset of H and they have to output an element (a,b) of their coset with probability $\frac{\langle v^a,v^b\rangle^2}{n}$. Then 
$$\KV(a,b|U,V) =  \left\{ 
	\begin{array}{ll}
		\frac{\langle v^a,v^b\rangle^2}{n} & \text{ if } {a \in U} \text{ and } {b \in V}\\
		0 		& \text{otherwise}
	\end{array}\right.$$
\end{defn}

\begin{thm}[\cite{BRSW11}] $\KV\in\Q$, that is, $Q_0^*(\KV)=0$.\end{thm}

\begin{thm}
For any $0\leq\delta\leq\frac{1}{2}$ and $0\leq\epsilon<(1-2\delta)^2$, we have  $R_\epsilon(\KV)\geq\frac{\delta}{1 - \delta} \log(n) + \log((1 - 2 \delta)^2 - \epsilon)$.
\end{thm}

By applying Lemma~\ref{lem:eta-epsilon}, we get the following bound on communication with $\eta$ efficiency.
\begin{cor} For any $0\leq\delta<\frac{1}{2}$, efficiency $0<\eta\leq 1$, and error $\epsilon<(1-2\delta)^2-(1-\eta)$,
$$R_{\epsilon}^{\eta}(\KV) \geq  \frac{\delta}{1 - \delta} \log(n) + \log ((1 - 2 \delta)^2 - \epsilon-(1-\eta)). $$
\end{cor}
Details are given in Appendix~\ref{app:KV}. 
}
\section{Upper bounds for one- and two-way communication
}

The efficiency bound subsumes most known lower
bound techniques for randomized communication complexity. How close
is it to being tight?    An upper bound on randomized communication
is proven by Massar~\cite{M01}.  We give a similar bound for entanglement assisted
communication complexity in terms of $\eff^*$.  Our bounds are stated for
zero-error communication complexity where the players may abort with some
probability $1-\eta$.  The weaker statement with $\epsilon$ error can
be derived using Lemma~\ref{lem:eta-epsilon}.

\begin{thm}\label{thm:upper-two-way}
For any distribution $\bp$ with outputs in $\A,\B$,
\begin{myenumerate} 
\item~\cite{M01} $R^{\eta,\parallel}(\bp) \leq \log(\frac{1}{1-\eta})\eff(\bp)\log(\#(\A\times\B))$
\item $R^{*,\eta,\parallel}(\bp) \leq \log(\frac{1}{1-\eta}){\eff^*(\bp)}\log(\#(\A\times \B))$
\item $R^{*,\eta}(\bp) \leq O\left(\sqrt{ \log(\frac{1}{1-\eta})\eff^*(\bp)}\right)$
\end{myenumerate}
\end{thm}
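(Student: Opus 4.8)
The plan is to obtain all three protocols by parallel repetition of an optimal zero-communication protocol and then to extract a usable output from the runs. Fix an optimal feasible solution of $\eff(\bp)$ (resp.\ $\eff^*(\bp)$): a zero-communication protocol $\Pi$ with shared randomness (resp.\ shared entanglement) in which each player may output $\bot$, such that for every input the probability that neither player outputs $\bot$ equals $\zeta := 1/\eff(\bp)$ (resp.\ $1/\eff^*(\bp)$), and conditioned on that event the joint output is distributed exactly as $\bp$. Set $k := \lceil \log(\frac{1}{1-\eta})\,\eff(\bp)\rceil$ (resp.\ with $\eff^*$); since $\log\frac{1}{1-\zeta}\ge\zeta$, we have $(1-\zeta)^k\le 1-\eta$. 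The protocol runs $k$ independent copies of $\Pi$ with fresh shared randomness/entanglement; call copy $i$ \emph{good} if neither player aborted in it. The events $G_i=\{\text{copy }i\text{ good}\}$ are independent, each of probability $\zeta$, and conditioned on $G_i$ the outputs $(a_i,b_i)$ of copy $i$ follow $\bp$ and are independent of all other copies.

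For parts (1) and (2) (simultaneous messages), Alice sends the referee her full list $(a_1,\dots,a_k)\in(\A\cup\{\bot\})^k$, Bob sends $(b_1,\dots,b_k)$, and the referee outputs $(a_r,b_r)$ for the first good copy $r$, aborting if none exists. The abort probability is $\Pr[\bigcap_i\overline{G_i}]=(1-\zeta)^k\le 1-\eta$, and because each $G_i$ determines only copy $i$'s output, conditioning on the existence of a good copy leaves the output distributed exactly as $\bp$. The communication is $O(k\log\#(\A\times\B))=O(\log(\frac{1}{1-\eta})\,\eff(\bp)\,\log\#(\A\times\B))$; the entanglement-assisted case is identical except that the copies of $\Pi$ use shared entanglement while the messages stay classical, giving (2) with $\eff^*$ in place of $\eff$ (and (1) recovers the bound of Massar, up to the precise constants, which one can tighten by sending an abort pattern followed only by the non-$\bot$ outputs).

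For part (3) (two-way, shared entanglement) we cannot afford the full lists. Instead Alice holds $\alpha\in\{0,1\}^k$ with $\alpha_i=1$ iff she did not abort in copy $i$, Bob holds $\beta$ likewise, and copy $i$ is good iff $\alpha_i\wedge\beta_i=1$; finding such an index is exactly the search version of set disjointness. This has a distributed quantum protocol using $O(\sqrt k)$ qubits of communication (Buhrman--Cleve--Wigderson, with the logarithmic overhead removed by Aaronson--Ambainis), hence $O(\sqrt k)$ bits of classical communication with shared entanglement via teleportation. We boost this search so that whenever a good copy exists one is found except with small probability, and we verify the returned candidate $r$ with $O(1)$ further bits; if there is no good copy, or verification fails, the protocol aborts, otherwise both players know $r$ and output $a_r$ and $b_r$ respectively without any more communication. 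Since the returned index is a function of $(\alpha,\beta)$ alone and never of the sampled outputs, the independence argument above again makes the output exactly $\bp$ conditioned on not aborting, and absorbing the constant-factor boosting into $k$ keeps the abort probability at most $1-\eta$. The total communication is $O(\sqrt k)=O(\sqrt{\log(\frac{1}{1-\eta})\,\eff^*(\bp)})$.

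The parallel-repetition and ``first good copy'' bookkeeping is routine; the step that needs care is part (3): invoking the quantum search primitive at communication cost $O(\sqrt k)$ (in particular without a spurious $\log k$, and coping with an unknown, possibly zero, number of good copies), together with checking that conditioning on not aborting yields \emph{exactly} $\bp$ --- which relies precisely on the search depending only on the abort pattern and not on the sampled outputs.
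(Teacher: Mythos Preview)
Your proof is correct and follows essentially the same approach as the paper: parallel repetition of the optimal zero-communication protocol, with the referee (parts (1)--(2)) or the players via the quantum disjointness/search protocol (part (3)) locating a good copy. Your treatment is in fact more careful than the paper's own proof, which for part (3) simply invokes the quantum disjointness protocol on the abort patterns without spelling out the independence argument or the handling of search failure.
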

\begin{proof}
For the first item, let $P$ be a zero-error, zero-communication protocol with shared randomness for $\bp$
which has efficiency $\zeta=\frac{1}{\eff(\bp)}$.
Alice and Bob run the protocol  $N=\lceil\log(\frac{1}{1-\eta})\frac{1}{\zeta}\rceil$ times
and send their outcome to the referee in each run.
If the referee finds a valid run (where neither player aborts), he produces the corresponding outputs;
otherwise he aborts. 
Since each run has a probability $\zeta$ of producing a valid run,
the probability that the referee aborts is $(1-\zeta)^N\leq e^{-\zeta N}\leq 1-\eta$.

For the second item, the proof is the same but the players
share entanglement to run the protocol with shared entanglement
and efficiency $\frac{1}{\eff^*(\bp)}$.

If multiple rounds of communication are allowed, then a quadratic speedup is
possible in the quantum case by using a protocol for disjointness~\cite{BCW98,HW01,AA03}
on the input $u,v$ of length $N$, where $u_i$ is~0 if Alice aborts in the $i$th run and 1 
otherwise, similarly for $v$ with Bob.  
\end{proof}
\enlever{
If we could use the fact that a player's  probability of aborting is independent 
of that of the other player, that is, each player independently aborts a given run
with probability $\sqrt{\eta}$, then the above bounds can be improved
by a quadratic factor.  We can define a stronger bound $\eff^{\ind}$ which
adds this constraint while still remaining a lower bound on communication
complexity.  However this is a non-linear constraint.  Details will be
provided in the full version of this paper.
}

For one-way communication complexity,
the quantum partition bound is tight, up to arbitrarily small inefficiency.
We give the results for quantum communication since the rectangle bound
is already known to be tight for randomized communication complexity~\cite{JKN}.

\begin{thm}\label{thm:upper-one-way}
For any distribution $\bp$ and efficiency $\eta<1$, $Q^{*,\eta,\rightarrow}_{0}(\bp)\leq\frac{1}{2} \log(\eff^{*,\rightarrow}(\bp))+\log\log (1/(1-\eta))$.
\end{thm}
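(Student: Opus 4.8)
The plan is to start from an optimal (or near-optimal) solution to the one-way quantum efficiency program defining $\eff^{*,\rightarrow}(\bp)$, which gives a zero-communication protocol with shared entanglement in which only Alice aborts, reproducing $\bp$ conditioned on Alice not aborting, and whose efficiency is $\zeta = 1/\eff^{*,\rightarrow}(\bp)$. The key structural fact of one-way protocols that we exploit is that whether Alice aborts depends only on her input $x$ and her share of the entanglement/randomness, not on $y$; so if Alice does not abort, she can announce her output $a$ to Bob, who then produces $b$ consistently. In the one-shot protocol this requires $\log(\#\A)$ bits of communication, which is too much; the goal is to bring the communication down to roughly $\tfrac12\log(1/\zeta)$ qubits by amplifying and using quantum fingerprinting / state compression to certify agreement rather than sending $a$ outright.

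Concretely, first I would run $N = \lceil \log(1/(1-\eta))/\zeta \rceil$ independent copies of the one-shot protocol in parallel; with probability at least $1 - (1-\zeta)^N \geq \eta$ at least one copy is a ``valid'' run in which Alice does not abort. Alice must now communicate to Bob the index $i$ of a valid run together with enough information for Bob to recover $a^{(i)}$ and output a consistent $b^{(i)}$. Naively this is $\log N + \log\#\A$ bits; the trick to get the $\tfrac12\log$ scaling is that in a quantum one-way protocol Bob already holds his halves of all $N$ entangled states, so Alice need only send her post-measurement classical data for the chosen run, and one should instead think of the whole thing as: amplify to make the aborting probability at most $1-\eta$, so that the effective ``number of runs that matter'' is $\Theta(1/\zeta)$, and then send the index of a good run using $\log(1/\zeta) + \log\log(1/(1-\eta))$ bits — then use teleportation in reverse, i.e. the $\tfrac12$ comes from the fact that $\eff^{*,\rightarrow}$ is defined with the $\tfrac12$ already built in (Theorem~\ref{thm:lower-one-way}), so the matching upper bound is obtained by a protocol sending $\log(\eff^{*,\rightarrow}(\bp))$ classical bits which, via superdense coding on the shared entanglement, costs only $\tfrac12\log(\eff^{*,\rightarrow}(\bp))$ qubits of communication.

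So the cleanest route: (i) take the optimal $\Q^{\botA}$ solution, (ii) amplify by repeating $N=\Theta(\eff^{*,\rightarrow}(\bp)\log(1/(1-\eta)))$ times so the failure probability (all copies abort on Alice's side) is at most $1-\eta$, (iii) Alice identifies the first valid copy and must send Bob its index in $\{1,\dots,N\}$ plus her output — but observe that for a \emph{one-way} protocol, once Bob knows which copy is valid and Alice's classical measurement record for that copy, he computes $b$; since each copy's shared state can be reused only once, one must be slightly careful, but the information Alice sends is just $\lceil \log N\rceil + \lceil\log\#\A\rceil$ classical bits, (iv) replace the $\log\#\A$ term: instead of Alice sending $a$, note we may assume $\bp$ has been pre-processed so Bob's task given the valid index is deterministic in Alice's share, or absorb $\log\#\A$ since the statement's bound in fact hides it — rechecking the theorem statement, the bound is purely $\tfrac12\log(\eff^{*,\rightarrow}(\bp)) + \log\log(1/(1-\eta))$, which strongly suggests the intended protocol sends only $\log N = \log(\eff^{*,\rightarrow}(\bp)) + \log\log(1/(1-\eta))$ \emph{classical} bits — the index of a valid run — with Bob recovering everything else from his share of that run's entangled state (which carries Alice's output implicitly in the one-way quantum setting), and then (v) convert these $\log N$ classical bits into $\tfrac12\log N$ qubits via superdense coding using pre-shared EPR pairs, giving exactly $\tfrac12\log(\eff^{*,\rightarrow}(\bp)) + \tfrac12\log\log(1/(1-\eta)) \le \tfrac12\log(\eff^{*,\rightarrow}(\bp)) + \log\log(1/(1-\eta))$ qubits.

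The main obstacle I anticipate is step (iii)–(iv): justifying rigorously that in the one-way quantum model Bob can recover a consistent output $b$ from his half of the valid run's state once he learns only the \emph{index} of that run — i.e. that Alice's output need not be communicated. This works because a one-way quantum protocol for $\bp$ with Alice aborting is, run by run, a process where Bob's measurement on his half of the shared state (together with knowing that Alice did not abort and nothing else) already yields his share of the correlated output; the conditioning on ``Alice did not abort'' is exactly the event Bob is told about. Making this precise requires unpacking the definition of $\Q^{\botA}$ carefully and checking that the post-selected joint distribution on $(a,b)$ is reproduced when Bob simply measures his reduced state conditioned on the announced valid index. The remaining steps — the Chernoff bound for the amplification, and superdense coding — are routine.
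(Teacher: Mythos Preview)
Your overall route --- take an optimal $\bq\in\Q^{\botA}$ with efficiency $\zeta$, repeat $N=\lceil\zeta^{-1}\log(1/(1-\eta))\rceil$ times, have Alice send the index of a valid run, then halve the classical cost via superdense coding --- is exactly the paper's proof. But the step you flag as the ``main obstacle'' is not an obstacle, and your detour worrying about an extra $\log\#\A$ term stems from the same confusion. The point is that $\bq$ is a \emph{zero-communication} distribution: when Alice and Bob run a copy, \emph{both} of them measure their shares of the entanglement and obtain outputs immediately, with no communication at all. So after $N$ independent runs, Bob already holds $b_1,\ldots,b_N\in\B$ (he never aborts in $\Q^{\botA}$), Alice holds $a_1,\ldots,a_N\in\A\cup\{\bot\}$, and each pair $(a_i,b_i)$ is jointly distributed as $\bq(\cdot,\cdot\mid x,y)$. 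When Alice sends an index $i$ with $a_i\neq\bot$, Bob simply outputs the $b_i$ he already has, and Alice outputs her $a_i$; nothing further is transmitted and there is nothing for Bob to ``recover'' from his half of the state. The conditional distribution of $(a_i,b_i)$ given the event $a_i\neq\bot$ is exactly $\bp$, since $q(a,b\mid x,y)=\zeta\,p(a,b\mid x,y)$ for $a\in\A,b\in\B$ and $\Pr_\bq[a\neq\bot\mid x,y]=\zeta$; independence of the runs ensures that choosing, say, the first valid index introduces no bias. There is no subtle post-selection on Bob's reduced state to analyze --- once both parties have measured, this is ordinary conditioning on a jointly sampled classical pair.
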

\begin{proof}
 Let $(\zeta,\bq)$ be an optimal solution for $\eff^{*,\rightarrow}(\bp)$. 
For any $x,y$, if we sample $a,b$ according to $\bq$, 
$\Pr_{\bq}[a\neq \bot|x]=\zeta$ and $\Pr_{\bq}[a,b|x,y]=\zeta p(a,b|x,y)$ 
for all $a,b\neq \bot$ and all $x,y$. 
Let Alice and Bob simulate this quantum distribution $N=\lceil\log(\frac{1}{1-\eta})\frac{1}{\zeta}\rceil$ times, keeping a record of the outputs $(a_i,b_i)$ for $i\in[N]$. Since this distribution is quantum, this requires no communication (only shared entanglement). Alice then communicates an index $i\in[N]$ such that $a_i\neq\bot$, if such an index exists, or just a random index if $a_i=\bot$ for all $i\in[N]$. Alice and Bob output $(a_i,b_i)$ corresponding to this index.

The correctness of the protocol follows from the fact that $\Pr_\bq[a_i=\bot(\forall i)]=(1-\zeta)^N\leq e^{-\zeta N}\leq 1-\eta$. The protocol then requires $\log N=-\log\zeta+\log\log(\frac{1}{1-\eta})$ bits of classical communication. Using superdense coding, this can be replaced by $\frac{1}{2}\log N$ qubits of quantum communication.
\end{proof}

\enlever{[Il faut sans doute déplacer le lemme suivant, mais je trouve que ça vaudrait la peine de le mettre quelque part.]} 
Finally, we show that $R^{\eta,\rightarrow}_{0}$ depends on $\eta$ by at most an additive constant.
The same is also true in the quantum model.
\begin{lemma}
For any distribution $\bp$ and efficiencies $0<\eta\leq\eta'<1$, $R^{\eta,\rightarrow}_{0}(\bp)\leq R^{\eta',\rightarrow}_{0}(\bp)\leq R^{\eta,\rightarrow}_{0}(\bp)-\log\eta+\log\log (1/(1-\eta'))$.
\end{lemma}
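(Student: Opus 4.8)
The first inequality $R^{\eta,\rightarrow}_{0}(\bp)\le R^{\eta',\rightarrow}_{0}(\bp)$ is pure monotonicity: since $\eta\le\eta'$, any one-way zero-error protocol of efficiency at least $\eta'$ is in particular a protocol of efficiency at least $\eta$, so lowering the efficiency requirement cannot increase the optimal communication. All the content is in the second inequality, and the plan is an efficiency-amplification argument in the spirit of the proof of Theorem~\ref{thm:upper-one-way}, using the defining feature of the one-way model that only Alice may abort, so that Alice alone knows which runs of a protocol have succeeded.

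Concretely, let $P$ be an optimal one-way zero-error protocol for $\bp$ of efficiency $\eta$, with $c=R^{\eta,\rightarrow}_{0}(\bp)$ bits of communication; Alice's abort event is a function of her input $x$ and the shared randomness only, and for every $x$ it has probability at most $1-\eta$. Using fresh shared randomness the players prepare $N=\lceil\frac1\eta\ln\frac1{1-\eta'}\rceil$ independent copies of $P$. Alice determines, for each copy $i\in[N]$, whether she aborts in that copy; if she aborts in all $N$ copies she aborts the new protocol, otherwise she takes the least index $i$ of a successful copy, sends Bob $i$ (in $\lceil\log N\rceil$ bits) together with her copy-$i$ message of $P$ ($c$ bits), and outputs her copy-$i$ output, while Bob runs the receiver side of $P$ on copy $i$ with the received message (Bob never aborts). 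The new protocol aborts exactly when Alice aborts in all $N$ copies, which for every $x$ has probability at most $(1-\eta)^N\le e^{-\eta N}\le 1-\eta'$, so its efficiency is at least $\eta'$; and conditioned on the chosen index being $i$, the randomness of copy $i$ is distributed exactly as $P$'s randomness conditioned on Alice not aborting (the conditioning on the earlier copies aborting is independent of copy $i$), so the outputs follow $\bp$ exactly. The communication is $c+\lceil\log N\rceil\le R^{\eta,\rightarrow}_{0}(\bp)-\log\eta+\log\log(1/(1-\eta'))$ up to the routine rounding of logarithms, as in Theorem~\ref{thm:upper-one-way}.

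The quantum statement follows from the identical construction: the players use $N$ independent copies of the shared entangled state together with shared randomness, Alice runs $N$ copies of the optimal one-way quantum protocol of efficiency $\eta$, selects a copy in which she does not abort, and forwards its index and its quantum message; by superdense coding the $\lceil\log N\rceil$-bit index costs only half as many qubits. The step I expect to require the most care is the exactness argument in the second paragraph --- verifying that choosing the first successful copy does not bias the conditional output distribution away from $\bp$ --- and, preliminary to it, checking in the one-way (one-sided-abort) model that Alice's abort depends only on her own side, which is what allows her to carry out the selection with no message from Bob.
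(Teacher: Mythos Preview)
Your proposal is correct and matches the paper's approach essentially verbatim: the paper's proof sketch says to proceed ``as above'' (i.e., as in Theorem~\ref{thm:upper-one-way}), starting from an $\eta$-efficient one-way protocol instead of a zero-communication quantum distribution, running $N=\lceil\frac{1}{\eta}\log\frac{1}{1-\eta'}\rceil$ independent copies, and having Alice forward the index of a successful copy together with that copy's message. Your handling of the two points you flag as delicate---that in the one-sided-abort model Alice's abort depends only on $(x,r)$, and that selecting the first successful copy leaves the conditional output distribution exactly $\bp$ by independence of the copies---is exactly the verification the paper's sketch leaves implicit.
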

\enlever{
Note that in this one-way scenario, the reduced efficiency $\eta$ only affects Alice's output, while Bob's efficiency is always assumed to be perfect.}
\begin{proof}[Proof (sketch)]
 The proof is as above, except that we start from a  protocol for $\bp$ with efficiency $\eta$ instead of a quantum distribution. Note that Alice only needs to send the communication corresponding to the original protocol for the successful attempt.
\end{proof}
\enlever{
\subsection{Improving efficiency using SM communication}

\begin{thm}
For any distribution $\bp$, $\eff(\bp)<\eta\leq 1$, 
$R^\eta_0(\bp)\leq \frac{4}{\eta^2} \sqrt{\eff(\bp)}\log(\eff(\bp))$.
\end{thm}
\begin{proof}
Let $\zeta=1/\eff(\bp)$ and let $P$ be a zero-error
classical protocol with shared randomness that simulates $\bp$ 
exactly with efficiency $\zeta$.
We construct a simultaneous messages protocol that simulates $\bp$ exactly with shared
randomness and communication that achieves arbitrarily better 
efficiency $\eta$.  
The expected number of runs
so that neither player aborts in at least one of the runs 
is $1/\zeta$, so 
by Markov's inequality, if we run the protocol 
		$N=\frac{1}{\delta}\frac{1}{\zeta}$ 
times
then the probability that there is at least one valid
run is at least $1-\delta$.  
?????????????From the proof of the lower bound,
we may assume that the inefficiency occurs independently
between the players and that the probability that each player
aborts is $\sqrt{\zeta}$.  
?????????
Under the assumption that the players abort independently of
one another, and with equal probability, the expected number of runs
where Alice does not abort is $\sqrt{\zeta}N$,
and with probability at least $1-\delta'$ there were no more than
$1/\delta\sqrt{\zeta}N$ runs where she did not abort. 
She sends indices of these runs to the referee, together with her
output in each run, using 
	$\frac{1}{\delta'}\sqrt{\zeta}N\log(N) \log(\#A)
		\leq \frac{1}{\delta'} \frac{1}{\delta}\sqrt{\frac{1}{{\zeta}}} \log(1/\zeta)\log(\#A)$ bits.
Similarly, Bob sends his valid runs to the referee, and the referee
then finds a valid run and outputs according to this run.  If the referee
does not
find a valid run, he aborts.  This can happen for two reasons.
Either there was no valid run; this occurs with probability 
at most $\delta$.  Or one of the players did not send the valid run
because it was not within expected number of runs, which
occurs with probability $\delta'$.
Altogether, the protocol aborts with probability at most $\delta+\delta'$
so we choose $\delta=\delta'=\eta/2$.

\end{proof}
}

\enlever{
???REMOVE THIS PARAGRAPH ENTIRELY????
In general, we can deduce lower bounds on the communication complexity of any distribution exhibiting a Bell inequality violation that is resistant to detector inefficiency. For example,

}

\section{Conclusion and open problems}

There are many questions to explore.  In experimental setups, 
in particular with optics, one
is faced with the very real problem that in most runs of
an experiment, no outcome is recorded.  The frequency
with which apparatus don't yield an outcome is called
detector inefficiency.
Can we find 
explicit Bell inequalities for quantum distributions 
that are very resistant to  detector inefficiency?
For experimental purposes, it is also important for the
distribution to be feasible to implement. 
One way to achieve this
could be to prove stronger bounds for the inequalities
based on the GHZ paradox given by Buhrman \textit{et al.}~\cite{BHMR06}.  
Their analysis is based on a tradeoff derived from the rectangle bound.
It may be possible to give sharper bounds with our
techniques.
Another is to consider 
asymmetric Bell inequalities and dimension witnesses~\cite{BPAGMS08,VPB10}.
Here, Alice prepares a state and Bob makes a measurement.
The goal is to have a Bell inequality demonstrating that
Alice's system has to be large. 
The dimension is exponential in the size of Alice's
message to Bob, so proving a lower bound on one-way communication complexity
gives a lower bound on the dimension.  
In order to close the detection loophole, one can also consider more realistic models of inefficiency, where the failure to produce a measurement outcome is the result of either the entangled state not being produced, or the detector of each player failing independently. 
This could be exploited by defining a stronger version of the partition/efficiency bound that also takes into account the probabilities of events where only one of the players produces a valid outcome.
While such a variation of the efficiency bound is meaningful for Bell tests, we have not considered it here as it might not be a lower bound on communication complexity.


\enlever{
We would like to see more applications.  For the Khot Vishnoi
distribution, we are not aware of any nontrivial upper bound so there
is a gap there to be improved.  
}

A family of lower bound techniques still not subsumed by the
efficiency bound are the information theoretic bounds such as
information complexity~\cite{CSWY01}.  It was recently shown that
information complexity is an upper bound on discrepancy~\cite{BW12},
and this upper bound was subsequently extended to a relaxation of the partition bound~\cite{KLL+12}. This \emph{relaxed} partition bound also subsumes most algebraic and combinatorial lower bound techniques, with the notable exception of the partition bound itself, and we would therefore like to see connections one way or the other between information complexity 
and the partition bound.

Finally, the quantum partition bound is of particular interest.  
It is hard to apply since it is not linear, and it amounts to 
finding a Tsirelson inequality, a harder
task to be sure than finding a good Bell inequality, that can nevertheless be approached via semidefinite programming relaxations~\cite{NPA08,DLTW08}.  On the
other hand, it is a very strong bound and one can hope to
get a better upper bound on quantum communication complexity.
Finding tight bounds 
complexity would be an important step to proving
the existence, or not, of exponential gaps for total boolean functions.

\section{Acknowledgements}
We wish to particularly thank Ronald de Wolf, Raghav Kulkarni and Iordanis Kerenidis for many
fruitful discussions.  Research funded in part by the EU grants QCS, QAlgo, ANR Jeune Chercheur CRYQ,
ANR Blanc QRAC and EU ANR Chist-ERA DIQIP.
J.R. acknowledges support from the action Mandats de Retour of the Politique Scientifique Fédérale Belge, and the Belgian ARC project COPHYMA.

\bibliography{prt-eff}

\appendix

\section {Proof of  Theorem~\ref{thm:partition}}
\label{apx:pf}

We give the proof that  for any distribution $\bp$,
$R_\epsilon^\eta(\bp) \geq \log(\prt_\epsilon^\eta(\bp))$.

\begin{proof}[Proof of Theorem~\ref{thm:partition}]
Let $\mathcal{P}$ be a protocol that simulates $\bp$ with $\eta$ detector efficiency
and $c$ bits of communication in the worst case, up to $\epsilon$ error in total variation distance.  Let $\bp'$ be the distribution produced by $\mathcal{P}$.
We can think of $\mathcal{P}$ as a probability distribution over
fully deterministic protocols $\{P_i\}$, where $P_i$ is chosen with probability $q(i)$.
Each deterministic protocol $P_i$ further decomposes into $2^c$ rectangles $\{R_{i,j}\}$
and in each rectangle, the players apply a local strategy $\bl_{i,j}$ defined over
inputs in $R$ and outputs in $\A\cup \{\bot\}\times \B\cup \{\bot\}$.

From this we  construct a feasible solution to the linear program for $\prt^\eta(\bp')$.
For any rectangle $R\subseteq X\times Y$ and any local distribution $l$
defined over inputs $R$ and outputs in $\A\cup \{\bot\}\times \B\cup \{\bot\}$,
we set 
	$$w_{R,\bl} = \sum_{i,j:R=R_{i,j} \text{ and } \bl=\bl_{i,j}} q(i).$$
Intuitively, $w_{R,\bl}$ is the probability of finding rectangle $R$ paired together with 
local strategy $\bl$ when choosing a deterministic protocol from $\mathcal{P}$.  
Each pair $R,l$ might appear in several of the deterministic protocols in 
$\mathcal{P}$, so we take the sum of the probabilities where this pair occurs.

First we claim that the objective function is $2^c$.  
 \begin{eqnarray*}
\sum_{R,\bl} w_{R,\bl}  &=&\sum_{R,\bl}  \sum_{i,j:R=R_{i,j} \text{ and } \bl=\bl_{i,j}} q(i) \\
	&=& \sum_{i,j} \sum_{R_{i,j},\bl_{i,j}} q(i)\\
	&=& 2^c \sum_i q(i)\\
	&=& 2^c.
\end{eqnarray*}

Now, we claim that all the constraints are verified.
For the first constraint, fix any $a,b,x,y\in \A\times \B\times \X\times \Y$.
By assumption, $\mathcal{P}$ outputs according to $p'(a,b|x,y)$, conditioned on
having output a value in $A\times B$.
Let us explicitly
calculate the (unconditional) probability that  $\mathcal{P}$ outputs $a,b$ on input $x,y$.
With probability $w_{R,\bl}$, $\mathcal{P}$ outputs according to the local
strategy $l$ applied on a rectangle $R$ containing $x,y$.  So the probability of outputting
$a,b$ is $\sum_{R: x,y\in R,\bl} w_{R,\bl} \cdot l(a,b|x,y)$.  The conditional probability is
obtained by dividing by the probability of outputting some $a',b' \in \A\times \B$ 
on input $x,y$.  This is precisely the quantity $\eta_{x,y}$.

The second constraint follows from the efficiency of $\mathcal{P}$.
This completes the proof.

\end{proof}

\section{Efficiency bound for protocols with bounded  efficiency}

\label{app:eff-eta}
In order to prove lower bounds on simulating $\bp$ with efficiency $\eta<1$, we define the following generalization of $\eff(\bp)$.
\begin{defn}
For any distribution $\bp$ with inputs in $X\times Y$ and outputs $A\times B$,
define $\eff^\eta(\bp)= 1/\zeta_\opt$, where $\zeta_\opt$ is the optimal value of the following
linear program.  The variables are $\zeta,\zeta_{xy}$ and $q_\bl$, where 
$\bl$ ranges over all local deterministic protocols with inputs taken from $\X\times\Y$
and outputs in $\A\cup\{\bot\}\times \B\cup\{\bot\}$.
\begin{align*}
   \zeta_\opt = & \max_{\zeta,\zeta_{xy},q_\bl\geq 0} && \zeta \\ 
&\text{subject to }&& \sum_{\bl\in\L_{\det}^\bot} q_\bl l(a,b|x,y)=\zeta_{xy} p(a,b|x,y)
		& \forall x,y,a,b\in \X{\times}\Y{\times}\A {\times}\B\\
& &&\sum_{\bl\in\L_{\det}^\bot} q_\bl=1
\\& &&\eta\zeta\leq\zeta_{xy}\leq\zeta& \forall x,y\in \X{\times}\Y.
\end{align*}
For randomized communication with error,
we define $\eff^\eta_\epsilon(\bp)=\min_{|p'-p|_1\leq \epsilon} \eff^\eta(\bp')$.
\end{defn}

This provides a lower bound for $R^\eta_\epsilon(\bp)$, which is equivalent to the lower bound obtained from $\prt^\eta_\epsilon(\bp)$ (we omit the proofs of these statements as they closely follow the lines of the special case $\eta=1$).
\begin{lemma}\label{lem:lower-eta}
 For any distribution $\bp$, we have $R^\eta_\epsilon(\bp)\geq\log\eff^\eta_\epsilon(\bp)$.
\end{lemma}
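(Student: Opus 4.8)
The plan is to mimic the proof of Theorem~\ref{thm:lower-eff} (and of Theorem~\ref{thm:partition}), adapting the transcript-guessing argument to protocols that are allowed to abort with probability at most $1-\eta$. First I would take a zero-error protocol $\mathcal{P}$ for a distribution $\bp'$ with $|\bp-\bp'|_1\leq\epsilon$, using $t$ bits of communication in the worst case, and having detector efficiency at least $\eta$ (so on input $x,y$ it outputs a value in $\A\times\B$ with some probability $\eta_{xy}\in[\eta,1]$, and conditioned on not aborting it produces $p'(a,b|x,y)$). As in the other proofs, I would pad with dummy bits so that exactly $t$ bits are exchanged on every run, and write $\mathcal{P}$ as a convex combination $\sum_\lambda \Pr[\lambda] P_\lambda$ of deterministic protocols indexed by the shared randomness $\lambda$.

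Next I would build a feasible solution to the linear program defining $\eff^\eta(\bp')$. Using shared randomness, Alice and Bob sample $\lambda$ and then guess a uniformly random transcript $T\in\{0,1\}^t$; each player checks consistency of $T$ with $P_\lambda$ on her own input, and if consistent follows $P_\lambda$ (which may itself output $\bot$), otherwise outputs $\bot$. By the rectangle property of transcripts (exactly as argued in the proof of Theorem~\ref{thm:lower-eff}), for each $\lambda$ there is exactly one transcript simultaneously valid for $x$ and $y$, so the probability that the transcript is guessed correctly is $2^{-t}$, independent of $x,y$. This defines a distribution $q_\bl$ over $\L_{\det}^\bot$ with $\sum_\bl q_\bl=1$. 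Conditioned on guessing the right transcript, the protocol behaves exactly like $\mathcal{P}$; hence for each $x,y,a,b$ we get $\sum_\bl q_\bl\, l(a,b|x,y) = 2^{-t}\cdot\Pr_{\mathcal P}[a,b|x,y] = 2^{-t}\,\eta_{xy}\,p'(a,b|x,y)$. Setting $\zeta=2^{-t}$ and $\zeta_{xy}=2^{-t}\eta_{xy}$, the first constraint of $\eff^\eta(\bp')$ is satisfied, and since $\eta_{xy}\in[\eta,1]$ we have $\eta\zeta\leq\zeta_{xy}\leq\zeta$, so the third constraint holds too. Thus $\zeta_{\opt}\geq 2^{-t}$, i.e.\ $\eff^\eta(\bp')\leq 2^t$, and taking the minimum over all $\bp'$ within $\epsilon$ gives $\eff^\eta_\epsilon(\bp)\leq 2^t$, hence $R^\eta_\epsilon(\bp)\geq\log\eff^\eta_\epsilon(\bp)$.

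The only real subtlety — and the step I would be most careful about — is handling the $\bot$ outcomes of the original protocol correctly inside $\L_{\det}^\bot$: a player in the simulated local protocol outputs $\bot$ either because the transcript guess failed on her side, or because $P_\lambda$ itself outputs $\bot$ on that branch, and one must check that the resulting $l(a,b|x,y)$ with $a,b\in\A\times\B$ really does equal $2^{-t}\eta_{xy}p'(a,b|x,y)$, with the ``missing mass'' $1-2^{-t}\eta_{xy}$ distributed over outcomes involving at least one $\bot$; this is exactly what the definition of $\eff^\eta$ allows via the slack between $\zeta_{xy}$ and $\zeta$. Everything else is a verbatim adaptation of the arguments already given for $\eta=1$, which is why the excerpt remarks that the proof closely follows that special case.
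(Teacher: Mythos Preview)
Your proposal is correct and is precisely the adaptation the paper has in mind: the authors omit the proof, stating it ``closely follows the lines of the special case $\eta=1$,'' and your transcript-guessing argument with $\zeta=2^{-t}$ and $\zeta_{xy}=2^{-t}\eta_{xy}$ is exactly that adaptation. The subtlety you flag about $\bot$ outcomes arising either from a bad transcript guess or from the original protocol aborting is handled correctly, since the LP constraint in $\eff^\eta$ is only imposed for $(a,b)\in\A\times\B$ and the slack $\zeta_{xy}\leq\zeta$ absorbs the second source of $\bot$.
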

\begin{thm}\label{thm:equiv-eff-prt-eta}
For any distribution $\bp$,
$\eff^\eta_\epsilon(\bp)=\prt^\eta_\epsilon(\bp)$.
\end{thm}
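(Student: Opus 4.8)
The plan is to mimic exactly the equivalence proof of Theorem~\ref{thm:equiv-eff-prt} (the case $\eta=1$), carefully tracking the extra slack variables on both sides. Recall that in $\prt^\eta(\bp')$ the variables are $w_{R,\bl}\ge 0$ (for rectangles $R$ and local deterministic $\bl$ with outputs in $\A\times\B$) together with $\eta_{x,y}\in[\eta,1]$, while in $\eff^\eta(\bp')$ the variables are $q_{\bl}\ge 0$ (for $\bl\in\L_{\det}^\bot$), together with $\zeta$ and $\zeta_{x,y}\in[\eta\zeta,\zeta]$. The first step is to set up the same bijection between $(R,\bl)$ pairs and abort-augmented local deterministic distributions $\bl_R\in\L_{\det}^\bot$: given $(R,\bl)$, let $\bl_R$ be the strategy where Alice (resp.\ Bob) plays $\bl$ if her (resp.\ his) input lies in the relevant side of $R$, and outputs $\bot$ otherwise; conversely, any $\bl'\in\L_{\det}^\bot$ determines a rectangle $R$ (the inputs where neither player aborts) and, via an arbitrary fixed fallback pair $(a_0,b_0)$, a genuine $\bl\in\L_{\det}$.

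The second step is the change of variables. Given a feasible solution $(w_{R,\bl},\eta_{x,y})$ of $\prt^\eta(\bp')$ with objective $W=\sum_{R,\bl}w_{R,\bl}$, set $\zeta=1/W$, $q_{\bl_R}=\zeta\,w_{R,\bl}$, and $\zeta_{x,y}=\zeta\,\eta_{x,y}$. Then $\sum q_{\bl_R}=\zeta W=1$; the constraint $\eta\le\eta_{x,y}\le 1$ becomes $\eta\zeta\le\zeta_{x,y}\le\zeta$; and multiplying the partition-bound equality constraint $\sum_{R,\bl:x,y\in R}w_{R,\bl}\,l(a,b|x,y)=p'(a,b|x,y)\,\eta_{x,y}$ by $\zeta$ gives exactly $\sum_{\bl'\in\L_{\det}^\bot}q_{\bl'}\,l'(a,b|x,y)=p'(a,b|x,y)\,\zeta_{x,y}$ for $a,b\in\A\times\B$ (here one uses that $\bl_R$ agrees with $\bl$ on non-$\bot$ outputs, and that the $\bot$-outputs carry no constraint). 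So $\eff^\eta(\bp')\le\prt^\eta(\bp')$. The reverse direction is the same computation run backwards: from a feasible $(\zeta,\zeta_{x,y},q_{\bl'})$ define $w_{R,\bl}=q_{\bl'}/\zeta$ and $\eta_{x,y}=\zeta_{x,y}/\zeta$, check $\sum_{R,\bl}w_{R,\bl}=1/\zeta$, and verify the constraints. This establishes $\eff^\eta(\bp')=\prt^\eta(\bp')$ for every fixed distribution $\bp'$, and taking the minimum over $\{\bp':|\bp'-\bp|_1\le\epsilon\}$ on both sides yields $\eff^\eta_\epsilon(\bp)=\prt^\eta_\epsilon(\bp)$.

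The main point requiring care — really the only obstacle — is the bookkeeping at the $\bot$ outcomes: one must check that the equality constraints are only imposed for $a,b\in\A\times\B$ (so that the choice of fallback $(a_0,b_0)$, and the precise $\bot$-probabilities of $\bl_R$, are irrelevant), and that summing $l(a,b|x,y)$ over $a,b\in\A\times\B$ recovers $\eta_{x,y}$ (resp.\ $\zeta_{x,y}/\zeta$), which is consistent with the range constraint. Everything else is the routine scaling already spelled out in the $\eta=1$ case, so as the excerpt notes we may simply say the proof "closely follows the lines of the special case $\eta=1$" and omit the verification of each individual constraint.
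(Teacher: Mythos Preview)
Your proposal is correct and follows exactly the approach the paper intends: the paper explicitly omits the proof of Theorem~\ref{thm:equiv-eff-prt-eta}, stating only that it ``closely follow[s] the lines of the special case $\eta=1$'' (Theorem~\ref{thm:equiv-eff-prt}), and your argument is precisely that extension, with the scaling $\zeta=1/W$, $q_{\bl_R}=\zeta\,w_{R,\bl}$, $\zeta_{x,y}=\zeta\,\eta_{x,y}$ carrying the extra slack variables through. The bookkeeping you flag (that the equality constraints are only on $a,b\in\A\times\B$, so the $\bot$-behaviour and the fallback $(a_0,b_0)$ are immaterial) is indeed the only point requiring care, and you have it right.
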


We can also study the maximum $\eta$ such that $\bp$ can be simulated with efficiency $\eta_{xy}\geq \eta$ on input $x,y$, without any communication. We denote the inverse of this quantity by $\effl(\bp)$.
\begin{defn}
\label{def:nc-eff}
For any distribution $\bp$ with inputs in $X\times Y$ and outputs $A\times B$,
define $\effl(\bp)= 1/\eta$, where $\eta$ is the maximum $\eta$ such that $R^\eta(\bp)=0$.
\end{defn}

This quantity can be seen as a relaxation of $\eff(\bp)$, where we no longer
require the inefficiency to be the same for all inputs. Indeed, it can be rewritten as follows.
\begin{lemma}
\label{lem:nc-eff}
 For any distribution $\bp$, we have $\effl(\bp)=1/\zeta_\opt$, where $\zeta_\opt$ is the optimal value of the following
linear program.
\begin{align*}
   \zeta_\opt = & \max_{\zeta,\zeta_{xy},q_\bl\geq 0} && \zeta \\ 
&\text{subject to }&& \sum_{\bl\in\L_{\det}^\bot} q_\bl l(a,b|x,y)=\zeta_{xy} p(a,b|x,y)
		& \forall x,y,a,b\in \X{\times}\Y{\times}\A {\times}\B\\
& && \sum_{\bl\in\L_{\det}^\bot} q_\bl=1
\\&&& \zeta\leq\zeta_{xy}& \forall x,y\in \X{\times}\Y.
\end{align*}
\end{lemma}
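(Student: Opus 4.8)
The plan is to prove Lemma~\ref{lem:nc-eff} by unfolding Definition~\ref{def:nc-eff} and showing that the stated linear program exactly captures the condition ``$R^\eta(\bp)=0$''. Recall that $R^\eta(\bp)=0$ means there is a \emph{zero-communication} protocol with shared randomness that simulates $\bp$ exactly, where on input $x,y$ the probability of not aborting is at least $\eta$ (and may depend on $x,y$); conditioned on not aborting, the output distribution is $p(a,b|x,y)$. So the first step is to observe that a zero-communication protocol with shared randomness is precisely a convex combination of local deterministic strategies $\bl\in\L_{\det}^\bot$, i.e.\ it is specified by a probability distribution $(q_\bl)_{\bl\in\L_{\det}^\bot}$ with $q_\bl\geq 0$ and $\sum_\bl q_\bl=1$. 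This immediately yields the second constraint of the program.

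Next I would identify $\zeta_{xy}$ with the probability that such a protocol does not abort on input $x,y$, namely $\zeta_{xy}=\sum_{\bl}q_\bl\sum_{a,b\in\A\times\B}l(a,b|x,y)$, which is the ``efficiency on input $x,y$''. The exact-simulation requirement (conditioned on not aborting, the output is $\bp$) is then exactly the first constraint $\sum_{\bl\in\L_{\det}^\bot} q_\bl\, l(a,b|x,y)=\zeta_{xy}\,p(a,b|x,y)$ for all $x,y,a,b$: the left side is the unconditional probability of outputting $(a,b)$, and dividing by $\zeta_{xy}$ gives the conditional distribution, which must equal $p(a,b|x,y)$. Here one should note that $\zeta_{xy}$ is a \emph{derived} quantity, so it need not appear as an explicit free variable with its own defining equation; but summing the first constraint over all $a,b\in\A\times\B$ and using $\sum_{a,b}p(a,b|x,y)=1$ shows consistency, so treating $\zeta_{xy}$ as a variable constrained only by the first set of equations is equivalent. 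Finally, the efficiency of the protocol (the quantity we want to maximize in Definition~\ref{def:nc-eff}) is $\min_{x,y}\zeta_{xy}$, so introducing an auxiliary variable $\zeta$ with the constraint $\zeta\leq\zeta_{xy}$ for all $x,y$ and maximizing $\zeta$ computes exactly this minimum. This establishes that the optimal value $\zeta_\opt$ of the LP equals the maximum $\eta$ with $R^\eta(\bp)=0$, hence $\effl(\bp)=1/\zeta_\opt$.

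The remaining point to check is well-definedness at the boundary: if the set of $\eta$ with $R^\eta(\bp)=0$ is nonempty then the LP is feasible and the supremum is attained (the feasible region is a compact polytope once we note $q_\bl\in[0,1]$ and $\zeta,\zeta_{xy}\in[0,1]$), so ``maximum'' is justified on both sides. If $\bp\notin\L$ even with arbitrarily small efficiency allowed — which cannot happen for a genuine distribution since one can always take the protocol that outputs $\bot$ with probability $1-\delta$ and a fixed local strategy otherwise, but that does not simulate $\bp$ unless $\bp$ is itself... — actually the relevant subtlety is whether $\bp$ lies in the affine span issue; however, since $\bp$ is a fixed target and $\L_{\det}^\bot$ includes strategies outputting $\bot$ with any probability, feasibility with $\zeta>0$ holds iff $\bp\in\L$, and when $\bp\notin\L$ both sides are $0$ consistently. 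I expect the main (minor) obstacle to be precisely this bookkeeping around the derived-versus-free status of $\zeta_{xy}$ and the edge cases; the core argument is a routine translation between the operational description of zero-communication protocols and the linear-programming formulation, essentially identical in spirit to the proof of Theorem~\ref{thm:equiv-eff-prt} and the $\eff^\eta$ discussion, and I would present it briefly by reference to those.
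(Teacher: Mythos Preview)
Your core argument (the first two paragraphs) is correct and is precisely the direct translation the paper has in mind; the paper states Lemma~\ref{lem:nc-eff} without proof, remarking only that $\effl$ ``can be seen as a relaxation of $\eff(\bp)$, where we no longer require the inefficiency to be the same for all inputs,'' and your unfolding of Definition~\ref{def:nc-eff} into the LP constraints makes exactly this observation explicit.

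Your final paragraph on edge cases, however, is muddled and contains an error. You write that ``feasibility with $\zeta>0$ holds iff $\bp\in\L$'' and that ``when $\bp\notin\L$ both sides are $0$ consistently,'' but this is false: the LP is feasible with $\zeta>0$ for \emph{every} distribution $\bp$. Take any finite-communication protocol for $\bp$ (e.g.\ Alice sends her entire input) and apply the guess-the-transcript construction from the proof of Theorem~\ref{thm:lower-eff}; this yields a zero-communication protocol with constant efficiency $2^{-c}>0$ on every input, hence a feasible point of the (relaxed) LP with $\zeta=2^{-c}$. So there is no degenerate case to worry about; the feasible region is always nonempty and compact, the maximum is attained, and you should simply drop the speculative sentences about $\bp\notin\L$.
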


By comparing the linear programs for the different quantities, we immediately obtain the following relations:
\begin{lemma}\label{lem:tradeoff}
 For any distribution $\bp$, we have $\eta\cdot \effl(\bp)\leq\eff^\eta(\bp)\leq\eta\cdot\eff(\bp)$.
\end{lemma}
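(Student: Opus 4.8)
The plan is to prove both inequalities by directly comparing the three linear programs defining $\effl(\bp)$, $\eff^\eta(\bp)$ and $\eff(\bp)$, and exhibiting explicit feasible points. Since each of these quantities is the reciprocal of the optimal value of a maximization LP with finitely many variables and a compact feasible region (so the optimum is attained), write $\zeta^{\effl}$, $\zeta^{\eta}$, $\zeta^{\eff}$ for the three optimal values, so that $\effl(\bp)=1/\zeta^{\effl}$, $\eff^\eta(\bp)=1/\zeta^{\eta}$, $\eff(\bp)=1/\zeta^{\eff}$. The two claimed inequalities are then equivalent to $\zeta^{\eff}/\eta\le\zeta^{\eta}\le\zeta^{\effl}/\eta$, and it suffices to build, for each, a feasible point of one program out of an optimal point of another (for the error version, one applies the same argument to the minimizing $\bp'$).

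For $\eff^\eta(\bp)\le\eta\cdot\eff(\bp)$ I would start from an optimal solution $(\zeta^{\eff},\{q_\bl\})$ of the program for $\eff(\bp)$ and feed it into the program for $\eff^\eta(\bp)$, keeping the same weights $q_\bl$, setting $\zeta_{xy}:=\zeta^{\eff}$ for every $x,y$, and setting the objective to $\zeta:=\zeta^{\eff}/\eta$. The equality constraint $\sum_\bl q_\bl\, l(a,b|x,y)=\zeta_{xy}\,p(a,b|x,y)$ and the normalization $\sum_\bl q_\bl=1$ carry over verbatim, and because $0<\eta\le1$ we have $\eta\zeta=\zeta^{\eff}=\zeta_{xy}\le\zeta^{\eff}/\eta=\zeta$, so the point is feasible for $\eff^\eta(\bp)$; hence $\zeta^{\eta}\ge\zeta^{\eff}/\eta$, which is exactly the desired inequality after taking reciprocals.

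For $\eta\cdot\effl(\bp)\le\eff^\eta(\bp)$ I would go the other direction: take an optimal solution $(\zeta^{\eta},\{\zeta_{xy}\},\{q_\bl\})$ of the program for $\eff^\eta(\bp)$, which in particular satisfies $\zeta_{xy}\ge\eta\zeta^{\eta}$ for all $x,y$, and feed it into the program for $\effl(\bp)$ by keeping the same $\zeta_{xy}$ and $q_\bl$ and setting the objective to $\zeta:=\eta\zeta^{\eta}$. The equality and normalization constraints are unchanged, and the only remaining constraint of the $\effl$ program, $\zeta\le\zeta_{xy}$, is precisely $\eta\zeta^{\eta}\le\zeta_{xy}$, which holds. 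Thus this point is feasible for $\effl(\bp)$, so $\zeta^{\effl}\ge\eta\zeta^{\eta}$, giving $\effl(\bp)=1/\zeta^{\effl}\le 1/(\eta\zeta^{\eta})=\eff^\eta(\bp)/\eta$.

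There is no genuine obstacle here: both steps are one-line feasibility checks, and the content is just that relaxing the per-input efficiency constraint from $\zeta_{xy}=\zeta$ to $\zeta_{xy}\in[\eta\zeta,\zeta]$ (and further to $\zeta_{xy}\ge\zeta$) enlarges the optimum by exactly the factor $1/\eta$ in the worst case. The only points requiring care are the bookkeeping with reciprocals — a larger LP optimum corresponds to a smaller $\eff$-type quantity — and invoking $\eta\le1$ in the first step to ensure $\zeta_{xy}\le\zeta$.
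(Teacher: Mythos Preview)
Your proposal is correct and is precisely the approach the paper has in mind: the paper simply states that the inequalities follow ``by comparing the linear programs for the different quantities,'' and your two feasibility constructions spell this comparison out in full detail. There are no gaps, and your bookkeeping with reciprocals and the use of $\eta\le 1$ are handled correctly.
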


\enlever{
\section{Upper bounds on communication complexity}
\label{app:upper-bounds}

\begin{proof}[Proof of Theorem~\ref{thm:upper-two-way}]
For the first item, let $P$ be a zero-error, zero-communication protocol with shared randomness for $\bp$
which has efficiency $\zeta=\frac{1}{\eff(\bp)}$.
Alice and Bob run the protocol  $N=\lceil\log(\frac{1}{1-\eta})\frac{1}{\zeta}\rceil$ times
and send their outcome to the referee in each run.
If the referee finds a valid run (where neither player aborts), he produces the corresponding outputs;
otherwise he aborts. 
Since each run has a probability $\eta$ of producing a valid run,
the probability that the referee aborts is $(1-\zeta)^N\leq e^{-\zeta N}\leq 1-\eta$.

For the second item, the proof is the same but the players
share entanglement to run the protocol with shared entanglement
and efficiency $\frac{1}{\eff^*(\bp)}$.

If multiple rounds of communication are allowed, then a quadratic speedup is
possible in the quantum case by using a protocol for disjointness~\cite{BCW98,HW01,AA03}
on the input $u,v$ of length $N$, where $u_i$ is 0 if Alice aborts in the $i$th run and 1 
otherwise, similarly for $v$ with Bob.  
\end{proof}

The definition of the efficiency bound for one-way communication is as follows.

\begin{defn}\label{defn:one-way-eff}

Define $\eff^{\rightarrow}$ and $\eff^{*,\rightarrow}$ as 
\begin{align*}
\enlever{
(\eff^{\ind,\rightarrow}(\bp))^{-1}&=\max_{\zeta}\zeta &\textrm{subject to }& \sum_\bq q_\ql q(a,b|x,y)=\zeta p(a,b|x,y)& \\
&&&\forall a\in \A,b\in \B,x,y\in \X\times\Y\\
&&&q_\bl q(b|y)=p(b|y)&\forall b\in B,y\in \Y\\
}
(\eff^{\rightarrow}(\bp))^{-1}=& \max_{\zeta,q_\bl\geq 0}&&\zeta \\
&\textrm{subject to }&& \sum_{\bl\in\L_{\det}^{\botA}} q_\bl l(a,b|x,y)=\zeta p(a,b|x,y)
&\forall a\in \A, b\in B,x,y\in \X\times \Y\\
& &&\sum_{\bl\in\L_{\det}^{\botA}}q_\bl=1;\\
(\eff^{*,\rightarrow}(\bp))^{-1}=&\max_{\zeta,\bq\in \Q^{\botA}}&&\zeta \\
&\textrm{subject to }&& q(a,b|x,y)=\zeta p(a,b|x,y)
& \forall a\in \A, b\in B,x,y\in \X\times \Y.\\
\end{align*}
\end{defn}

We prove the upper bound on one-way randomized communication complexity
with shared entanglement: 
for any distribution $\bp$ and efficiency $\eta<1$, $Q^{*,\eta,\rightarrow}_{0}(\bp)\leq \log(\eff^{*,\rightarrow}(\bp))+\log\log (1/(1-\eta))$.

\begin{proof}[Proof of Theorem~\ref{thm:upper-one-way}]
 Let $(\zeta,\bq)$ be an optimal solution for $\eff^{*,\rightarrow}(\bp)$. 
For any $x,y$, if we sample $a,b$ according to $\bq$, 
$\Pr_{\bq}[a\neq \bot|x]=\zeta$ and $\Pr_{\bq}[a,b|x,y]=\zeta p(a,b|x,y)$ 
for all $a,b\neq \bot$ and all $x,y$. 
Let Alice and Bob simulate this quantum distribution $N=\lceil\log(\frac{1}{1-\eta})\frac{1}{\zeta}\rceil$ times, keeping a record of the outputs $(a_i,b_i)$ for $i\in[N]$. Since this distribution is quantum, this requires no communication (only shared entanglement). Alice then communicates an index $i\in[N]$ such that $a_i\neq\bot$, if such an index exists, or just a random index if $a_i=\bot$ for all $i\in[N]$. Alice and Bob output $(a_i,b_i)$ corresponding to this index.

The correctness of the protocol follows from the fact that $\Pr_\bq[a_i=\bot(\forall i)]=(1-\zeta)^N\leq e^{-\zeta N}\leq 1-\eta$. The protocol then requires $\log N=-\log\zeta+\log\log(\frac{1}{1-\eta})$ bits of communication.
\end{proof}
}

\section{Lower bound for a Hidden Matching distribution}
\label{app:HM}
We first recall an application of KKL inequality as explained in \cite{DeWolf} which
we use in the proof.
\begin{lemma}
Let $A$ be a subset of $\{0,1\}^n$. Let $S$ be a subset of $\{1 \ldots n\}$. We define $\beta_S = \mathbb{E}_{x \in A} \left( (-1)^{S \cdot x} \right)$ where $S \cdot x = \sum_{i \in S} x_i$. Let $\mathcal{S}_2$ be the set of subsets of  $\{1 \ldots n\}$ of size 2. There exists an absolute constant $\mathcal{C}$ such that
$$\sum_{S \in \mathcal{S}_2} \beta_S^2 \leq \mathcal{C} \log \left( \frac{2^n}{|A|} \right)^2.$$
\end{lemma}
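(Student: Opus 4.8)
The plan is to reinterpret the $\beta_S$ as Fourier coefficients of the indicator of $A$ and then apply hypercontractivity (the Bonami--Beckner inequality), which is the standard route to ``level-$k$''-type estimates of this kind. Set $f=\mathbbm 1_A:\{0,1\}^n\to\{0,1\}$ and $\alpha=|A|/2^n$. With the Fourier expansion $f=\sum_S\hat f(S)\chi_S$, where $\chi_S(x)=(-1)^{S\cdot x}$ and $\hat f(S)=\mathbb{E}_{x\in\{0,1\}^n}[f(x)\chi_S(x)]$, one immediately gets $\hat f(S)=\alpha\,\mathbb{E}_{x\in A}[\chi_S(x)]=\alpha\,\beta_S$. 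Hence $\sum_{S\in\mathcal S_2}\beta_S^2=\alpha^{-2}\,W^2[f]$, where $W^2[f]:=\sum_{|S|=2}\hat f(S)^2$ is the level-$2$ Fourier weight, so it suffices to prove $W^2[f]=O\!\left(\alpha^2\log^2(1/\alpha)\right)$.

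For the noise operator $T_\rho$ with $0<\rho<1$, Bonami--Beckner gives $\|T_\rho f\|_2\le\|f\|_{1+\rho^2}$; since $f$ is $\{0,1\}$-valued the right-hand side equals $\alpha^{1/(1+\rho^2)}$, while $\|T_\rho f\|_2^2=\sum_S\rho^{2|S|}\hat f(S)^2\ge\rho^4\,W^2[f]$. Thus $W^2[f]\le\rho^{-4}\alpha^{2/(1+\rho^2)}$, and therefore
$$\sum_{S\in\mathcal S_2}\beta_S^2\ \le\ \rho^{-4}\,\alpha^{2/(1+\rho^2)-2}\ =\ \rho^{-4}\exp\!\left(\tfrac{2\rho^2}{1+\rho^2}\ln\tfrac{2^n}{|A|}\right)\ \le\ \rho^{-4}\exp\!\left(2\rho^2\ln\tfrac{2^n}{|A|}\right).$$
Choosing $\rho^2=1/\ln(2^n/|A|)$ — which is admissible as long as $|A|\le 2^n/e$ — turns the exponential into the constant $e^2$ and the prefactor into $\ln^2(2^n/|A|)$, giving the claim with $\mathcal C=e^2$ in this regime.

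The only delicate point, and the main obstacle, is the range where $|A|$ is close to $2^n$, so that $\ln(2^n/|A|)$ is too small for the above choice of $\rho$ to be legal; there the best one gets by letting $\rho\to 1$ is merely $\sum_{S\in\mathcal S_2}\beta_S^2\le\alpha^{-1}$, which does not by itself match $\mathcal C\log^2(2^n/|A|)$. This is resolved using the (standard, and here implicit) hypothesis that $A$ has density at most $1/2$: then $\ln(2^n/|A|)\ge\ln 2$, and enlarging the constant to, say, $\mathcal C=e^2/(\ln 2)^2$ covers all densities in $[\text{something},1/2]$. Equivalently, one may simply quote the level-$2$ inequality for Boolean functions of density at most $1/2$. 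Combining the two regimes completes the proof; everything else is a routine optimization.
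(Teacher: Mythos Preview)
The paper does not prove this lemma at all; it is introduced with ``We first recall an application of KKL inequality as explained in~[DeWolf]'' and then simply used as a black box in the Hidden Matching analysis. Your argument via Bonami--Beckner hypercontractivity is precisely the standard derivation of such level-$2$ Fourier weight bounds (and is essentially what the cited reference does), so there is nothing to compare against and your proof is correct.

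One remark on the point you flag yourself: the caveat about the dense regime is well taken. As literally stated, the lemma fails for sets of density close to~$1$; for instance with $A=\{0,1\}^n\setminus\{0^n\}$ one has $\beta_S=-1/(2^n-1)$ for all nonempty $S$, hence $\sum_{|S|=2}\beta_S^2=\binom{n}{2}/(2^n-1)^2$, while $\log^2(2^n/|A|)\approx (2^n-1)^{-2}$, so no absolute $\mathcal C$ works. The implicit hypothesis $|A|\le 2^{n-1}$ that you invoke is the usual one for level-$k$ inequalities, and with it your two-regime argument (optimizing $\rho$ when $\alpha\le 1/e$, and using the Parseval bound $\sum_{|S|=2}\beta_S^2\le \alpha^{-1}$ together with $\ln(1/\alpha)\ge\ln 2$ when $\alpha\in[1/e,1/2]$) goes through with a constant such as $\mathcal C=e^2/(\ln 2)^2$.
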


We now prove the theorem.
\begin{proof}[Proof of Theorem~\ref{thm:HM}]
Let $\bp'$ be such that  $|\bp' - \HM|_1 \leq \epsilon$. We lower bound $\eff^{\rightarrow}(\bp')$ using the dual of $\eff^{\rightarrow}$ (Lemma~\ref{lemma:eff-dual}).

\begin{align*}
\eff^{\rightarrow}(\bp')  = & \max_{B_{x,M,a,d,i,j} }    && \sum_{x,M,a,d,i,j} B_{x,M,a,d,i,j} \cdot p'(a,d,i,j|x,M) \\
 & \text{subject to } && \sum_{x\in X(\bl),M,a,d,i,j} B_{x,M,a,d,i,j} \cdot l(a,d,i,j|x,M) 
				\leq 1  &  \forall \bl\in\L_{\det}^\botA ,\\
\end{align*}
where we let $X(\ell)$ be the set of inputs for which Alice does not abort
when following the local deterministic strategy $\bl$.

We exhibit coefficients that satisfy the constraints and give us a good lower bound for the objective function for each $\bp'$ close to $\HM$.

To give an upper bound on the Bell value of any local deterministic strategy that may output $\bot$,
we will use the fact that such a strategy leads to a partition of Alice's inputs, where she doesn't abort, into rectangles. We will show an upper bound on the bias of each rectangle using the analysis from \cite{BRSW11}. However, in their analysis, the Bell value of the local strategy depends on the size of the rectangle, which will result in a poor upper bound. We will need to consider two different cases. If the rectangle is small enough, then we obtain a sufficiently good upper bound as is. If the rectangle is too big, we will need to subtract from the coefficients some constant that we will call $\mu$. Notice that in $\eff$, the constraint is that the Bell value of any local deterministic strategy is less than 1, but the absolute value is not bounded as in $\nu$. This is  why we can subtract without violating the constraint. The overall weight of those $\mu$ will not significantly affect the Bell value of a distribution close to the Hidden Matching distribution so the objective value will remain large.

Consider the following coefficients to the Bell functional.
\begin{align*}
B_{x,M,a,d,i,j} =&  \Phi'_{x,M,a,d,i,j} + \mu_{x,M} ,
\end{align*}
where
\begin{align*}
\mu_{x,M}   =&  - \frac{2^{\frac{\sqrt{n-1}}{2 \mathcal{C}}}}{n 2^{n+1} |\mathcal{M}_n|} \\
\Phi'_{x,M,a,d,i,j}  =&   \frac{2^{\frac{\sqrt{n - 1}}{2 \mathcal{C}}}}{n 2^n |\mathcal{M}_n|} \delta_{(i,j) \in M} \cdot (-1)^{\langle a,i \oplus j\rangle \oplus d \oplus x_i \oplus x_j}\\
\end{align*}
where $\delta$ is the Kronecker function,
and $\mathcal{M}_n$ is the set of matchings over edges $\{1, \ldots n \}$.

\paragraph{Verifying the constraints.}
Let $\bl \in \L_{det}^{\bot}$ and $X=X(\bl)$, the set of inputs for which Alice does not abort when following the local deterministic strategy $\bl$. The strategy $\bl$ partitions the set $X$ into $\bigcup_{a} X_a$ where Alice outputs $a$, and $\mathcal{M}_n$ into $ \bigcup_{d,i,j} R_{d,i,j}$ where Bob outputs $(d,i,j)$ because $\bl$ is local and deterministic.

First, we want to bound from above the value: 
\begin{eqnarray*}
 \sum_{x \in X, M, a, d, i ,j}   B_{x,M,a,d,i,j} \cdot l(a,d,i,j| x,M)
=  \sum_a \sum_{x \in X_a} \sum_{i,j,d} \sum_{M \in R_{i,j,d}}  B_{x,M,a,d,i,j}.
\end{eqnarray*}

We bound each term of the sum, for fixed $a$.

Let us first see what happens on small rectangles that is, when $X_a$ is small.
\begin{claim} 
If $|X_a| \leq 2^{n- \frac{\sqrt{n-1}}{2 \mathcal{C}}}$ then 
$ \sum_{x \in X_a} \sum_{i,j,d} \sum_{M \in R_{i,j,d}} B_{x,M,a,d,i,j} \leq \frac{1}{n}.$
\end{claim}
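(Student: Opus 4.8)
The plan is to bound the sum $\sum_{x \in X_a} \sum_{i,j,d} \sum_{M \in R_{i,j,d}} B_{x,M,a,d,i,j}$ by separating the two contributions coming from $\Phi'$ and from $\mu$. The $\mu$ term is negative, so dropping it only increases the sum; thus it suffices to bound $\sum_{x \in X_a} \sum_{i,j,d} \sum_{M \in R_{i,j,d}} \Phi'_{x,M,a,d,i,j}$ from above. Since $a$ is fixed and, for each matching $M$, Bob's deterministic strategy outputs a single triple $(d,i,j)$, the inner double sum over $(i,j,d)$ and $M \in R_{i,j,d}$ collapses to a single sum over all matchings $M \in \mathcal{M}_n$, evaluated at Bob's prescribed output $(d_M, i_M, j_M)$. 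So the quantity to control is
\begin{align*}
\frac{2^{\frac{\sqrt{n-1}}{2\mathcal C}}}{n\,2^n\,|\mathcal M_n|}\sum_{x\in X_a}\sum_{M\in\mathcal M_n}\delta_{(i_M,j_M)\in M}\,(-1)^{\langle a,\,i_M\oplus j_M\rangle\oplus d_M\oplus x_{i_M}\oplus x_{j_M}}.
\end{align*}

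The next step is to recognize the structure exploited in~\cite{BRSW11}: fixing $M$ (hence $(i_M,j_M)$, $d_M$, and the sign $(-1)^{\langle a,i_M\oplus j_M\rangle\oplus d_M}$), the sum over $x\in X_a$ of $(-1)^{x_{i_M}\oplus x_{j_M}}$ is, up to the factor $|X_a|$, exactly $\beta_S$ with $S=\{i_M,j_M\}\in\mathcal S_2$ for the set $X_a$ — it is $|X_a|\,\mathbb E_{x\in X_a}\big[(-1)^{S\cdot x}\big]$. Hence, after pulling out the signs $(-1)^{\langle a,i_M\oplus j_M\rangle\oplus d_M}\in\{\pm1\}$, the whole inner sum is bounded in absolute value by $|X_a|\sum_{M\in\mathcal M_n}|\beta_{\{i_M,j_M\}}|$. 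I would then group matchings by which pair $\{i,j\}$ Bob selects: each pair $S\in\mathcal S_2$ is chosen by at most $|\mathcal M_n|$ matchings (crudely, $|\mathcal M_n|$ is a safe bound; one could do better but it is not needed), so $\sum_{M}|\beta_{\{i_M,j_M\}}| \le |\mathcal M_n|\sum_{S\in\mathcal S_2}|\beta_S|$. By Cauchy–Schwarz, $\sum_{S\in\mathcal S_2}|\beta_S|\le \sqrt{|\mathcal S_2|}\,\big(\sum_{S\in\mathcal S_2}\beta_S^2\big)^{1/2}\le \sqrt{\binom n2}\cdot\mathcal C\log(2^n/|X_a|)$ by the KKL-type lemma just recalled.

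Putting this together, the $\Phi'$-contribution is at most
\begin{align*}
\frac{2^{\frac{\sqrt{n-1}}{2\mathcal C}}}{n\,2^n\,|\mathcal M_n|}\cdot |X_a|\cdot|\mathcal M_n|\cdot\sqrt{\tbinom n2}\,\mathcal C\log\!\Big(\frac{2^n}{|X_a|}\Big)
=\frac{2^{\frac{\sqrt{n-1}}{2\mathcal C}}}{n\,2^n}\,|X_a|\,\sqrt{\tbinom n2}\,\mathcal C\log\!\Big(\frac{2^n}{|X_a|}\Big).
\end{align*}
Now I invoke the hypothesis $|X_a|\le 2^{n-\frac{\sqrt{n-1}}{2\mathcal C}}$: this makes $\log(2^n/|X_a|)\ge \frac{\sqrt{n-1}}{2\mathcal C}$, but more importantly the function $t\mapsto t\log(2^n/t)$ is increasing for $t\le 2^n/e$, so (for $n$ large enough that the threshold is below $2^n/e$) we may substitute $|X_a|=2^{n-\frac{\sqrt{n-1}}{2\mathcal C}}$, giving an upper bound of $\frac{2^{\frac{\sqrt{n-1}}{2\mathcal C}}}{n\,2^n}\cdot 2^{n-\frac{\sqrt{n-1}}{2\mathcal C}}\cdot\sqrt{\binom n2}\,\mathcal C\cdot\frac{\sqrt{n-1}}{2\mathcal C} = \frac{1}{n}\sqrt{\binom n2}\cdot\frac{\sqrt{n-1}}{2}$. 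This is the point where I expect to have to be careful: the crude counting bound "each pair is used by $\le|\mathcal M_n|$ matchings" is far too lossy to land exactly at $\frac1n$, and the constant $\mathcal C$ must be chosen (or the KKL lemma's $\mathcal C$ absorbed) so that the polynomial-in-$n$ factors cancel against the $1/n$; almost certainly one uses instead that for a fixed matching $M$ the pairs $\{i_M,j_M\}$ range over the $n/2$ edges of $M$ and reorganizes the double sum over $(x,M)$ so that each $S\in\mathcal S_2$ appears with the correct multiplicity, turning $\sqrt{\binom n2}$ into something like $\sqrt{n}$ and exploiting that a matching has only $n/2$ edges. The main obstacle, then, is pinning down this combinatorial bookkeeping of multiplicities precisely enough that the final constant works out to $\le 1/n$; the analytic ingredients (collapsing Bob's sum, identifying $\beta_S$, Cauchy–Schwarz, KKL, monotonicity of $t\log(2^n/t)$) are routine once that is set up.
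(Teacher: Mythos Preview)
Your setup is correct up to and including the collapse of the double sum to a single sum over $M\in\mathcal{M}_n$. But at that point you reach for the wrong tool. The KKL/Cauchy--Schwarz machinery is what the paper uses for the \emph{large} rectangle case (Claim~2); for the small rectangle case the argument is completely elementary and you are vastly overcomplicating it.

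After dropping the negative $\mu$-term and collapsing Bob's output, the quantity you wrote is
\[
\frac{2^{\frac{\sqrt{n-1}}{2\mathcal C}}}{n\,2^n\,|\mathcal M_n|}\sum_{x\in X_a}\sum_{M\in\mathcal M_n}\delta_{(i_M,j_M)\in M}\,(-1)^{\langle a,\,i_M\oplus j_M\rangle\oplus d_M\oplus x_{i_M}\oplus x_{j_M}}.
\]
Each summand is a single number in $\{-1,0,1\}$ (one per pair $(x,M)$, since Bob is deterministic). Bounding it by~$1$ gives at most $\frac{2^{\frac{\sqrt{n-1}}{2\mathcal C}}}{n\,2^n\,|\mathcal M_n|}\,|X_a|\,|\mathcal M_n|=\frac{2^{\frac{\sqrt{n-1}}{2\mathcal C}}}{n\,2^n}\,|X_a|$, and the hypothesis $|X_a|\le 2^{n-\frac{\sqrt{n-1}}{2\mathcal C}}$ immediately yields $\le\frac{1}{n}$. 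That is the paper's entire proof of this claim.

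Your KKL route, as you yourself note, does not close: the crude ``each pair appears in $\le|\mathcal M_n|$ matchings'' bound loses a factor $\sqrt{\binom n2}\cdot\frac{\sqrt{n-1}}{2}\sim n/\sqrt{2}$ over the target. With the sharper bookkeeping the paper uses in Claim~2 (introducing $q_{i,j}^a$ and using $|q_{i,j}^a|\le\frac{1}{n-1}$ together with $\sum_{i,j}|q_{i,j}^a|\le 1$ to get $\big(\sum_{i,j}|q_{i,j}^a|^2\big)^{1/2}\le\frac{1}{\sqrt{n-1}}$) your approach would in fact land at $\frac{1}{2n}$ and hence succeed---but that is precisely the Claim~2 computation, and it is entirely unnecessary here.
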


\begin{proof}
Since the $\mu_{x,M}$ are negative,

\begin{eqnarray*}
\sum_{x \in X_a} \sum_{i,j,d} \sum_{M \in R_{i,j,d}} B_{x,M,a,d,i,j} & = & \sum_{x \in X_a}  \sum_{i,j,d} \sum_{M \in R_{i,j,d}} \mu_{x,M} + \sum_{x \in X_a} \sum_{ i,j,d} \; \sum_{M \in R_{i,j,d}}  \Phi'_{x,M,a,d,i,j}\\ 
& \leq & \sum_{x \in X_a} \sum_{i,j,d} \sum_{M \in R_{i,j,d}}   \Phi'_{x,M,a,d,i,j}\\
& = & \frac{2^{\frac{\sqrt{n-1}}{2 \mathcal{C}}}}{n 2^n} \sum_{x \in X_a, M \in \mathcal{M}_n} \left( \sum_{d,i,j | l(a,d,i,j | x, M) = 1} \frac{(-1)^{x_i \oplus x_j \oplus d \oplus \langle a,i \oplus j\rangle}}{|\mathcal{M}_n|} \delta_{(i,j) \in M} \right)\\
& \leq & \frac{2^{\frac{\sqrt{n-1}}{2 \mathcal{C}}}}{n 2^n} |X_a|,
\end{eqnarray*}
where we have used the fact that there is exactly one tuple  $(d,i,j)$ such that $l(a,d,i,j|x,M) = 1$, because $l$ is deterministic and Bob doesn't abort.\\
Since $|X_a| \leq 2^{n- \frac{\sqrt{n-1}}{2 \mathcal{C}}}$ then this sum is less than $\frac{1}{n}$. 
\end{proof}

Now let us consider the case of the large rectangles.

\begin{claim} 
If $|X_a| \geq 2^{n- \frac{\sqrt{n-1}}{2 \mathcal{C}}}$ then
$ \sum_{x \in X_a} \sum_{i,j,d} \sum_{M \in R_{i,j,d}}  B_{x,M,a,d,i,j} \leq 0.$
\end{claim}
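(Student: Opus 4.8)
The plan is to prove the large-rectangle claim by extracting the contribution of the $\mu$ terms explicitly and showing it dominates the bias coming from the $\Phi'$ terms. First I would write, as in the small-rectangle case,
\[
\sum_{x \in X_a} \sum_{i,j,d} \sum_{M \in R_{i,j,d}} B_{x,M,a,d,i,j}
= \sum_{x \in X_a} \sum_{i,j,d} \sum_{M \in R_{i,j,d}} \mu_{x,M}
+ \sum_{x \in X_a} \sum_{i,j,d} \sum_{M \in R_{i,j,d}} \Phi'_{x,M,a,d,i,j}.
\]
Since $\bl$ is deterministic and Bob does not abort, for each $(x,M)$ there is exactly one triple $(d,i,j)$ with $l(a,d,i,j|x,M)=1$, so the first sum is exactly $|X_a| \cdot |\mathcal{M}_n| \cdot \mu_{x,M} = -\frac{2^{\sqrt{n-1}/(2\mathcal{C})}}{n 2^{n+1}} |X_a|$, and the second sum equals $\frac{2^{\sqrt{n-1}/(2\mathcal{C})}}{n2^n} \sum_{x \in X_a, M} \frac{(-1)^{x_i\oplus x_j \oplus d \oplus \langle a, i\oplus j\rangle}}{|\mathcal{M}_n|}\delta_{(i,j)\in M}$ where $(d,i,j)$ is the unique triple determined by $(a,x,M)$. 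The goal is then to bound this second sum by $\frac{2^{\sqrt{n-1}/(2\mathcal{C})}}{n 2^{n+1}}|X_a|$, which would make the total nonpositive.

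Next I would bound the bias term using the KKL-type inequality (the Lemma recalled at the start of Appendix~\ref{app:HM}). Fixing $a$ and grouping by the output $(d,i,j)$ that $\bl$ assigns, the sum over $x \in X_a$ of $(-1)^{x_i \oplus x_j}$ times a sign depending only on $(a,d,i,j)$ is, up to normalization by $|X_a|$, exactly $\beta_{\{i,j\}}$ for the set $A = X_a$ (restricted to those $x$ for which Bob outputs this particular edge $(i,j)$, summed over $M \in R_{d,i,j}$). Applying Cauchy--Schwarz over the edges $(i,j)$ together with $\sum_{\{i,j\}} \beta_{\{i,j\}}^2 \leq \mathcal{C}\log(2^n/|X_a|)^2$ gives a bound of the form $\frac{2^{\sqrt{n-1}/(2\mathcal{C})}}{n2^n}\cdot |X_a| \cdot O\!\left(\frac{\log(2^n/|X_a|)}{\sqrt n}\right)$ on the bias term, following the analysis of~\cite{BRSW11}.

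Finally I would use the hypothesis $|X_a| \geq 2^{n - \sqrt{n-1}/(2\mathcal{C})}$, which gives $\log(2^n/|X_a|) \leq \sqrt{n-1}/(2\mathcal{C})$, hence the bias factor is at most $O(\sqrt{n}/\sqrt{n}) = O(1)$, and more precisely at most $\frac12$ for the right constant $\mathcal{C}$. That makes the bias term at most $\frac{2^{\sqrt{n-1}/(2\mathcal{C})}}{n2^n}\cdot \frac{|X_a|}{2} = \frac{2^{\sqrt{n-1}/(2\mathcal{C})}}{n2^{n+1}}|X_a|$, exactly cancelling the (negative) $\mu$-contribution, so the whole expression is $\leq 0$. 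The main obstacle I anticipate is carrying out the Cauchy--Schwarz/KKL step carefully: one must correctly account for the double sum over $M \in R_{d,i,j}$ and over edges, check that the relevant quantity really is a Fourier coefficient $\beta_{\{i,j\}}$ of the indicator of (a subset of) $X_a$, and verify that the matching structure does not introduce extra factors — in particular that summing $\delta_{(i,j)\in M}$ over all matchings contributes the correct normalization $|\mathcal{M}_n|$ that matches the definition of $\mu_{x,M}$. Getting the constant $\mathcal{C}$ to line up between the KKL lemma and the exponents in $\mu$ and $\Phi'$ is the delicate bookkeeping.
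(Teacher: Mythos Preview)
Your proposal is correct and follows essentially the paper's proof: split into the $\mu$ and $\Phi'$ contributions, compute the $\mu$ part exactly as $-\frac{2^{\sqrt{n-1}/(2\mathcal{C})}}{n2^{n+1}}|X_a|$, and bound the $\Phi'$ part by factoring it as $\frac{2^{\sqrt{n-1}/(2\mathcal{C})}}{n2^n}\sum_{i,j}|X_a|\,\beta_{i,j}^a\,q_{i,j}^a$, then applying Cauchy--Schwarz, KKL on the $\beta$'s, and the bound $\sqrt{\sum_{i,j}|q_{i,j}^a|^2}\leq 1/\sqrt{n-1}$ (which comes from $|q_{i,j}^a|\leq 1/(n-1)$ and $\sum_{i,j}|q_{i,j}^a|\leq 1$). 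The only wrinkle in your write-up is the parenthetical about ``restricted to those $x$ for which Bob outputs this particular edge'': Bob's output depends only on $M$, not on $x$, so the $x$-sum and the $M$-sum factor cleanly and $\beta_{\{i,j\}}$ is taken over \emph{all} of $X_a$, not a subset---once you make that factoring explicit, the constants line up exactly and the hypothesis gives $\mathcal{C}\log(2^n/|X_a|)\cdot\frac{1}{\sqrt{n-1}}\leq\tfrac12$, so the $\Phi'$ term is at most $\frac{2^{\sqrt{n-1}/(2\mathcal{C})}}{n2^{n+1}}|X_a|$ and the total is $\leq 0$.
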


\begin{proof}
\begin{eqnarray*}
\sum_{x \in X_a} \sum_{i,j,d} \sum_{M \in R_{i,j,d}}  B_{x,M,a,d,i,j} & = &  \sum_{x \in X_a,M} \mu_{x,M} + \sum_{x \in X_a} \sum_{ i,j,d} \sum_{M \in R_{i,j,d}}  \Phi'_{x,M,a,d,i,j}\\
& = &- \frac{2^{\frac{\sqrt{n-1}}{2 \mathcal{C}}}}{n 2^{n+1}} |X_a| + \sum_{x \in X_a} \sum_{i ,j, d} \sum_{M \in R_{i,j,d}}  \Phi'_{x,M,a,d,i,j}.
\end{eqnarray*}

Let  $\beta_{i,j}^a = \mathbb{E}_{x \in X_a} ((-1)^{x_i \oplus x_j})$ and
$q_{i,j}^a = \sum_d \sum_{M\in R_{d,i,j}} \frac{(-1)^{\langle a,i \oplus j\rangle \oplus d}}{|\mathcal{M}_n|} \delta_{(i,j) \in M}$. Then
\begin{eqnarray*}
\sum_{x \in X_a} \sum_{i ,j, d} \sum_{M \in R_{i,j,d}}  \Phi'_{x,M,a,d,i,j} & = & \frac{2^{\frac{\sqrt{n-1}}{2 \mathcal{C}}}}{n 2^n}  \sum_{i,j} \sum_{x \in X_a} \sum_d \sum_{M \in R_{d,i,j}} \frac{(-1)^{x_i \oplus x_j \oplus d \oplus \langle a,i \oplus j\rangle}}{|\mathcal{M}_n|} \delta_{(i,j) \in M}\\
& =  & \frac{2^{\frac{\sqrt{n-1}}{2 \mathcal{C}}}}{n 2^n}  \sum_{i,j} |X_a| \beta_{i,j}^a \left( \sum_d \sum_{M\in R_{d,i,j}} \frac{(-1)^{\langle a,i \oplus j\rangle \oplus d}}{|\mathcal{M}_n|} \delta_{(i,j) \in M} \right)\\
& \leq & \frac{2^{\frac{\sqrt{n-1}}{2 \mathcal{C}}}}{n 2^n} |X_a| \sqrt{\sum_{i,j} |\beta_{i,j}^a|^2} \sqrt{\sum_{i,j} |q_{i,j}^a|^2}.\\
\end{eqnarray*}
The last line follows from the 
Cauchy-Schwarz inequality.

 On one hand,
\begin{eqnarray*}
|q_{i,j}^a| & \leq & \sum_d \sum_{M \in R_{d,i,j}} \frac{\delta_{(i,j) \in M}}{|\mathcal{M}_n|} =  \Prob_{M \in \mathcal{M}_n}( \text{$l$ outputs} (i,j) \in M)  \leq  \frac{1}{n-1},
\end{eqnarray*}
and $\sum_{i,j}|q_{i,j}^a| \leq 1$, so $ \sqrt{\sum_{i,j} |q_{i,j}^a|^2} \leq \frac{1}{\sqrt{n-1}}$.
On the other hand, the KKL inequality gives us (with $A = X_a$):
\begin{eqnarray*}
\sqrt{ \sum_{i,j} |\beta_{i,j}^a|^2} \leq \mathcal{C} \times \log \left(\frac{2^n}{|X_a|}\right).
\end{eqnarray*}

Hence,
\begin{eqnarray*}
\sum_{x \in X_a} \sum_{i ,j, d} \sum_{M \in R_{i,j,d}}  \Phi'_{x,M,a,d,i,j} & \leq & \frac{2^\frac{\sqrt{n-1}}{2 \mathcal{C}}}{n 2^n} |X_a| \mathcal{C} \log \left( \frac{2^n}{|X_a|} \right) \times \frac{1}{\sqrt{n -1}}\\
& \leq & \frac{2^\frac{\sqrt{n-1}}{2 \mathcal{C}}}{n 2^{n+1}} |X_a|,\\
\end{eqnarray*}
because $|X_a| \geq 2^{n- \frac{\sqrt{n-1}}{2 \mathcal{C}}}$ implies that $\mathcal{C} \log \left( \frac{2^n}{|X_a|} \right)  \frac{1}{\sqrt{n-1}}\leq \frac{1}{2}.$
\end{proof}

From Claims 1 and 2, we obtain:

\begin{eqnarray*}
\sum_{x \in X, M,a,d,i,j} B_{x,M,a,d,i,j} \cdot l(a,d,i,j,|,x,M)  \leq  \sum_{a | |X_a| \leq 2^{n - \frac{ \sqrt{n-1} } {2 \mathcal{C}}}} \frac{1}{n} \leq  1
\end{eqnarray*}

\paragraph{Value of the objective function.}
Let $\bp'$ be a distribution such that $|\bp' - \HM|_1 \leq \epsilon$.  
For any $x,M,a,d,i,j$, we define
$ \epsilon_{x,M,a,d,i,j} = |p'(a,d,i,j|x,M) - \HM(a,d,i,j|x,M)|$
and for any $x,M$, we have
$$   \sum_{a,d,i,j} \epsilon_{x,M,a,d,i,j} \leq \epsilon.$$

We want to lower bound 
$$\sum_{x,M,a,d,i,j} B_{x,M,a,d,i,j}  \cdot p'(a,d,i,j,| x,M).$$ 
Recall that  we have set 
$B_{x,M,a,d,i,j} =  \Phi'_{x,M,a,d,i,j} + \mu_{x,M}$.
We will consider the two terms separately.
Since $\bp'$ is a distribution, 
$$\sum_{x,M,a,d,i,j}  \mu_{x,M} \cdot p'(a,d,i,j|x,M) = \sum_{x,M} \mu_{x,M}= - \frac{2^{\frac{\sqrt{n-1}}{2 \mathcal{C}}}}{n 2^{n+1}} 2^n =   - \frac{2^{\frac{\sqrt{n-1}}{2 \mathcal{C}}}}{2n}$$
We also have 
\begin{eqnarray*}
\lefteqn{\sum_{x,M,a,d,i,j} \Phi'_{x,M,a,d,i,j}  \cdot p'(a,d,i,j,| x,M)}\\
& \geq & \frac{2^{\frac{\sqrt{n-1}}{2 \mathcal{C}}}}{n 2^n |\mathcal{M}_n|} \sum_{x,M}  \; \left(\sum_{a,d,i,j :  x_i \oplus x_j = d \oplus \langle a,i \oplus j\rangle} \delta_{(i,j) \in M} (\HM(a,d,i,j|x,M) - \epsilon_{x,M,a,d,i,j}) \right.\\
 && +    \left. \sum_{a,d,i,j :  x_i \oplus x_j \neq d \oplus \langle a,i \oplus j\rangle} \delta_{(i,j) \in M} ( - \HM(a,d,i,j|x,M) - \epsilon_{x,M,a,d,i,j} )\right)\\
& = & \frac{2^{\frac{\sqrt{n-1}}{2 \mathcal{C}}}}{n} - \frac{2^{\frac{\sqrt{n-1}}{2 \mathcal{C}}}}{n 2^n |\mathcal{M}_n|} \sum_{x,M,a,d,i,j} \epsilon_{x,M,a,d,i,j} \delta_{(i,j) \in M}\\
& \geq &  \frac{2^{\frac{\sqrt{n-1}}{2 \mathcal{C}}}}{n} - \frac{2^{\frac{\sqrt{n-1}}{2 \mathcal{C}}}}{n} \epsilon.
\end{eqnarray*}

Finally we get the value of the objective function 
$$\eff_\epsilon^\rightarrow(\HM) \geq {\sum_{x,M,a,d,i,j} B_{x,M,a,d,i,j} \cdot p'(a,d,i,j| x,M) }  \geq    \frac{2^{\frac{\sqrt{n-1}}{2 \mathcal{C}}}}{n} ( \frac{1}{2} - \epsilon).$$

\end{proof}

\enlever{
\section{Lower bound for the Khot Vishnoi game}
\label{app:KV}

\begin{proof}
As in the proof of Theorem~\ref{thm:HM}, we will use the dual version of the efficiency bound (Lemma~\ref{lemma:eff-dual}).  
Recall that $\eff_{\epsilon}(\KV)$ is defined as the solution of: 
\begin{align*}
\min_{\{p' : |p'- KV|_1 \leq \epsilon\}} \max & \sum_{(U,V) \in p'^{-1}, (a,b)} B_{U,V,a,b}p'(a,b|U,V)\\
\text{subject to } & \sum_{(U,V) \in p'^{-1} \cap R(\bl)} \left( \sum_{a,b} B_{U,V,a,b}  \cdot l(a,b|U,V) \right) 
				\leq 1  &  \forall \bl\in\L_{\det}^\bot, \\
\end{align*}
where $R(\bl)$ is the rectangle on which neither of the players
abort when they follow the local deterministic strategy $\bl$.

Fix $\lambda \leq \frac{1}{2}$. For $x \in \{0,1\}^n$, we denote by $\lambda(x)$ the probability of generating $x$ where each bit of $x$ is set to 1 with probability $\lambda$ independently of the other bits. For each coset $U$, we fix arbitrarily a representative $u_0$. Let $k_n = n^{\frac{\lambda}{1 - \lambda}}.$ 
The coefficients of the Bell inequality are defined as:

$$B_{U,V,a,b} =k_n \frac{\lambda(u_0 \oplus v_0)}{2^n} \delta_{a \in U} \delta_{b \in V} \delta_{a \oplus u_0 = b \oplus v_0}$$
where $\delta$ is the Kronecker function.

\paragraph{Verifying the constraints.}
We need to show that for any local deterministic distribution $\bl \in \mathcal{L}_{\det}^\bot$ with $R=R(\bl)$, we have
\begin{eqnarray*}
\sum_{(U,V) \in  R} \left( \sum_{a,b}  B_{U,V,a,b} \times l(a,b|U,V) \right) & \leq & 1 
\end{eqnarray*}

Let $\bl$ be a deterministic strategy for Alice and Bob which may abort.  
From Alice's point of view, the strategy is just a choice of an element in each coset for which she doesn't abort. We can represent this strategy by $A: \{0,1\}^n \rightarrow \{0,1\}$ such that $A(x) = 1$ if and only if Alice outputs $x$ on $x \oplus H$. Similarly, we represent Bob's strategy by $B$.
With this notation, our constraint is:

\begin{eqnarray*}
\lefteqn{\forall A,B, R, }\\
	& \sum_{(u_0,v_0) \in  R} \left( \sum_{a,b}  \frac{\lambda(u_0 \oplus v_0)}{2^n} \times A(a) B(b) \, \delta_{a \in u_0 + H} \, \delta_{b \in v_0 + H} \, \delta_{a \oplus u_0 = b \oplus v_0}  \right)  \leq \frac{1}{k_n} 
\end{eqnarray*}

Since all the coefficients are positive this quantity is less than
\begin{eqnarray*}
\lefteqn{\sum_{(u_0,v_0) \in  \{0,1\}^n} \left( \sum_{a,b}  \frac{\lambda(u_0 \oplus v_0)}{2^n} \times A(a) B(b) \,  \delta_{a \in u_0 + H} \,  \delta_{b \in v_0 + H} \, \delta_{a \oplus u_0 = b \oplus v_0}  \right) }\\
& = & \sum_{(u_0,v_0) \in  \{0,1\}^n} \left( \sum_{(h,h') \in H}   \frac{\lambda(u_0 \oplus v_0)}{2^n} \times A(u_0 \oplus h) B(v_0 \oplus h') \delta_{h = h'}  \right)\\
& = & \sum_{h \in H} \; \sum_{(u_0,z_0) \in  \{0,1\}^n}  \frac{\lambda(z_0)}{2^n} \times A(u_0 \oplus h) B(u_0 \oplus z_0 \oplus h)\\
 & = & \sum_{h \in H} \mathbb{E}_{u_0 unif, z_0 \sim \lambda}(A(u_0 \oplus h)B(u_0 \oplus z_0 \oplus h))\\
 & = & n \mathbb{E}_{u_0 unif, z_0 \sim \lambda}(A(u_0)B(u_0 \oplus z_0))
\end{eqnarray*}
Because, for each $h$, $u \oplus h$ is uniformly distributed.
As shown in \cite[Thm7]{BRSW11} using the framework of hypercontractivity this value is bounded from above by $n \times \left(\mathbb{E}_u[A(u)] \times \mathbb{E}_u[B(u)]\right)^{\frac{1}{2 - 2\lambda}}$ that we can bound by $\frac{1}{k_n}$.

\paragraph{Value of the objective function.}
Let $\bp'$ be such that $ |\bp'- \KV|_1 \leq \epsilon$.
Then $\forall u_0,v_0,u,v$, we define $ \epsilon_{u_0,v_0,u,v}=
| \KV(u,v|u_0,v_0) - p'(u,v|u_0,v_0)|$, where 
$\forall u_0,v_0$,
$ \sum_{u,v} \epsilon_{u_0,v_0,u,v} \leq \epsilon.$

We have
\begin{eqnarray*}
\lefteqn {k_n \sum_{u_0,v_0,u,v} \frac{\lambda(u_0 \oplus v_0)}{2^n} \delta_{u \oplus u_0 = v \oplus v_0} \,  \delta_{u \in u_0 + H} \, \delta_{v \in v_0 + H} \,  p'(u, v | u_0, v_0)}\\
& = & k_n \sum_{u_0,v_0} \sum_{h,h' \in H} \frac{\lambda(u_0 \oplus v_0)}{2^n} \, \delta_{h = h'} \, p'(u_0 \oplus h, v_0 \oplus h' | u_0, v_0) \\
&  \geq & k_n  \sum_{u_0, v_0} \sum_h \frac{\lambda(u_0 \oplus v_0)}{2^n}  \left(\frac{\langle v^{u_0 \oplus h},v^{v_0 \oplus h}\rangle^2}{n} - \epsilon_{u_0,v_0,u_0 \oplus h, v_0 \oplus h} \right)\\
& = & k_n \sum_{u_0, v_0} \sum_h \frac{\lambda(u_0 \oplus v_0)}{2^n}  \left( \frac{(1- 2 \frac{d(u_0 \oplus h, v_0 \oplus h)}{n})^2}{n} - \epsilon_{u_0,v_0,u_0 \oplus h, v_0 \oplus h} \right)\\
& \geq & k_n \sum_{u_0, v_0} \sum_h \frac{\lambda(u_0 \oplus v_0)}{2^n}   \frac{(1- 2 \frac{d(u_0 \oplus h, v_0 \oplus h)}{n})^2}{n} - k_n \epsilon \sum_{u_0,v_0} \frac{\lambda(u_0 \oplus v_0)}{2^n}\\
& = & k_n \sum_{u_0, z_0}  \frac{\lambda(z_0)}{2^n} (1 - 2 \frac{d(z_0,0)}{n})^2 -k_n \epsilon\\
& = & k_n \left( \sum_{z_0} \lambda(z_0) (1 - 2 \frac{d(z_0,0)}{n})^2 - \epsilon \right)\\
& \geq & k_n \left(\sum_{z_0} \lambda(z_0) (1 - 2 \frac{d(z_0,0)}{n})\right)^2 - k_n \epsilon\\
& = & k_n  \left( (1 - 2 \delta)^2 - \epsilon \right).\\
\end{eqnarray*}

We conclude that  $R_{\epsilon}(\KV) \geq \log ((1 - 2 \delta)^2 - \epsilon) + \frac{\delta}{1 - \delta} \log(n)$.
\end{proof}
}
\end{document}